\newtheorem{lemma}{Lemma}
\newtheorem{theorem}{Theorem}
\newtheorem{corollary}{Corollary}
\newtheorem{remark}{Remark}
\newtheorem{definition}{Definition}
\newtheorem{hypothesis}{Hypothesis}
\definecolor{qqqqcc}{rgb}{0.,0.,0.8}
\definecolor{ttttff}{rgb}{0.2,0.2,1.}
\definecolor{xdxdff}{rgb}{0.49,0.49,1.}
\definecolor{zzttqq}{rgb}{0.6,0.2,0.}
\definecolor{qqzzqq}{rgb}{0.,0.6,0.}
\definecolor{ttzzqq}{rgb}{0.2,0.6,0.}
\begin{document}
\title{\LARGE Subgradient-Free Stochastic Optimization Algorithm for Non-smooth Convex Functions over Time-Varying Networks}
\author{Yinghui~Wang, Wenxiao~Zhao, Yiguang~Hong and Mohsen Zamani
\thanks{Y. Wang, W. Zhao, and Y. Hong are with School of Mathematical Sciences, University of Chinese Academy of Sciences, and Key Laboratory of Systems and Control, Academy of Mathematics and Systems Science, Chinese Academy of Sciences, Beijing,  China. e-mail: (wangyinghuisdu@163.com, wxzhao@amss.ac.cn, yghong@iss.ac.cn).}% <-this % stops a space
\thanks{Mohsen Zamani is with School of Electrical Engineering and Computing, University of Newcastle, Callaghan, NSW 2308, Australia.  email:(mohsen.zamani@newcastle.edu.au)}

\thanks{Manuscript received June 22, 2018.}}

%\markboth{Journal of \LaTeX\ Class Files,~Vol.~14, No.~8, August~2015}%
%{Shell \MakeLowercase{\textit{et al.}}: Bare Demo of IEEEtran.cls for IEEE Journals}

\maketitle

% As a general rule, do not put math, special symbols or citations
% in the abstract or keywords.
\begin{abstract}
In this paper we consider a distributed stochastic optimization problem without the gradient/subgradient information for the local objective functions, subject to local convex constraints.  The objective functions may be non-smooth and observed with stochastic noises, and the network for the distributed design is time-varying. By adding the stochastic dithers into the local objective functions and constructing the randomized differences motivated by the Kiefer-Wolfowitz algorithm, we propose a distributed subgradient-free algorithm to find the global minimizer with local observations. Moreover, we prove that the consensus of estimates and global minimization can be achieved with probability one over the time-varying network, and then obtain the convergence rate of the mean average of estimates as well.  Finally, we give a numerical example to illustrate the effectiveness of the proposed algorithm.
\end{abstract}

% Note that keywords are not normally used for peerreview papers.
\begin{IEEEkeywords}
Distributed stochastic optimization, gradient/subgradient-free algorithm, non-smoothness, randomized differences
\end{IEEEkeywords}
\IEEEpeerreviewmaketitle

%% ------------------------------------------------------------------

\section{Introduction}\label{sec:intro}
%Distributed

Many problems arising from control design, signal processing, and data analysis often encounter the optimization of a global objective function consisting of a sum of convex functions in a network environment. This scenario can be seen in applications such as multi-agent coordination, sensor networks and computation of computer clusters \cite{forero2011distributed, lesser2003distributed, wang2018distributed, xiao2007distributed, yi2016initialization}. It is common that a single node/agent in the network corresponds to a single convex function within the objective function. And there are major concerns in relation to such large scale systems such as privacy protection, energy consumption and computation cost. Distributed algorithms provide flexibility and solutions for handling some of these issues and hence become more popular tools to solve the optimization problems in large-scale networks. In other words, distributed optimization designs have been widely studied in recent years with the nodes/agents only exchanging information with their intermediate neighbors.  Over the past few years there have been considerable works, including distributed (stochastic) subgradient methods \cite{nedic2009distributed, ram2010distributed,  yuan2016convergence}, distributed primal-dual subgradient methods \cite{duchi2012dual, lei2016primal}, distributed alternating direction method of multipliers (ADMM) \cite{makhdoumi2017convergence, shi2014linear}, distributed accelerated gradient methods \cite{jakovetic2014fast} for various distributed optimization problems.

Most of the existing results naturally require first-order or second-order gradient/subgradient information of local objective functions corresponding to nodes/agents within the network. However, obtaining gradient/subgradient information is, sometimes, computationally costly and even impracticable for some cases \cite{conn2009introduction, nesterov2011random}. For example, in learning phase of deep neural networks \cite{deng2014deep}, the connection between objective functions and decision variables is too complicated so that we can hardly derive the explicit form of the first-order gradient/subgradient.  Therefore, it is natural to ask how to develop gradient/subgradient-free algorithms, also called  derivative-free algorithms or zeroth-order algorithms in optimization literatures.
In fact, approaches related to  gradient/subgradient-free designs can be found in \cite{chen1999kiefer, conn2009introduction, duchi2013optimal, gao2014on, kiefer1952stochastic, koronacki1975random, nelder1965simplex, nesterov2011random}
and references listed therein. The authors of monographs \cite{conn2009introduction,  nesterov2011random} summarized several classes of derivative-free methods, including both deterministic and stochastic cases. In addition, there were some other algorithms for solving convex optimization using  derivative-free or zeroth-order information, including  the zeroth-order mirror decent algorithm \cite{duchi2013optimal}, the zeroth-order ADMM algorithm \cite{gao2014on},  the  Kiefer-Wolfowitz (KW) algorithm \cite{kiefer1952stochastic}, and the Nelder-Mead algorithm  \cite{nelder1965simplex}. For the Nelder-Mead algorithm, introduced in  \cite{nelder1965simplex}, to the best of our knowledge, theoretical properties are still under investigation. On the other hand, Kiefer-Wolfowitz (KW) algorithms, first proposed in \cite{kiefer1952stochastic} and then further discussed in various areas \cite{chen1999kiefer,koronacki1975random}, introduced stochastic dithers at the points with the values of the objective function to be observed and then constructed randomized differences served as gradients in algorithms. However, these approaches are basically centralized when dealing with the derivative-free optimization problem, and cannot be implemented directly in a distributed derivative-free setting.

Therefore, how to develop distributed gradient/subgradient-free algorithms for distributed optimization problems is an important problem, though only few results just appeared in the past several years \cite{anit2018distributed, hajinezhad2017zeroth, yuan2015randomized, yuan2015zeroth}. Note that, in \cite{hajinezhad2017zeroth, yuan2015randomized, yuan2015zeroth}, the Gaussian approximation was exploited to approximate original (maybe non-smooth) objective functions, and randomized differences were adopted to replace the gradient information to construct these distributed algorithms. In these references, the optimization error was characterized by the gap between Gaussian approximation functions and the original ones. However, the design of distributed gradient/subgradient-free algorithms for the global minimization of non-smooth objective functions over a time-varying network has not been fully investigated.

In this paper, we consider subgradient-free algorithm design for a distributed constrained optimization problem with local non-smooth objective functions over a time-varying network, and observed with stochastic noises. The contributions of this paper are summarized as follows:
\begin{itemize}
\item[(a)]Different from existing distributed algorithms which often require first-order or second-order gradient/subgradient information of local objective functions, we propose a class of subgradient-free approaches for distributed  convex optimization problem with non-smooth local objective functions over time-varying networks.  Note that the distributed KW algorithm for smooth strongly local convex objective functions \cite{anit2018distributed}  and centralized KW algorithms   \cite{chen1999kiefer, kiefer1952stochastic,koronacki1975random} all require the Lipschitz continuity of the gradients of (local) objective smooth functions.

\item[(b)] We design a class of distributed optimization algorithms by a way to combining conventional KW ideas and consensus based algorithms, which is a different technique compared to other existing gradient/subgradient-free algorithms  c.f., \cite{anit2018distributed, hajinezhad2017zeroth, pang2017distributed, yuan2015randomized, yuan2015zeroth}.  We prove the consensus of estimates and achievement of the global minimization with probability one. We further establish the mean-square convergence rate for the estimates as well.  The existing algorithms given in \cite{hajinezhad2017zeroth, pang2017distributed, yuan2015randomized, yuan2015zeroth} utilize Gaussian smooth approximation functions of the original objective functions and the optimization error is characterized in expectation. However, our proposed algorithms deal with the possible non-smooth objective functions directly and converge explicitly to the global minimizer almost surely.  In addition, due to the complications arising from considering optimization constraints and time-varying network topologies, theoretical analysis given in this paper goes beyond that of the (distributed) KW algorithms in \cite{anit2018distributed, chen1999kiefer, kiefer1952stochastic,koronacki1975random}.

\item[(c)] Our proposed algorithms belongs to the distributed stochastic optimization algorithm category, which is an important research area (referring to \cite{ram2010distributed,yuan2016convergence}).   We establish the almost sure convergence and then mean square convergence rate for the global minimization of the proposed  algorithms, and obtain essential properties of distributed stochastic optimization algorithms with diminishing step-size, by extending results of conventional gradient-free KW algorithms, given in e.g., \cite{chen1999kiefer, kiefer1952stochastic,koronacki1975random}, to subgradient-free cases.  Also, we obtain a mean-square convergence rate for the distributed subgradient-free optimization.
\end{itemize}

The rest of the paper is organized as follows. Mathematical preliminaries and problem formulation is formulated in Section \ref{sec2}. Then a class of distributed subgradient-free algorithms and related hypotheses are introduced in Section \ref{sec3}. The proposed algorithms are fully analyzed in Section \ref{sec4}.  Following that, an numerical example is given in Section \ref{sec5}.   Finally, some concluding remarks are addressed in Section \ref{sec6}.

\textbf{Notations}. Denote $\mathcal{R}$ and $\mathcal{R}^{m}$ as the $1$-dimensional and $m$-dimensional Euclidean spaces, respectively. The vectors in this paper are viewed as column vectors unless otherwise stated. For a given vector $x\in \mathcal{R}^{m}$, we denote its transpose by $x^{\top}$. The inner product of vectors $x$ and $y$ is given by $\langle x, y\rangle=x^{\top}y$. The Euclidean norm of $x$ is denoted by $\|x\|_{2}$.  For a function $f(\cdot)$, denote $\partial f(x)$ as its subgradient at $x$ and $dom(f)$ as its function domain.

\section{Mathematical Preliminaries and Problem Formulation}\label{sec2}

In this section, we first introduce mathematical preliminaries about non-smooth analysis and probability theory, and then we give the problem formulation of this paper.
\subsection{Mathematical Preliminaries}\label{sec2.1}

\subsubsection*{Non-smooth analysis}
We first briefly summarize some results on non-smooth analysis \cite{clarke1998nonsmooth,hiriart2012fundamentals} to be used later in this paper.

\begin{definition}\label{Def1}
\cite{hiriart2012fundamentals} [Sub-gradient]
The vector-valued function $\partial f (x)\in \mathcal{R}^{m}$ is called the subgradient of a non-smooth convex function $f(x):\mathcal{R}^{m}\rightarrow \mathcal{R}$ if for any  $x,y\in dom(f)$, the following inequality holds:
\begin{align*}
f(x)-f(y)-\big \langle \partial f(y), x-y\big\rangle\geqslant 0.
\end{align*}
\end{definition}

The following lemma is important for the non-smooth analysis of the proposed algorithms.

\begin{lemma}\label{Lem1}\cite{clarke1998nonsmooth} (Lebourg's Mean Value Theorem) Let $x,y\in X$ and suppose $f(x):\mathcal{R}^{m}\rightarrow \mathcal{R}$ is Lipschitz on an open set containing the line segment $[x,y]$. Then, there exists a point $u\in (x,y )$ such that
\begin{align*}
f(x)-f(y)\in \langle \partial f(u), x-y \rangle.
\end{align*}
\end{lemma}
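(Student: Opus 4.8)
\emph{Proof plan.} The strategy is the classical reduction of Lebourg's theorem to a one-dimensional problem together with a generalized Fermat rule for the Clarke generalized gradient (which, when $f$ is convex, coincides with the subgradient of Definition~\ref{Def1}). Assume $x\neq y$, the case $x=y$ being trivial, and define the affine parametrization $\Phi(t)=y+t(x-y)$ for $t\in[0,1]$, whose image is the segment $[x,y]$. Since $f$ is Lipschitz on an open set containing $[x,y]$, the scalar function $\varphi(t)=f(\Phi(t))$ is Lipschitz on $[0,1]$. I would then introduce the auxiliary function
\begin{align*}
g(t)=\varphi(t)+t\big(f(y)-f(x)\big),\qquad t\in[0,1],
\end{align*}
which is again Lipschitz and satisfies $g(0)=f(y)=g(1)$.

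Because $g$ is continuous on the compact interval $[0,1]$ and takes equal values at the endpoints, it attains a local maximum or a local minimum at some interior point $t_{0}\in(0,1)$ (if $g$ is constant, any interior point works). The Fermat rule for locally Lipschitz scalar functions then gives $0\in\partial g(t_{0})$. Applying the sum rule and the chain rule for Clarke generalized gradients to $g=\varphi+t(f(y)-f(x))$ and $\varphi=f\circ\Phi$, with $u:=\Phi(t_{0})\in(x,y)$ and $\Phi'(t_{0})=x-y$, one obtains
\begin{align*}
0\in\partial g(t_{0})\subseteq\big\langle\partial f(u),x-y\big\rangle+\big(f(y)-f(x)\big),
\end{align*}
and rearranging yields $f(x)-f(y)\in\langle\partial f(u),x-y\rangle$, which is the desired conclusion.

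The part that carries all the weight is the justification of the Fermat rule ($0\in\partial g(t_{0})$ at an interior extremum) and of the chain/sum rules for the Clarke generalized gradient; these rest on properties of the generalized directional derivative $f^{\circ}(x;v)=\limsup_{z\to x,\,s\downarrow 0}\frac{f(z+sv)-f(z)}{s}$ and the characterization $\partial f(x)=\{\xi:\langle\xi,v\rangle\le f^{\circ}(x;v)\ \text{for all }v\}$, all of which are standard for locally Lipschitz functions and are developed in \cite{clarke1998nonsmooth}; I would cite them rather than reprove them. The remaining steps — Lipschitzness of $\varphi$ and $g$, the endpoint evaluation, and the final algebraic rearrangement — are routine.
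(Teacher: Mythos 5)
The paper does not prove this lemma at all: it is imported verbatim from \cite{clarke1998nonsmooth} as a known result, so there is no in-paper argument to compare against. Your reconstruction is the classical proof of Lebourg's theorem as it appears in Clarke's book — the auxiliary function $g(t)=f(y+t(x-y))+t\bigl(f(y)-f(x)\bigr)$ with $g(0)=g(1)$, an interior extremum, the Fermat rule $0\in\partial g(t_{0})$, and the sum and chain rules for the Clarke generalized gradient — and it is correct, including your remark that for convex $f$ the Clarke gradient coincides with the subgradient of Definition~\ref{Def1}, which is the setting in which the paper actually invokes the lemma (in the proof of Lemma~\ref{Lem8}).
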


\subsubsection*{Euclidean norm inequalities}
The following inequalities holds for the Euclidean norm:
\begin{lemma}(\cite{ram2010distributed})\label{Prop1}
Let $x^{1}, x^{2}, \ldots, x^{n}$ be vectors in $\mathcal{R}^{m}$. Then
\begin{align*}
\sum_{i=1}^{n}\Big\|x^{i}-\frac{1}{n}\sum_{j=1}^{n}x^{j}\Big\|_{2}^{2}\leqslant \sum_{i=1}^{n}\Big\|x^{i}-x\Big\|_{2}^{2}, \quad \forall x\in \mathcal{R}^{m}.
\end{align*}
\end{lemma}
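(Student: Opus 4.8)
The plan is to exploit the fact that the arithmetic mean $\bar{x}:=\frac{1}{n}\sum_{j=1}^{n}x^{j}$ is the unique minimizer of the function $x\mapsto\sum_{i=1}^{n}\|x^{i}-x\|_{2}^{2}$, and to establish this by a direct ``add and subtract'' expansion rather than by any optimization argument. Concretely, I would fix an arbitrary $x\in\mathcal{R}^{m}$, write $x^{i}-x=(x^{i}-\bar{x})+(\bar{x}-x)$ for each $i$, and expand the squared Euclidean norm using the bilinearity of the inner product:
\begin{align*}
\big\|x^{i}-x\big\|_{2}^{2}=\big\|x^{i}-\bar{x}\big\|_{2}^{2}+2\big\langle x^{i}-\bar{x},\,\bar{x}-x\big\rangle+\big\|\bar{x}-x\big\|_{2}^{2}.
\end{align*}

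Next I would sum this identity over $i=1,\ldots,n$. The first term on the right yields exactly the left-hand side of the claimed inequality, and the last term yields $n\|\bar{x}-x\|_{2}^{2}$. The middle (cross) term is where the choice of $\bar{x}$ matters: since the vector $\bar{x}-x$ does not depend on $i$, it factors out of the sum, leaving $2\big\langle \sum_{i=1}^{n}(x^{i}-\bar{x}),\,\bar{x}-x\big\rangle$, and by the definition of $\bar{x}$ we have $\sum_{i=1}^{n}(x^{i}-\bar{x})=\big(\sum_{i=1}^{n}x^{i}\big)-n\bar{x}=0$, so the cross term vanishes. This gives the exact identity
\begin{align*}
\sum_{i=1}^{n}\big\|x^{i}-x\big\|_{2}^{2}=\sum_{i=1}^{n}\big\|x^{i}-\bar{x}\big\|_{2}^{2}+n\big\|\bar{x}-x\big\|_{2}^{2}.
\end{align*}
Since $n\|\bar{x}-x\|_{2}^{2}\geqslant 0$, dropping it yields the desired inequality, with equality precisely when $x=\bar{x}$.

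There is essentially no obstacle here; the only point requiring a moment of care is the cancellation of the cross term, which is purely a consequence of $\bar{x}$ being the mean. Because the statement is an identity up to a nonnegative remainder, the argument is self-contained and does not invoke any of the earlier lemmas.
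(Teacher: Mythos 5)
Your proof is correct: the ``add and subtract $\bar{x}$'' expansion, the vanishing of the cross term via $\sum_{i=1}^{n}(x^{i}-\bar{x})=0$, and dropping the nonnegative remainder $n\|\bar{x}-x\|_{2}^{2}$ is exactly the standard argument for this variance-decomposition identity. The paper itself gives no proof (it simply cites \cite{ram2010distributed}), and your self-contained derivation matches the canonical one in that reference, so there is nothing to add.
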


Let us denote $P_{X}(x)$ as the projection of $x$ onto set $X$, i.e., $P_{X}(x)=\arg\min_{y\in X} \big\|x-y \big\|_{2}$, where $X$ is a closed convex set in $\mathcal{R}^{m}$. Then the following result holds true for the projection operator:
\begin{lemma}\label{Lem2}
\cite{hiriart2012fundamentals, nedic2010constrained} Let $X$ be a a closed convex set in $\mathcal{R}^{m}$. Then for any $x\in \mathcal{R}^{m}$, it holds that
\begin{itemize}
\item[(a)]$\big\langle x-P_{X}(x), y-P_{X}(x)\big\rangle \leqslant 0$,  for all $y\in X$
\item[(b)]$\big\|P_{X}(x)-P_{X}(y) \|_{2} \leqslant \big\|x-y \big\|_{2}$, for all $x,y\in \mathcal{R}^{m}$.
\item[(c)]$\big\langle x-y, P_{X}(y)-P_{X}(x)\big\rangle \leqslant -\big\|P_{X}(x)-P_{X}(y) \big\|_{2}^{2}$, for all $y\in \mathcal{R}^{m}$.
\item[(d)] $\big\|x-P_{X}(x) \big\|_{2}^{2} +\big \|y-P_{X}(x)\big\|_{2}^{2} \leqslant \big\| x-y\big\|_{2}^{2}$, for any $y\in X$.
\end{itemize}
\end{lemma}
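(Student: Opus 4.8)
The plan is to derive all four items from the first-order variational characterization of the Euclidean projection, which is exactly item (a); items (c), (b), (d) then follow by short manipulations. Since $X$ is closed and convex and $y\mapsto\|x-y\|_2^2$ is strictly convex and coercive, the minimum defining $P_X(x)$ is attained and unique, so $P_X$ is well defined; I take this standard fact for granted. The order I would follow is (a), then (d), then (c), then (b).

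To prove (a), set $z=P_X(x)$. For any $y\in X$ and any $t\in(0,1]$, convexity of $X$ gives $z+t(y-z)\in X$, so minimality of $z$ yields
\begin{align*}
\|x-z\|_2^2\leqslant\|x-z-t(y-z)\|_2^2=\|x-z\|_2^2-2t\langle x-z,y-z\rangle+t^2\|y-z\|_2^2 .
\end{align*}
Cancelling $\|x-z\|_2^2$, dividing by $t>0$, and letting $t\to0^+$ gives $\langle x-z,y-z\rangle\leqslant0$, which is (a). For (d), with $z=P_X(x)$ and $y\in X$, expand
\begin{align*}
\|x-y\|_2^2=\|(x-z)+(z-y)\|_2^2=\|x-z\|_2^2+2\langle x-z,z-y\rangle+\|z-y\|_2^2 ,
\end{align*}
and since $\langle x-z,z-y\rangle=-\langle x-z,y-z\rangle\geqslant0$ by (a), we obtain $\|x-y\|_2^2\geqslant\|x-P_X(x)\|_2^2+\|y-P_X(x)\|_2^2$.

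For (c), note that $P_X(x),P_X(y)\in X$, so part (a) applied at $x$ and at $y$ respectively gives $\langle x-P_X(x),P_X(y)-P_X(x)\rangle\leqslant0$ and $\langle y-P_X(y),P_X(x)-P_X(y)\rangle\leqslant0$. Adding these and collecting terms gives $\langle(x-y)-(P_X(x)-P_X(y)),P_X(y)-P_X(x)\rangle\leqslant0$, hence
\begin{align*}
\langle x-y,P_X(y)-P_X(x)\rangle\leqslant\langle P_X(x)-P_X(y),P_X(y)-P_X(x)\rangle=-\|P_X(x)-P_X(y)\|_2^2 ,
\end{align*}
which is (c). Rewriting (c) as $\|P_X(x)-P_X(y)\|_2^2\leqslant\langle x-y,P_X(x)-P_X(y)\rangle$, bounding the right-hand side by $\|x-y\|_2\,\|P_X(x)-P_X(y)\|_2$ via Cauchy--Schwarz, and dividing by $\|P_X(x)-P_X(y)\|_2$ (the case where this is zero being trivial) yields (b).

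There is no substantial obstacle here; the argument is a short sequence of expansions of squared norms. The only points needing a little care are the passage to the limit $t\to0^+$ in the proof of (a) and the sign bookkeeping when the two instances of (a) are combined to produce (c).
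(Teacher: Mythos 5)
Your proof is correct. The paper does not prove Lemma~\ref{Lem2} at all --- it is quoted without proof from \cite{hiriart2012fundamentals, nedic2010constrained} --- and your derivation (establishing the variational inequality (a) by the convexity/limit argument, then obtaining (d) by expanding the square, (c) by adding the two instances of (a) at $x$ and $y$, and (b) from (c) via Cauchy--Schwarz) is exactly the standard textbook argument those references give.
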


\subsubsection*{Probability theory}
Denote $(\Omega,\mathcal{F}, \mathbb{P})$ as the basic probability space, where $\Omega$ the whole event space, $\mathcal{F}$ the $\sigma$-algebra on $\Omega$, and $\mathbb{P}$ the probability measure on $(\Omega,\mathcal{F})$.  Next, we give definitions of convergence in probability theory and a lemma of the convergence of super-martingales.

\begin{definition}\label{Def2}
 \cite{durrett2010probability}[Convergence  in $(\Omega,\mathcal{F}, \mathbb{P})$]
 \begin{itemize}
 \item[(a)] Let $x_{1},x_{2},\ldots,x_{k}\ldots$ be a sequence of random variables in $(\Omega,\mathcal{F}, \mathbb{P})$. If $\mathbb{P}(x_{k}\rightarrow x)=1$, we say that $x_{k}$ converges $x$ almost surely (a. s.).
 \item[(b)] Let $x_{1},x_{2},\ldots,x_{k}\ldots$ be a sequence of random variables in $(\Omega,\mathcal{F}, \mathbb{P})$. If  $\mathbb{E}\|x_{k}-x\|^{p}\rightarrow 0$, we say that $x_{k}$ converges to $x$ in $L^{p}$.
 \end{itemize}
\end{definition}

\begin{lemma}\label{Lem3}(\cite{polyak1987introduction}) Denote $(\Omega,\mathcal{F}, \mathbb{P})$ as the basic probability space and $\{ F_{k}\}_{k\geq1}$ as a sequence of increasing sub-$\sigma$-algebras on $\mathcal{F}$. $\{h_k\}_{k\geq1}$, $\{v_k\}_{k\geq1}$ and $\{w_k\}_{k\geq1}$ are scalar variable sequences such that $h_k$, $v_k$ and $w_k$ are $F_k$-measurable for each $k$. Both $\{v_k\}_{k\geq1}$ and $\{w_k\}_{k\geq1}$ are nonnegative and  $\sum_{k=1}^{\infty} w_k <\infty$. Furthermore, $\{h_k\}_{k\geq1}$ is bounded from below uniformly. If the following inequality holds with probability one,
\begin{align*}
\mathbb{E}[h_{k+1}| F_k] \leqslant (1+\eta_k) h_k-v_k+ w_k,\;\; \forall k\geqslant 1,
\end{align*}
where $\eta_k \geqslant 0$ are constants with $\sum_{k=1}^{\infty}\eta_k <\infty$, then $\{h_k\}_{k\geq1}$ converges almost surely with $\sum_{k=1}^{\infty} v_k < \infty$.
\end{lemma}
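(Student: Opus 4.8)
The plan is to reduce the almost-supermartingale recursion to a genuine nonnegative supermartingale and invoke the martingale convergence theorem (see, e.g., \cite{durrett2010probability}); the two cosmetic obstructions — the multiplicative factor $1+\eta_k$ and the fact that $h_k$ is only bounded below rather than nonnegative — are removed by a deterministic change of variables, while the one real obstruction is handled by a stopping-time localization.

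\textbf{Step 1: normalization.} Fix a constant $c\geqslant 0$ with $h_k\geqslant -c$ for all $k$, and set $\alpha_k:=\prod_{j=1}^{k-1}(1+\eta_j)^{-1}$ (so $\alpha_1=1$). Since $\eta_j\geqslant0$ and $\sum_j\eta_j<\infty$, the infinite product $\prod_{j\geqslant1}(1+\eta_j)$ converges to a finite limit, so $\{\alpha_k\}$ is a deterministic sequence with $0<\alpha_\infty\leqslant\alpha_k\leqslant1$. I would define $\widehat h_k:=\alpha_k(h_k+c)\geqslant0$, $\widehat v_k:=\alpha_{k+1}v_k\geqslant0$, $\widehat w_k:=\alpha_{k+1}w_k\geqslant0$. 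Adding $c$ to both sides of the hypothesis and multiplying by $\alpha_{k+1}$, then using $\alpha_{k+1}(1+\eta_k)=\alpha_k$ and dropping the nonnegative term $\alpha_{k+1}\eta_k c$, one gets
\begin{align*}
\mathbb{E}[\widehat h_{k+1}\mid F_k]\leqslant \widehat h_k-\widehat v_k+\widehat w_k\quad\text{a.s.},\qquad \sum_{k\geqslant1}\widehat w_k\leqslant\sum_{k\geqslant1}w_k<\infty\ \ \text{a.s.},
\end{align*}
and it suffices to prove that $\widehat h_k$ converges a.s. and $\sum_k\widehat v_k<\infty$ a.s.: then $h_k=\alpha_k^{-1}\widehat h_k-c$ converges a.s. (because $\alpha_k^{-1}$ has a finite limit) and $\sum_k v_k=\sum_k\alpha_{k+1}^{-1}\widehat v_k\leqslant\alpha_\infty^{-1}\sum_k\widehat v_k<\infty$ a.s.

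\textbf{Step 2: compensated process and truncation.} Next I would introduce $u_k:=\widehat h_k-\sum_{j=1}^{k-1}\widehat w_j$, which is $F_k$-measurable, and compute from Step 1 that $\mathbb{E}[u_{k+1}\mid F_k]\leqslant u_k-\widehat v_k\leqslant u_k$, so $\{u_k\}$ is a supermartingale. It is only bounded below by the \emph{random} quantity $-\sum_{j\geqslant1}\widehat w_j$, so the convergence theorem cannot be applied to it directly. To fix this, for each integer $M\geqslant1$ let $\tau_M:=\inf\{k\geqslant1:\sum_{j=1}^k\widehat w_j>M\}$, a stopping time. For $k<\tau_M$ one has $\sum_{j=1}^k\widehat w_j\leqslant M$, so the stopped process $\{u_{k\wedge\tau_M}\}$ is a supermartingale with $u_{k\wedge\tau_M}\geqslant -M$ for all $k$; by the supermartingale convergence theorem it converges a.s. to a finite limit. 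Summing $\mathbb{E}[u_{(k+1)\wedge\tau_M}\mid F_k]\leqslant u_{k\wedge\tau_M}-\widehat v_k\mathbf{1}_{\{\tau_M>k\}}$ and conditioning on $F_1$ also yields $\mathbb{E}\big[\sum_{k\geqslant1}\widehat v_k\mathbf{1}_{\{\tau_M>k\}}\mid F_1\big]\leqslant\widehat h_1+M<\infty$ a.s., hence $\sum_{k\geqslant1}\widehat v_k\mathbf{1}_{\{\tau_M>k\}}<\infty$ a.s.

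\textbf{Step 3: conclusion.} On the event $\{\tau_M=\infty\}=\{\sum_{j\geqslant1}\widehat w_j\leqslant M\}$ we have $u_{k\wedge\tau_M}=u_k$ for all $k$ and $\sum_{j=1}^{k-1}\widehat w_j$ converges, so $\widehat h_k=u_k+\sum_{j=1}^{k-1}\widehat w_j$ converges a.s. and $\sum_k\widehat v_k<\infty$ a.s. there. Since $\bigcup_{M\geqslant1}\{\tau_M=\infty\}=\{\sum_{j\geqslant1}\widehat w_j<\infty\}\supseteq\{\sum_{j\geqslant1}w_j<\infty\}$ and the last event has probability one by hypothesis, letting $M\uparrow\infty$ shows $\widehat h_k$ converges a.s. and $\sum_k\widehat v_k<\infty$ a.s. on all of $\Omega$; by Step 1 this gives the assertions for $\{h_k\}$ and $\{v_k\}$. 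The main obstacle is exactly Step 2 — the compensated process $u_k$ is bounded below only by the random tail $-\sum_j\widehat w_j$, so Doob's theorem does not apply to $\{u_k\}$ itself — and the truncation by $\tau_M$ together with the exhaustion $\bigcup_M\{\tau_M=\infty\}=\{\sum_j\widehat w_j<\infty\}$ is the device that resolves it. (Minor points: $\sum_k w_k<\infty$ is read in the almost-sure sense, and the conditional expectations in the hypothesis are assumed well defined; neither affects the argument.)
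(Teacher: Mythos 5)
The paper does not actually prove Lemma \ref{Lem3}: it is quoted as a known auxiliary result (the Robbins--Siegmund almost-supermartingale convergence theorem) with a citation to \cite{polyak1987introduction}, so there is no in-paper argument to compare yours against. Your proof is a correct, self-contained derivation along the standard lines. The deterministic rescaling by $\alpha_k=\prod_{j<k}(1+\eta_j)^{-1}$ (finite and bounded away from zero precisely because $\sum_k\eta_k<\infty$) together with the shift by the uniform lower bound reduces the hypothesis to $\mathbb{E}[\widehat h_{k+1}\mid F_k]\leqslant \widehat h_k-\widehat v_k+\widehat w_k$ with $\widehat h_k\geqslant0$; compensating by $\sum_{j<k}\widehat w_j$ yields a supermartingale; and the localization by $\tau_M$ is exactly the right device for the one genuine difficulty, namely that $\sum_k w_k$ is only almost surely finite rather than uniformly bounded, with the exhaustion $\bigcup_M\{\tau_M=\infty\}$ recovering the full-probability conclusion. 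Your algebra checks out ($\alpha_{k+1}(1+\eta_k)=\alpha_k$, the discarded term $\alpha_{k+1}\eta_k c$ is nonnegative, $\{\tau_M>k\}\in F_k$ so the stopped process inherits the supermartingale inequality and the bound $u_{k\wedge\tau_M}\geqslant-M$), and conditioning on $F_1$ instead of taking plain expectations is a nice way to avoid assuming $\mathbb{E}[h_1]<\infty$. The only caveat, which you already flag, is that the conditional expectations must be well defined; after your normalization everything relevant is nonnegative or bounded below by $-M$, so this is harmless. I see no gap.
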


\subsection{Problem Formulation}\label{sec2.2}
Consider the following distributed convex optimization problem over a network with $n$ nodes/agents:
\begin{align}\label{2-1}
\min \;\;&F(\xi)=\sum_{i=1}^{n} f^{i}(\xi)\notag\\
s.\;t.\; &\xi\in X=\bigcap_{i=1}^{n}X_{i},
\end{align}
where $f^{i}(\cdot)$ is a local non-smooth convex objective function corresponding to agent $i$ and $X_{i}\subset\mathcal{R}^{m}$ is the local bounded closed convex constraint set known by agent $i$ only. Without loss of generality, we assume that the set $X$, i.e., the intersection of $X_{i},~i=1,\cdots,n$, is non-empty.

The communication topology among agents is modeled by time-varying networks $\mathcal{G}_{k}=(\mathcal{N},
\mathcal{E}_{k}, W_{k}),~k\geq1$,  where $\mathcal{N} = \{1,2,...,n\}$ is the agent set, $k$ is the time index,  $\mathcal{E}_{k}\subset
\mathcal{N}\times \mathcal{N} $ is the edge set at time $k$ which represents the information communication among agents and $W_{k}=\big[w^{ij}_{k}\big]_{ij}$ is the adjacency matrix of $\mathcal{E}_{k}$. The term $w^{ij}_{k}$ denotes the $ij$-th entry of matrix $W_{k}$.   The neighbor set of agent $i$ at time $k$ is represented by $\mathcal{N}^i_k$, i.e., $\mathcal{N}^i_k=\{j\in\mathcal{N}~|~(j,i)\in\mathcal{E}_k\}$. The observation of agent $i$ at time $k$ is its own function value $f^i(\xi_k)$ and those of its neighbors', i.e., $f^j(\xi_k),~j\in\mathcal{N}^i_k$. In the random environment, both $f^i(\xi_k)$ and $f^j(\xi_k),~j\in\mathcal{N}^i_k$ are corrupted by noises, i.e., the observations of agent $i$ at time $k$ being $y_{k+1}^i=f^i(\xi_k)+\epsilon^i_{k}$ and $y_{k+1}^j=f^j(\xi_k)+\epsilon^j_{k},~j\in\mathcal{N}^i_k$.

In addition to observations $y_k^i$, the majority of existing distributed stochastic optimization algorithms for solving problem (\ref{2-1}) demand the measurements of subgradients of the local non-smooth objective functions. However, in practice, the subgradient information is not always available and as pointed in \cite{nesterov2005lexicographic}, the computational complexity of the subgradient is related to the dimension of the function arguments, which may be very high in the network environment.

In this paper, we propose a class of subgradient-free algorithms for solving problem (\ref{2-1}). The first step is to construct the searching direction for each local objective function. This is achieved by introducing stochastic dithers to each agent and then constructing randomized difference as an estimate for the subgradient. The details are as follows.

For each agent $i\in \mathcal{N}$, we introduce a sequence of dither signals $\bigtriangleup_{k}^{i}\in \mathcal{R}^m$, $k\geqslant 0$ with $\bigtriangleup_{k}^{i} \triangleq \big[\bigtriangleup_{k}^{i1},\bigtriangleup_{k}^{i2},\ldots,\bigtriangleup_{k}^{im}\big]^{\top}$. Let $\{c_k\}_{k\geq0}$ be a sequence of positive constants decreasing to zero. After adding dithers into each agent, the corresponding observations are $\big[y_{k+1}^{i}\big]^{+}=f^{i}\big(x^i_{k}+c_{k}\bigtriangleup_{k}^{i}\big)+\big[\epsilon^{i}_{k}\big]^{+}$ and $\big[y_{k+1}^{i}\big]^{-}=f^{i}\big(x^i_{k}-c_{k}\bigtriangleup_{k}^{i}\big)+\big[\epsilon^{i}_{k}\big]^{-}$, where $\big[\epsilon^{i}_{k}\big]^{+}$ and $\big[\epsilon^{i}_{k}\big]^{-}$ are the corresponding observation errors and by $[\cdot]^+$ and $[\cdot]^-$ it means the dithers are in the positive and negative directions.

Define $\epsilon_{k}^{i}\triangleq[\epsilon^{i}_{k}]^{+}-[\epsilon^{i}_{k}]^{-}$. Three types of randomized differences are given as follows:
\begin{itemize}
\item[(a)]\textbf{Right-sided randomized differences:}
\begin{align}\label{2-2}
\big[d^i_{k}\big]^{+}=\dfrac{\Big[\big[y_{k+1}^{i}\big]^{+}-y_{k+1}^{i}\big)\Big]\big[\bigtriangleup_{k}^{i}
\big]^{-}}{c_{k}};
\end{align}
\item[(b)]\textbf{Left-sided randomized differences:}
\begin{align}\label{2-3}
\big[d^i_{k}\big]^{-}=\dfrac{\Big[y_{k+1}^{i}-\big[ y_{k+1}^{i}\big]^{-}\Big]\big[\bigtriangleup_{k}^{i}
\big]^{-}}{c_{k}};
\end{align}
\item[(c)]\textbf{Two-sided randomized differences:}
\begin{align}\label{2-4}
d^i_{k}=\dfrac{\Big[\big[y_{k+1}^{i}\big]^{+}-\big[y_{k+1}^{i}\big]^{-}\Big]\big[\bigtriangleup_{k}^{i}
\big]^{-}}{2c_{k}},
\end{align}
\end{itemize}
where $\big[\bigtriangleup_{k}^{i}\big]^{-}\triangleq\Big[\frac{1}{\bigtriangleup_{k}^{i1}},\frac{1}{\bigtriangleup_{k}^{i2}},\ldots,\frac{1}{\bigtriangleup_{k}^{im}}\Big]^{\top}$.

\begin{remark}\label{Rem1}
The randomized differences defined as above serve as the searching direction of each objective function as well as the estimates for the subgradient. By Definition \ref{Def1}, the randomized differences do not fall into the subgradient category. In the following, we will give the randomized differences based distributed stochastic optimization algorithms and establish their asymptotic properties.
\end{remark}

\section{Distributed Algorithm and Hypotheses}\label{sec3}

In this section, we first propose three distributed subgradient-free algorithms with randomized differences. Then we introduce system hypotheses and some technical lemmas to be used for convergence analysis of these proposed algorithms.

\subsection{Distributed Subgradient-Free Algorithm with Randomized Differences}\label{sec3.1}

For agent $i~(i\in\mathcal{N})$, the design of our distributed subgradient-free algorithms with two-sided randomized differences is given as follows. Let  $\xi^{i}_{k}$  symbolize the state of agent $i$ at time $k$. For the $(k+1)$-th iteration, agent $i$ first collects the states from its active neighbors, i.e., $\xi^{j}_{k},~j\in\mathcal{N}^i_k$ and computes a local average of $\xi^{j}_{k},~j\in\mathcal{N}^i_k$ to update its own state, denoted by $x^i_k$. Then agent $i$ calculates a randomized difference  based on its local function $f^{i}(x)$ at $x=x^i_{k}+c_{k}\bigtriangleup_{k}^{i}$,  $x=x^i_{k}-c_{k}\bigtriangleup_{k}^{i}$. After this, the $(k+1)$-th estimates $\xi^i_{k+1}$ is obtained from a iterative descent algorithm and a projection operator.  The algorithms with one-sided randomized differences can be designed similarly with $d^{i}_{k}$ replaced by $\big[d^i_{k}\big]^{+}$ and $\big[d^i_{k}\big]^{-}$, respectively.  In fact, distributed subgradient-free algorithms with both two-sided and one-sided randomized differences, are given in Algorithm \ref{Alg1}:
\begin{algorithm}[h]
	\flushleft
	\caption{ \bf Distributed Subgradient-Free Algorithms with Randomized Differences}\label{Alg1}
	\begin{algorithmic}[1]
		\State Initialization of $\xi^{i}_{1}\in X_{i}$ for all $i=1,2,\ldots n$. Choose stepsize sequence $\{\iota_{k}\}$ and dither sequence $\{c_{k}\}$.
		\State Average of local observations:
		\begin{align}\label{3-1}
		x^{i}_{k}=\sum_{j\in\mathcal{N}_i}w^{ij}_k\xi^{j}_k=\sum_{j=1}^{n}w^{ij}_k\xi^{j}_k.
		\end{align}
		\State Calculation of randomized difference $d^{i}_{k}$, or $\big[d^i_{k}\big]^{+}$, or $\big[d^i_{k}\big]^{-}$:
		\begin{subequations}
		\begin{align}	&d^i_{k}=\dfrac{\Big[\big[y_{k+1}^{i}\big]^{+}-\big[y_{k+1}^{i}\big]^{-}\Big]\big[\bigtriangleup_{k}^{i}
			\big]^{-}}{2c_{k}}~~~\mathrm{Two-sided};\label{3-2-1}\\			&\big[d^i_{k}\big]^{+}=\dfrac{\Big[\big[y_{k+1}^{i}\big]^{+}-y_{k+1}^{i}\big)\Big]\big[\bigtriangleup_{k}^{i}
\big]^{-}}{c_{k}}~~\mathrm{Right-sided};\label{3-2-2}\\
			&\big[d^i_{k}\big]^{-}=\dfrac{\Big[y_{k+1}^{i}-\big[ y_{k+1}^{i}\big]^{-}\Big]\big[\bigtriangleup_{k}^{i}
\big]^{-}}{c_{k}}~~~\mathrm{Left-sided}\label{3-2-3}.
		\end{align}
		\end{subequations}
		\State Descent step:
		\begin{subequations}
		\begin{align}	\hat{\xi}^{i}_{k+1}&=x^{i}_k-\iota_{k}d^i_{k}~~~\mathrm{Two-sided};\label{3-3-1}\\	\hat{\xi}^{i}_{k+1}&=x^{i}_k-\iota_{k}\big[d^i_{k}\big]^{+}~~~\mathrm{Right-sided}; \label{3-3-2}\\		\hat{\xi}^{i}_{k+1}&=x^{i}_k-\iota_{k}\big[d^i_{k}\big]^{-}~~~\mathrm{Left-sided}\label{3-3-3}.
		\end{align}
		\end{subequations}
		\State Projection step:\begin{align}\label{3-4}
		\xi^{i}_{k+1}=P_{X_{i}}\big(\hat{\xi}^{i}_{k+1}\big).
		\end{align}
		\State{Check the end condition of algorithm. If the condition is satisfied, then the algorithm is terminated. Otherwise, $k:=k+1$ and go to Step 2.}
	\end{algorithmic}
\end{algorithm}

The reference \cite{yuan2015randomized} only considered the right-sided randomized differences of Gaussian approximation functions and we consider three different types of randomized differences in designs of our proposed algorithms. It is also worth noting that our proposed algorithm is also different from the distributed  KW algorithm in \cite{anit2018distributed}, whose construction is based on deterministic differences with periodic dithers for smooth strongly-convex objective functions.

In the following, we will introduce conditions to guarantee the global minimization of the proposed algorithms.

\subsection{Hypotheses and Technical Lemmas}\label{sec3.2}

We first introduce a hypothesis on local objective functions $f^{i}(\cdot),~i=1,\cdots,n$:
\begin{hypothesis}\label{Hyp1}
\
\begin{itemize}
\item[(a)] $f^{i}(\cdot)$ $i=1,2,\ldots,n$ are convex but non-smooth functions with subgradients, denoted by $\partial f^i(\cdot)$.
\item[(b)] There exists a positive constant $L$ such that for any $x\in \mathrm{dom}(f^{i})$, $\big\|\partial f^{i}(x)\big\|_{2}\leqslant L$, $i=1,2,\ldots,n$.
\end{itemize}
\end{hypothesis}

\begin{remark}\label{Rem2}
According to Definition \ref{Def1} and Hypothesis \ref{Hyp1} (b), for any $x_{i}, y_{i}\in X_{i}$, where $X_{i}$ is a bounded closed convex set, $f^{i}$ is Lipschitz over $X_{i}$
\begin{align*}
\big\|f^{i}(x)-f^{i}(y)\big\|_{2}\leqslant \|\langle\partial  f^{i}(x), x-y\rangle \|_{2}\leqslant  L\big\|x-y\big\|_{2}, \quad i=1,2,\ldots,n.
\end{align*}
\end{remark}

Hypothesis \ref{Hyp1} requires the convexity of  local objective functions and the boundedness of their subgradients. Hypothesis \ref{Hyp1} is a traditional condition used for distributed first-order optimization, c.f., \cite{nedic2009distributed, nedic2010constrained} and distributed zeroth-order algorithms  \cite{hajinezhad2017zeroth}. Note that the reference \cite{anit2018distributed} assumed that local objective functions are twice continuously differentiable and strongly convex, which make the problem much easier.   Here, although the existence of the subgradient is required, it is not involved in our proposed algorithm, which  is, therefore, called subgradient-free.

Next, we introduce a connectivity condition for the time-varying network
$\mathcal{G}_{k}=(\mathcal{N},\mathcal{E}_{k}, W_{k})$.

\begin{hypothesis}\label{Hyp2}
The graph $\mathcal{G}_{k}=(\mathcal{N},
\mathcal{E}_{k}, W_{k})$ satisfies the following conditions:
\begin{itemize}
\item[(a)]There exists a constant $\eta$ with $0<\eta<1$ such that $\forall k\geqslant 0$ and $\forall i, j\in\mathcal{N}$, $ w^{ii}_{k}\geqslant \eta$ and $w^{ij}_{k}\geqslant \eta$ if $(j,i)\in \mathcal{E}_{k}$.
\item[(b)]$W_{k}$ is doubly stochastic, i.e. $\sum_{i=1}^{n}w^{ij}_{k}=1$ and $\sum_{j=1}^{n}w^{ij}_{k}=1$.
\item[(c)]There is an integer $\kappa\geqslant 1$ such that $\forall k\geqslant 0$ and $\forall i,j\in \mathcal{N}$,
\begin{equation*}
 (j,i)\in \mathcal{E}_{k}\cup \mathcal{E}_{k+1}\cup \cdots \cup \mathcal{E}_{k+\kappa-1}.
\end{equation*}
\end{itemize}
\end{hypothesis}
Hypothesis \ref{Hyp2} is widely applied in the literature of distributed (stochastic) optimization for time-varying networks (see., e.g.,  \cite{nedic2009distributed,nedic2010constrained}). It indicates that each agent $i$ can gather information from all its neighbors at least once during each period of $\kappa$, though the network can be disconnected at each time $k$ and is time-varying.

We present the following two hypotheses  for the parameter selection of the proposed algorithm.

Define a sequence of $\sigma$-algebras $ F_{k}\triangleq\sigma\big\{x^{i}_{k},x^{i}_{k-1},\cdots,x^{i}_{0},i=1,\cdots,n;~\epsilon^{i}_{k-1},\epsilon^{i}_{k-2},\cdots,\\\epsilon^{i}_{0}, i=1,\cdots,n;~\bigtriangleup^{i}_{k-1},\bigtriangleup^{i}_{k-2},\cdots,\bigtriangleup^{i}_{0},i=1,\cdots,n\big\}$,  $k\geqslant 1$.  We further make the following hypothesis on the dither signal $\bigtriangleup_{k}^{i}$ and the observation noise $\epsilon_{k}^{i}$:

\begin{hypothesis}\label{Hyp3}
\
\begin{itemize}
\item[(a)]For any fixed $i\in\{1,\cdots,n\}$ and $p\in\{1,\cdots,m\}$,  $\big\{\bigtriangleup_{k}^{ip}\big\}_{k\geqslant 0}$ is chosen as a sequence of independent and identically distributed (i.i.d.) random variables such that
\begin{align*}
\big|\bigtriangleup_{k}^{ip}\big|<a,\; \;\bigg|\frac{1}{\bigtriangleup_{k}^{ip}}\bigg|<b,\;\;\mathbb{E}\bigg[\frac{1}{\bigtriangleup_{k}^{ip}}\bigg]=0, \quad k\geqslant 0,\quad \forall (i,p).
\end{align*}
\item[(b)]For $i\neq j$ or $p\neq q$, the sequences $\big\{\bigtriangleup_{k}^{ip}\big\}_{k\geqslant 0}$ and $\big\{\bigtriangleup_{k}^{jq}\big\}_{k\geqslant 0}$ are mutually independent.
\item[(c)]For any fixed $i,j\in\{1,\cdots,n\}$ and $p\in\{1,\cdots,m\}$, the dither signal $\big\{\bigtriangleup_{k}^{ip}\big\}_{k\geqslant 0}$ and the noise $\big\{\epsilon_{k}^{j}\big\}_{k\geqslant 0}$ are  mutually independent.
\item[(d)]For any $k\geqslant 0$, $ \mathbb{E}\big[\epsilon_{k+1}^{i} | F_{k}\big]=0$ and $\sup_{k\geq0,i=1,\cdots,n} \mathbb{E}\big\|\epsilon_{k}^{i}\big\|_{2}^{2}<\infty$.
\end{itemize}
\end{hypothesis}
Then, we introduce conditions on the step-size $\iota_{k}$  of  distributed subgradient-free algorithms with randomized differences (Algorithm \ref{Alg1}) and the coefficient $c_{k}$ used in the randomized differences (\ref{2-2})-(\ref{2-4}):

\begin{hypothesis}\label{Hyp4} Both $\{\iota_k\}_{k\geq1}$ $\{c_k\}_{k\geq1}$ are positive sequences tending to zero such that
\begin{itemize}
\item[(a)]$\iota_{k}>0$, $\sum_{k=1}^{\infty}\iota_{k}<\infty$.
\item[(b)]$c_{k}>0$, $c_{k}\rightarrow 0$.
\item[(c)] $\sum_{k=1}^{\infty}\frac{\iota_{k}}{c_{k}}=\infty$, $\sum_{k=1}^{\infty}\frac{\iota_{k}^{2}}{c_{k}^{2}}<\infty$, and $\sum_{k=1}^{\infty}{\iota_{k}}{c_{k}}<\infty$.
\end{itemize}
\end{hypothesis}
For any $k\geqslant s$, set the transition matrix of $W_{k}$ as $\Psi_{k,s}\triangleq W_{k}W_{k-1}\cdots W_{s}$. Denote $\big[\Psi_{k,s}\big]_{ij}$ as the $(i,j)$-th entry of  $\Psi_{k,s}$.   The following lemma given in \cite{ram2010distributed, nedic2009distributed}, describes the proposition of transition matrix  $\big[\Psi_{k,s}\big]_{ij}$ of the considered time-varying network.

\begin{lemma}\label{Lem4}\cite{nedic2009distributed, ram2010distributed}
 If Hypothesis \ref{Hyp2} holds, then $\Big|\big[\Psi_{k,s}\big]_{ij}-\frac{1}{n}\Big|\leqslant \lambda \beta^{k-s},\;\forall
		k>s$, where $\lambda=2\big(1+\eta^{-K_{0}}\big)/\big(1-\eta^{-K_{0}}\big)$
with $K_{0}=\big(n-1\big)\kappa$ and $\beta=\big(1-\eta^{-K_{0}}\big)^{1/K_{0}}<1$.
\end{lemma}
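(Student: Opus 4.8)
The final statement to prove is Lemma \ref{Lem4}, the geometric decay of the transition-matrix entries $\big|[\Psi_{k,s}]_{ij}-\frac{1}{n}\big|\leqslant \lambda\beta^{k-s}$ under Hypothesis \ref{Hyp2}. Since this lemma is explicitly cited from \cite{nedic2009distributed,ram2010distributed}, the natural approach is to reconstruct the standard ergodicity argument for products of doubly stochastic matrices with bounded joint connectivity, and I will sketch that.

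The plan is to proceed in three stages. First I would establish a ``long-window'' connectivity estimate: because of Hypothesis \ref{Hyp2}(c), over any window of length $\kappa$ every edge $(j,i)$ that is ever active appears, and combined with the uniform lower bound $\eta$ on positive weights and the self-loop bound $w^{ii}_k\geqslant\eta$ from Hypothesis \ref{Hyp2}(a), one shows that the product $\Psi_{t+K_0-1,t}$ over a window of length $K_0=(n-1)\kappa$ has all entries bounded below by some $\gamma=\gamma(\eta,\kappa,n)>0$ — concretely $\gamma\geqslant\eta^{K_0}$. The point is that in $(n-1)$ consecutive $\kappa$-blocks information propagates from any node to any other along a path of length at most $n-1$ in the union graph, and the self-loops keep the mass flowing in the intervening steps; multiplying the per-step lower bounds gives the $\eta^{K_0}$ floor. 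Second, I would invoke the classical fact (Markov / Dobrushin coefficient argument) that a stochastic matrix all of whose entries exceed $\gamma$ is a strict contraction in the sense that its ergodicity coefficient is at most $1-n\gamma$ (or $1-\gamma$ depending on normalization); iterating this over $\lfloor (k-s)/K_0\rfloor$ disjoint windows yields a bound of the form $\big|[\Psi_{k,s}]_{ij}-\frac1n\big|\leqslant C(1-\eta^{K_0})^{\lfloor (k-s)/K_0\rfloor}$. Third, I would absorb the floor/ceiling and the leftover partial window into the constants: writing $\beta=(1-\eta^{-K_0})^{1/K_0}$ (matching the paper's slightly unusual notation, where $\eta^{-K_0}$ should be read as the small positive quantity $\eta^{K_0}$) and choosing $\lambda=2(1+\eta^{-K_0})/(1-\eta^{-K_0})$ to cover the incomplete window and the worst-case initial spread of $|[\Psi_{k,s}]_{ij}-\frac1n|\leqslant 1$, one gets the stated clean bound $\lambda\beta^{k-s}$ for all $k>s$.

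An alternative, and perhaps cleaner, route avoids the ergodicity-coefficient machinery: define $\phi_k(s)\in\mathcal{R}^n$ as any row of $\Psi_{k,s}$ minus $\frac1n\mathbf{1}^\top$, note that double stochasticity gives $\mathbf{1}^\top\phi_k(s)=0$, and track the decay of $\|\phi_k(s)\|$ directly. Over each length-$K_0$ window the uniform lower bound $\gamma$ on all entries of the block product forces a fixed fraction of the ``disagreement'' to be eliminated — this is exactly a consensus-contraction estimate — so $\|\phi_{s+K_0\ell}(s)\|$ decays geometrically with ratio bounded away from $1$, and then one unwinds to the entrywise bound. Either way the skeleton is: (i) uniform positivity of block products via joint connectivity plus self-loops; (ii) per-block geometric contraction; (iii) bookkeeping of constants and the remainder window.

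The main obstacle I anticipate is step (i): carefully proving the uniform lower bound $\eta^{K_0}$ on every entry of a length-$K_0$ product. One must argue that for any ordered pair $(i,j)$ there is, within the window $\{t,\dots,t+K_0-1\}$, a time-respecting path $j=i_0\to i_1\to\cdots\to i_r=i$ with $r\leqslant n-1$ using edges from the successive $\kappa$-blocks, and that at every time step not used by an actual hop the relevant vertex carries its own mass through its self-loop $w^{ii}\geqslant\eta$; assembling these into a single lower bound on $[\Psi_{t+K_0-1,t}]_{ij}$ requires choosing the path and the block decomposition consistently. The contraction step (ii) is then a standard lemma about scrambling stochastic matrices, and step (iii) is routine estimation with the floor function $\lfloor(k-s)/K_0\rfloor\geqslant (k-s)/K_0-1$, which is where the factor $2$ and the $1+\eta^{-K_0}$ in $\lambda$ come from. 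Since the result is quoted verbatim from the literature, I would in the actual paper simply cite \cite{nedic2009distributed,ram2010distributed} and omit the proof, but the above is the argument it rests on.
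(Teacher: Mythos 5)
The paper offers no proof of Lemma \ref{Lem4}: it is quoted verbatim from \cite{nedic2009distributed, ram2010distributed}, so there is no in-paper argument to compare against. Your sketch correctly reconstructs the standard proof from those references --- uniform positivity $\eta^{K_{0}}$ of every entry of a length-$K_{0}$ block product via joint connectivity plus self-loops, per-block contraction of the ergodicity coefficient, then bookkeeping of the remainder window --- and you also rightly observe that the exponents $\eta^{-K_{0}}$ in the stated constants must be read as $\eta^{K_{0}}$, since otherwise $\beta=\big(1-\eta^{-K_{0}}\big)^{1/K_{0}}$ would not lie in $(0,1)$.
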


\section{Main Results}\label{sec4}

In this section, we introduce three main theorems step by step.   At first, we establish
the consensus with probability one of distributed subgradient-free algorithm with two-sided randomized differences. Then we prove that this algorithm achieves  the global minimizer with probability one.  Finally, we also show the mean-square  convergence rate of estimates obtained from Algorithm \ref{Alg1}.

We will mainly focus on the analysis of the distributed subgradient-free algorithm with two-sided randomized  differences in Algorithm \ref{Alg1}. The analysis of distributed subgradient-free algorithm with one-sided randomized  differences can be given similarly and thus is omitted.

First, we introduce a theorem regarding consensus analysis of the proposed algorithm.

\begin{theorem}\label{The1}
Under Hypotheses \ref{Hyp1}-\ref{Hyp4}, the consensus among estimates
$\{ \xi^i_{k}\},\;i\in\mathcal{N}$ generated from distributed subgradient-free algorithm with two-sided randomized differences in Algorithm \ref{Alg1} is achieved almost surely (a.s.).
\end{theorem}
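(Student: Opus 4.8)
The plan is to track the per-iteration displacement $\phi^{i}_{k+1}:=\xi^{i}_{k+1}-x^{i}_{k}$ produced by the descent step~(\ref{3-3-1}) and the projection step~(\ref{3-4}), to prove that $\phi^{i}_{k+1}\to 0$ almost surely for every $i$, and then to feed this into the averaging recursion~(\ref{3-1}) through the geometric mixing bound of Lemma~\ref{Lem4}, obtaining $\xi^{i}_{k}-\bar\xi_{k}\to 0$ a.s. with $\bar\xi_{k}:=\frac{1}{n}\sum_{j}\xi^{j}_{k}$; this gives the claim since $\xi^{i}_{k}-\xi^{j}_{k}=(\xi^{i}_{k}-\bar\xi_{k})-(\xi^{j}_{k}-\bar\xi_{k})$. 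First I would record the elementary a priori bounds: because $\xi^{i}_{k}\in X_{i}$ and every $X_{i}$ is bounded, all iterates $\xi^{i}_{k}$, the averages $x^{i}_{k}=\sum_{j}w^{ij}_{k}\xi^{j}_{k}$ and $\bar\xi_{k}$ stay in one fixed compact set; combining the Lipschitz estimate of Remark~\ref{Rem2} with $|\bigtriangleup_{k}^{ip}|<a$ and $|1/\bigtriangleup_{k}^{ip}|<b$ from Hypothesis~\ref{Hyp3}(a) gives $\|d^{i}_{k}\|_{2}\le K_{1}+K_{2}\,|\epsilon^{i}_{k}|/c_{k}$ for deterministic constants $K_{1},K_{2}$, so that $\sum_{k}\iota_{k}^{2}\|d^{i}_{k}\|_{2}^{2}<\infty$ a.s. by Hypotheses~\ref{Hyp3}(d) and~\ref{Hyp4} (the noise contribution being dominated by $\sum_{k}(\iota_{k}^{2}/c_{k}^{2})\,\mathbb{E}\|\epsilon^{i}_{k}\|_{2}^{2}<\infty$); in particular $\iota_{k}\|d^{i}_{k}\|_{2}\to 0$ a.s.

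The crux is to show that the projection residual $r^{i}_{k+1}:=\|\hat\xi^{i}_{k+1}-\xi^{i}_{k+1}\|_{2}$ vanishes a.s., since $\|\phi^{i}_{k+1}\|_{2}\le r^{i}_{k+1}+\iota_{k}\|d^{i}_{k}\|_{2}$. I would fix any point $z$ in the set $X=\bigcap_{i}X_{i}$ (non-empty by assumption); then $z\in X_{i}$, and Lemma~\ref{Lem2}(d) applied to $\xi^{i}_{k+1}=P_{X_{i}}(\hat\xi^{i}_{k+1})$ with $\hat\xi^{i}_{k+1}=x^{i}_{k}-\iota_{k}d^{i}_{k}$ yields $\|\xi^{i}_{k+1}-z\|_{2}^{2}+(r^{i}_{k+1})^{2}\le\|x^{i}_{k}-z\|_{2}^{2}-2\iota_{k}\langle d^{i}_{k},x^{i}_{k}-z\rangle+\iota_{k}^{2}\|d^{i}_{k}\|_{2}^{2}$. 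Summing over $i$ and using the Jensen-type inequality $\sum_{i}\|x^{i}_{k}-z\|_{2}^{2}\le\sum_{i}\|\xi^{i}_{k}-z\|_{2}^{2}$ (valid because $W_{k}$ is doubly stochastic), the $\|\cdot-z\|_{2}^{2}$ terms telescope after summing in $k$, so everything reduces to bounding $\sum_{k}\iota_{k}\sum_{i}\langle d^{i}_{k},x^{i}_{k}-z\rangle$. Here I would split $d^{i}_{k}=\mathbb{E}[d^{i}_{k}|F_{k}]+\big(d^{i}_{k}-\mathbb{E}[d^{i}_{k}|F_{k}]\big)$: the noise part of the conditional mean vanishes because $\epsilon^{i}_{k}$ is conditionally zero mean and independent of $\bigtriangleup_{k}^{i}$ with $\mathbb{E}[1/\bigtriangleup_{k}^{ip}]=0$ (Hypothesis~\ref{Hyp3}), while Lebourg's Mean Value Theorem (Lemma~\ref{Lem1}) together with $\|\partial f^{i}\|_{2}\le L$ gives $\|\mathbb{E}[d^{i}_{k}|F_{k}]\|_{2}\le K_{1}$, so the drift sum is dominated by a constant times $\sum_{k}\iota_{k}<\infty$ (Hypothesis~\ref{Hyp4}(a)), and the remaining sum is a martingale with summable conditional second moments (again $\sum_{k}\iota_{k}^{2}/c_{k}^{2}<\infty$), hence a.s. convergent. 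This makes the partial sums of $\sum_{k}\sum_{i}(r^{i}_{k+1})^{2}$ bounded above, so $\sum_{k}\sum_{i}(r^{i}_{k+1})^{2}<\infty$ and $r^{i}_{k+1}\to 0$ a.s., whence $\phi^{i}_{k+1}\to 0$ a.s. for every $i$. (This step can alternatively be packaged through Lemma~\ref{Lem3}.)

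Finally I would unroll $\xi^{i}_{k+1}=\sum_{j}w^{ij}_{k}\xi^{j}_{k}+\phi^{i}_{k+1}$ to get $\xi^{i}_{k+1}=\sum_{j}[\Psi_{k,1}]_{ij}\xi^{j}_{1}+\sum_{s=2}^{k}\sum_{j}[\Psi_{k,s}]_{ij}\phi^{j}_{s}+\phi^{i}_{k+1}$, and by double stochasticity $\bar\xi_{k+1}=\frac{1}{n}\sum_{j}\xi^{j}_{1}+\sum_{s=2}^{k}\frac{1}{n}\sum_{j}\phi^{j}_{s}+\frac{1}{n}\sum_{j}\phi^{j}_{k+1}$. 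Subtracting and invoking $\big|[\Psi_{k,s}]_{ij}-\frac{1}{n}\big|\le\lambda\beta^{k-s}$ from Lemma~\ref{Lem4}: the term in $\xi^{j}_{1}$ decays geometrically in $k$, the two $\phi_{k+1}$ terms tend to $0$ by the previous step, and the middle term $\lambda\sum_{s=2}^{k}\beta^{k-s}\sum_{j}\|\phi^{j}_{s}\|_{2}$ is a geometric convolution of the null sequence $s\mapsto\sum_{j}\|\phi^{j}_{s}\|_{2}$ and therefore also tends to $0$. Hence $\max_{i}\|\xi^{i}_{k}-\bar\xi_{k}\|_{2}\to 0$ a.s., which is the asserted consensus.

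The step I expect to be the main obstacle is the vanishing of $r^{i}_{k+1}$. The naive estimate $r^{i}_{k+1}\le\|x^{i}_{k}-\xi^{i}_{k}\|_{2}+\iota_{k}\|d^{i}_{k}\|_{2}\le\max_{j}\|\xi^{j}_{k}-\xi^{i}_{k}\|_{2}+\iota_{k}\|d^{i}_{k}\|_{2}$ bounds the projection residual only by the very consensus error one is trying to drive to zero, which is circular; breaking this loop is exactly the role of the $z\in X$ telescoping argument, and it is this detour that forces a careful treatment of the bias $\mathbb{E}[d^{i}_{k}|F_{k}]$ of the randomized difference and of the martingale structure of the observation noise, rather than a crude bound on $\|d^{i}_{k}\|_{2}$ (which would make the relevant series diverge, since $\sum_{k}\iota_{k}/c_{k}=\infty$).
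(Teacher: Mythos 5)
Your proposal is correct, but it takes a genuinely different route from the paper's. The paper first proves consensus in $L_1$ (its Lemma \ref{Lem6}) by unrolling the recursion \emph{in expectation} and bounding the per-step displacement by $\iota_k\big\|d^i_k\big\|_2$ (its inequality (B-2), which silently treats $x^i_k$ as a point of $X_i$); it then upgrades to almost-sure consensus by combining Fatou's lemma (to get $\liminf_k\big\|\xi^i_k-\bar\xi_k\big\|_2=0$ a.s.) with a supermartingale inequality for $\sum_i\big\|\xi^i_k-\bar\xi_k\big\|_2^2$ closed by Lemma \ref{Lem3}, which in turn leans on the a.s. summability $\sum_k(\iota_k/c_k)\big\|\xi^i_k-\bar\xi_k\big\|_2<\infty$ of Lemma \ref{Lem7} (only cited, not proved). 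You instead work pathwise from the start: you control the projection residual via Lemma \ref{Lem2}(d) and a telescoping sum anchored at a common feasible point $z\in X$, splitting $d^i_k$ into a uniformly bounded conditional mean (Lipschitz continuity plus the boundedness of $\bigtriangleup_k^i$ and $\big[\bigtriangleup_k^i\big]^-$, with the noise contribution killed by independence and zero mean) and an $L^2$-summable martingale part; this yields $\sum_k\sum_i(r^i_{k+1})^2<\infty$ a.s., hence that the per-step displacement vanishes a.s., and the geometric-mixing convolution with Lemma \ref{Lem4} then finishes without Fatou, Lemma \ref{Lem3} or Lemma \ref{Lem7}. Your detour is not merely stylistic: your observation that the naive residual bound is circular pinpoints exactly where (B-2) is fragile when the sets $X_i$ differ (since $x^i_k=\sum_j w^{ij}_k\xi^j_k$ need not lie in $X_i$), and your argument repairs this; the price is that you do not obtain the $L_1$ consensus statement of Lemma \ref{Lem6}, which the paper reuses in its rate analysis. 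Two small points to tighten: the martingale differences $\iota_k\sum_i\langle d^i_k-\mathbb{E}[d^i_k\,|\,F_k],\,x^i_k-z\rangle$ are $F_{k+1}$-measurable, and Hypothesis \ref{Hyp3}(d) bounds only the unconditional second moment of the noise, so you should invoke the $L^2$ martingale convergence theorem with unconditional variances rather than conditional ones; and you should state explicitly that all iterates remain in a fixed compact set so that $\big\|x^i_k-z\big\|_2$ is uniformly bounded in the drift and variance estimates.
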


To prove Theorem \ref{The1}, we need the following technical lemmas.

\begin{lemma}\label{Lem5}
If Hypotheses \ref{Hyp1} and \ref{Hyp3} hold, then the first and second order moments of randomized difference $d_{k}^{i}$ are bounded by
\begin{align*}
\mathbb{E}\big\|d_{k}^{i}\big\|_{2}\leqslant L+\dfrac{\sqrt{m}be}{2c_{k}},
\end{align*}
and
\begin{align*}
\mathbb{E}\big\|d_{k}^{i}\big\|_{2}^{2}\leqslant \bigg(L+\dfrac{\sqrt{m}be}{2c_{k}}\bigg)^{2}
\end{align*}
respectively, where $L$ is the Lipschitz constant given in Remark \ref{Rem2}, $b$ is a constant given in  Hypothesis \ref{Hyp3} (a), and $e=\Big(\sup\limits_{k\geq0,i=1,\cdots,n} \mathbb{E}\big\|\epsilon_{k}^{i}\big\|_{2}^{2}\Big)^{\frac{1}{2}}$.
\end{lemma}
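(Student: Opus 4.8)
The plan is to split the randomized difference into a deterministic-difference part and a noise part and bound each separately. Substituting $[y^{i}_{k+1}]^{\pm}=f^{i}(x^i_k\pm c_k\bigtriangleup_k^i)+[\epsilon^i_k]^{\pm}$ and $\epsilon^i_k=[\epsilon^i_k]^{+}-[\epsilon^i_k]^{-}$ into (\ref{3-2-1}) gives
\begin{align*}
d^i_k=\frac{f^{i}(x^i_k+c_k\bigtriangleup_k^i)-f^{i}(x^i_k-c_k\bigtriangleup_k^i)}{2c_k}\,[\bigtriangleup_k^i]^{-}+\frac{\epsilon^i_k}{2c_k}\,[\bigtriangleup_k^i]^{-}=:S^i_k+N^i_k .
\end{align*}
By the triangle inequality $\|d^i_k\|_2\le\|S^i_k\|_2+\|N^i_k\|_2$, so once we establish a pathwise bound $\|S^i_k\|_2\le L$ together with the moment bounds $\mathbb{E}\|N^i_k\|_2\le\tfrac{\sqrt{m}be}{2c_k}$ and $\mathbb{E}\|N^i_k\|_2^{2}\le\big(\tfrac{\sqrt{m}be}{2c_k}\big)^{2}$, both assertions follow: the first is immediate, and the second from $\mathbb{E}\|d^i_k\|_2^{2}\le\mathbb{E}\|S^i_k\|_2^{2}+2\,\mathbb{E}\big[\|S^i_k\|_2\|N^i_k\|_2\big]+\mathbb{E}\|N^i_k\|_2^{2}\le L^{2}+2L\tfrac{\sqrt{m}be}{2c_k}+\big(\tfrac{\sqrt{m}be}{2c_k}\big)^{2}$.

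To bound $S^i_k$, note that by Hypothesis \ref{Hyp1} each $f^{i}$ is globally Lipschitz with $\|\partial f^{i}(x)\|_2\le L$, so Lebourg's Mean Value Theorem (Lemma \ref{Lem1}) yields a point $u^i_k$ on the segment joining $x^i_k-c_k\bigtriangleup_k^i$ and $x^i_k+c_k\bigtriangleup_k^i$ with $f^{i}(x^i_k+c_k\bigtriangleup_k^i)-f^{i}(x^i_k-c_k\bigtriangleup_k^i)=\langle\partial f^{i}(u^i_k),\,2c_k\bigtriangleup_k^i\rangle$. The key point is that this factor $2c_k$ cancels the $2c_k$ in the denominator \emph{before} any norm is taken, so that $S^i_k=\langle\partial f^{i}(u^i_k),\bigtriangleup_k^i\rangle\,[\bigtriangleup_k^i]^{-}$ carries no $1/c_k$; Cauchy--Schwarz, the subgradient bound $\|\partial f^{i}(u^i_k)\|_2\le L$ (Hypothesis \ref{Hyp1}(b)), and the dither bounds $|\bigtriangleup_k^{ip}|<a$, $|1/\bigtriangleup_k^{ip}|<b$ (Hypothesis \ref{Hyp3}(a)) then give a $c_k$-free bound on $\|S^i_k\|_2$ expressed through $L$ and the dither/dimension constants.

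To bound $N^i_k$, the dither $\bigtriangleup_k^i$ and the noise $\epsilon^i_k$ are mutually independent (Hypothesis \ref{Hyp3}(c)), so the expectations factor; then $\|[\bigtriangleup_k^i]^{-}\|_2\le\sqrt{m}\,b$ (Hypothesis \ref{Hyp3}(a)) and $\mathbb{E}\|\epsilon^i_k\|_2\le(\mathbb{E}\|\epsilon^i_k\|_2^{2})^{1/2}\le e$ (Jensen together with Hypothesis \ref{Hyp3}(d)) yield $\mathbb{E}\|N^i_k\|_2\le\tfrac{1}{2c_k}\cdot e\cdot\sqrt{m}\,b$, and likewise $\mathbb{E}\|N^i_k\|_2^{2}\le\tfrac{1}{4c_k^{2}}\cdot e^{2}\cdot m b^{2}$, which are exactly the required bounds. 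The main obstacle is the estimate of $S^i_k$: the intermediate point $u^i_k$ supplied by Lebourg's theorem depends on the random dither $\bigtriangleup_k^i$, so one cannot invoke the unbiasedness $\mathbb{E}[1/\bigtriangleup_k^{ip}]=0$ to simplify the expression and must instead use the uniform subgradient bound over the entire segment, taking care to preserve the $2c_k$ cancellation so that the $1/c_k$ growth enters only through the noise term $N^i_k$.
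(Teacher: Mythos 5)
Your decomposition into the deterministic randomized-difference part and the noise part, with the Lipschitz/Lebourg bound on the former and the independence-plus-Lyapunov (Jensen) bound on the latter, is exactly the paper's argument in Appendix A, including the expansion of the second moment into a square term, a cross term handled by Cauchy--Schwarz, and a pure noise term. One shared caveat: the deterministic part is really bounded by $L\|\bigtriangleup_k^i\|_2\,\|[\bigtriangleup_k^i]^{-}\|_2\leqslant Lmab$ rather than by $L$ itself (and $\|\bigtriangleup_k^i\|_2\|[\bigtriangleup_k^i]^{-}\|_2\geqslant m$ always), a constant-factor looseness present in the paper's proof as well, which you at least flag by saying the bound is ``expressed through $L$ and the dither/dimension constants.''
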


\begin{proof}
See Appendix.
\end{proof}

Define $\bar{\xi}_{k+1}\triangleq\frac{1}{n}\sum\limits_{i=1}^n\xi^i_{k+1}$ as the average of states $\xi^i_{k+1}$'s.
\begin{lemma}\label{Lem6}
 If Hypotheses \ref{Hyp1}-\ref{Hyp4} hold, then for any $ i\in\mathcal{N}$,
$$
\lim_{k \rightarrow \infty}\mathbb{E}\big\|\xi^{i}_{k}-\bar{\xi}_{k}\big \|_{2}=0,
$$
i.e., the consensus being achieved in $L_{1}$ for estimates generated from the distributed subgradient-free algorithm \ref{Alg1}.
\end{lemma}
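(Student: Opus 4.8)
The plan is to derive a recursion for $\mathbb{E}\|\xi^i_k - \bar\xi_k\|_2$ by unrolling the update rule through the transition matrix $\Psi_{k,s}$, and then invoke Lemma \ref{Lem4} together with the summability conditions in Hypothesis \ref{Hyp4}. First I would write the descent-plus-projection step as $\xi^i_{k+1} = x^i_k - \iota_k d^i_k + r^i_k$, where $r^i_k \triangleq P_{X_i}(x^i_k - \iota_k d^i_k) - (x^i_k - \iota_k d^i_k)$ is the projection error. Since each $\xi^j_k \in X_i$ whenever $j$ shares the constraint — more carefully, using $x^i_k = \sum_j w^{ij}_k \xi^j_k$ and that $\xi^i_k \in X_i$, together with Lemma \ref{Lem2} and Hypothesis \ref{Hyp1} — one bounds $\|r^i_k\|_2 \leq \iota_k \|d^i_k\|_2$ (the projection moves the point no further than the pre-projection displacement from a feasible point). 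Hence $\xi^i_{k+1} = x^i_k + e^i_k$ with $\|e^i_k\|_2 \leq 2\iota_k \|d^i_k\|_2$, and by Lemma \ref{Lem5}, $\mathbb{E}\|e^i_k\|_2 \leq 2\iota_k\big(L + \tfrac{\sqrt{m}be}{2c_k}\big)$.

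Next I would exploit double stochasticity of $W_k$: because $\sum_i w^{ij}_k = 1$, the average satisfies $\bar\xi_{k+1} = \bar\xi_k + \frac{1}{n}\sum_{i=1}^n e^i_k$. Iterating the consensus recursion $\xi^i_{k+1} = \sum_j w^{ij}_k \xi^j_k + e^i_k$ back to time $1$ gives
\begin{align*}
\xi^i_{k+1} = \sum_{j=1}^n \big[\Psi_{k,1}\big]_{ij}\,\xi^j_1 + \sum_{s=2}^{k}\sum_{j=1}^n \big[\Psi_{k,s}\big]_{ij}\,e^j_{s-1} + e^i_k,
\end{align*}
and subtracting the analogous expression for $\bar\xi_{k+1}$ (which uses $\frac1n$ in place of $[\Psi_{k,s}]_{ij}$) yields
\begin{align*}
\big\|\xi^i_{k+1} - \bar\xi_{k+1}\big\|_2 \leq \sum_{j=1}^n \Big|\big[\Psi_{k,1}\big]_{ij} - \tfrac1n\Big|\,\|\xi^j_1\|_2 + \sum_{s=2}^{k}\sum_{j=1}^n \Big|\big[\Psi_{k,s}\big]_{ij} - \tfrac1n\Big|\,\|e^j_{s-1}\|_2 + \Big\|e^i_k - \tfrac1n\textstyle\sum_j e^j_k\Big\|_2.
\end{align*}
Taking expectations and applying Lemma \ref{Lem4} gives $\mathbb{E}\|\xi^i_{k+1}-\bar\xi_{k+1}\|_2 \leq n\lambda\beta^{k-1} C_0 + 2\lambda\sum_{s=2}^k \beta^{k-s}\sum_j \iota_{s-1}\big(L+\tfrac{\sqrt m be}{2c_{s-1}}\big) + 4\iota_k\big(L+\tfrac{\sqrt m be}{2c_k}\big)$ for a constant $C_0$ bounding the (bounded) initial states.

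Finally I would show the right-hand side tends to $0$. The first term vanishes geometrically. The last term vanishes because $\iota_k \to 0$ and $\iota_k/c_k \to 0$ (the latter follows from $\sum \iota_k^2/c_k^2 < \infty$). For the convolution term, set $a_s \triangleq \iota_{s}\big(L + \tfrac{\sqrt m be}{2c_{s}}\big) \leq \iota_s L + \tfrac{\sqrt m be}{2}\cdot\tfrac{\iota_s}{c_s}$; by Hypothesis \ref{Hyp4}(a),(c) one has $\sum_s \iota_s < \infty$ and $\sum_s \iota_s^2/c_s^2 < \infty$, so $a_s \to 0$, and a standard fact about discrete convolutions with a geometric kernel ($\beta<1$) — namely $\sum_{s} \beta^{k-s} a_{s-1} \to 0$ whenever $a_s \to 0$ — finishes the argument. (Alternatively, since $\sum_s \iota_s/c_s$ need not converge, one splits the sum at an index $N$ beyond which $a_{s} < \varepsilon$ and bounds the tail by $\varepsilon/(1-\beta)$ and the head by $\beta^{k-N}\sum_{s\leq N} a_{s-1}/(1) \to 0$.) The main obstacle is the bookkeeping in the second step — correctly handling the projection error term so that it is controlled by $\iota_k\|d^i_k\|_2$ rather than something unbounded, and keeping the doubly-stochastic cancellation exact when peeling off the final, not-yet-averaged $e^i_k$ term; once those are in place the limit is routine.
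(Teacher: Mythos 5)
Your proposal is correct and follows essentially the same route as the paper's proof: you define the per-step perturbation (the paper's $p^i_{k+1}$, your $e^i_k$), bound it by $O(\iota_k\|d^i_k\|_2)$ via the projection's nonexpansiveness, unroll the recursion through $\Psi_{k,s}$, subtract the doubly-stochastic average, and then apply Lemma \ref{Lem4}, Lemma \ref{Lem5}, and the geometric-convolution fact with $\iota_k\to0$ and $\iota_k/c_k\to0$. The only differences are cosmetic (a factor of $2$ in the perturbation bound, and your explicit $\varepsilon$-splitting proof of the convolution limit where the paper cites Lemma 3.1 of the Ram--Nedi\'c--Veeravalli reference).
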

\begin{proof}
See Appendix.
\end{proof}
\begin{lemma}\label{Lem7}
With Hypotheses \ref{Hyp1} and \ref{Hyp3}, we have
\begin{align*}
 \sum_{k=1}^{\infty}\frac{\iota_{k}}{c_k}\big\|\xi^i_{k}-\bar{\xi}_{k}\big\|_{2}<\infty~~\mathrm{a.s.},\quad i=1,2,\ldots,n.
\end{align*}
\end{lemma}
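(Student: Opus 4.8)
\textbf{Proof proposal for Lemma \ref{Lem7}.}

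The plan is to control the series $\sum_{k=1}^{\infty}\frac{\iota_k}{c_k}\|\xi^i_k-\bar\xi_k\|_2$ via a moment/summability argument rather than a pathwise estimate. The natural strategy is to pass to expectations and invoke the monotone convergence theorem: since all summands are nonnegative,
\begin{align*}
\mathbb{E}\sum_{k=1}^{\infty}\frac{\iota_k}{c_k}\big\|\xi^i_k-\bar\xi_k\big\|_2=\sum_{k=1}^{\infty}\frac{\iota_k}{c_k}\,\mathbb{E}\big\|\xi^i_k-\bar\xi_k\big\|_2,
\end{align*}
so it suffices to show that the right-hand side is finite; then $\sum_k \frac{\iota_k}{c_k}\|\xi^i_k-\bar\xi_k\|_2<\infty$ a.s.\ because a nonnegative random variable with finite expectation is finite a.s. Thus the whole lemma reduces to a quantitative version of Lemma \ref{Lem6}: I need not just $\mathbb{E}\|\xi^i_k-\bar\xi_k\|_2\to 0$, but a rate good enough that, weighted by $\iota_k/c_k$, the series converges.

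The key step is therefore to derive an explicit bound on $\mathbb{E}\|\xi^i_k-\bar\xi_k\|_2$. Following the standard consensus-error recursion (and presumably exactly what is done inside the proof of Lemma \ref{Lem6}), I would unroll the dynamics: using the doubly stochastic transition matrix $\Psi_{k,s}$, the projection non-expansiveness from Lemma \ref{Lem2}(b), and the averaging identity, one obtains a bound of the schematic form
\begin{align*}
\mathbb{E}\big\|\xi^i_k-\bar\xi_k\big\|_2\leqslant C\lambda\beta^{k-1}\sum_{j}\mathbb{E}\big\|\xi^j_1\big\|_2+C\sum_{s=1}^{k-1}\lambda\beta^{k-1-s}\,\iota_s\,\max_j\mathbb{E}\big\|d^j_s\big\|_2+C\,\iota_{k-1}\max_j\mathbb{E}\big\|d^j_{k-1}\big\|_2,
\end{align*}
where the geometric decay $\beta<1$ comes from Lemma \ref{Lem4} and the projection-error terms are absorbed using boundedness of the $X_i$. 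By Lemma \ref{Lem5}, $\mathbb{E}\|d^j_s\|_2\leqslant L+\frac{\sqrt m\,be}{2c_s}$, so each consensus error is dominated by a constant times a geometric-weighted convolution of the sequence $\iota_s(1+1/c_s)=\iota_s+\iota_s/c_s$. Multiplying by $\iota_k/c_k$ and summing in $k$, the geometric factor lets me swap the order of summation (a discrete Young/Tonelli step): $\sum_k \frac{\iota_k}{c_k}\sum_{s<k}\beta^{k-1-s}a_s\leqslant \frac{1}{1-\beta}\sup_k\frac{\iota_k}{c_k}\cdot\sum_s a_s$ after using monotonicity of $\iota_k/c_k$, or more carefully by noting $\frac{\iota_k}{c_k}$ is bounded (it tends to zero is not assumed, but it is bounded since $\sum \iota_k^2/c_k^2<\infty$ forces $\iota_k/c_k\to 0$). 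The remaining scalar series to bound are then $\sum_k \frac{\iota_k}{c_k}(\iota_k+\frac{\iota_k}{c_k})=\sum_k\frac{\iota_k^2}{c_k}+\sum_k\frac{\iota_k^2}{c_k^2}$, both of which are finite under Hypothesis \ref{Hyp4}(c) (the first because $\frac{\iota_k^2}{c_k}\leqslant \iota_k c_k$ for $c_k\leqslant 1$, using $\sum\iota_k c_k<\infty$, or directly because $\iota_k/c_k\to0$ and $\sum\iota_k c_k<\infty$ — I'd pin down the cleanest inequality), together with the geometric term $\sum_k \frac{\iota_k}{c_k}\beta^{k}<\infty$.

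The main obstacle I anticipate is bookkeeping the interchange of the two summations so that the convolution bound is genuinely finite: one must be careful that the weight $\iota_k/c_k$ multiplying the tail does not destroy summability, which is precisely why Hypothesis \ref{Hyp4}(c) is stated with both $\sum\iota_k^2/c_k^2<\infty$ and $\sum\iota_k c_k<\infty$. A secondary technical point is handling the initial-condition and projection-error contributions: these are geometric in $k$ (for the initial term) or directly bounded by $\iota_{k-1}\mathbb{E}\|d^j_{k-1}\|_2$, and in all cases multiplying by $\iota_k/c_k$ and summing stays within the same three scalar series already shown to converge. Once $\sum_k \frac{\iota_k}{c_k}\mathbb{E}\|\xi^i_k-\bar\xi_k\|_2<\infty$ is established, the Tonelli/monotone-convergence conclusion gives the claimed almost-sure summability for every $i=1,\dots,n$. $\hfill\blacksquare$
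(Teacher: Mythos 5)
The paper itself gives no detailed argument here: it simply defers to Theorem 6.1 of \cite{ram2010distributed} with subgradients replaced by the randomized differences. Your proposal supplies an actual proof using the same core ingredients the paper relies on elsewhere (the transition-matrix estimate of Lemma \ref{Lem4}, the unrolled consensus recursion as in the proof of Lemma \ref{Lem6}, and the moment bound $\mathbb{E}\|d^i_k\|_2\leqslant L+\tfrac{\sqrt{m}be}{2c_k}$ of Lemma \ref{Lem5}), but you route the almost-sure conclusion through $L^1$ summability plus Tonelli rather than a pathwise argument. That choice is sensible and arguably cleaner here, since $\|d^i_k\|_2$ is not pathwise bounded (the noise $\epsilon^i_k$ only has bounded second moment), so a literal pathwise transplant of the reference would need extra care. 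Note also that the lemma as stated additionally needs Hypotheses \ref{Hyp2} and \ref{Hyp4}, which you correctly use.

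There is, however, one step that as written would fail: the displayed interchange bound
$\sum_k \frac{\iota_k}{c_k}\sum_{s<k}\beta^{k-1-s}a_s\leqslant \frac{1}{1-\beta}\,\sup_k\frac{\iota_k}{c_k}\cdot\sum_s a_s$
is useless here, because $a_s=\iota_s\,\mathbb{E}\|d^j_s\|_2$ contains the term $\tfrac{\sqrt{m}be}{2}\,\tfrac{\iota_s}{c_s}$ and Hypothesis \ref{Hyp4}(c) explicitly assumes $\sum_s\frac{\iota_s}{c_s}=\infty$; so the right-hand side is infinite. What you need (and what your final list of scalar series implicitly presupposes) is the convolution estimate
\begin{align*}
\sum_{k}\frac{\iota_k}{c_k}\sum_{s< k}\beta^{k-1-s}a_s=\sum_{s}a_s\sum_{k> s}\frac{\iota_k}{c_k}\beta^{k-1-s}\leqslant \frac{1}{1-\beta}\sum_{s}a_s\,\frac{\iota_s}{c_s},
\end{align*}
which requires $\iota_k/c_k$ to be non-increasing (or comparable up to a constant along the tail). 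That monotonicity is not part of Hypothesis \ref{Hyp4}, though it holds for the canonical choice $\iota_k=k^{-(1+\epsilon)}$, $c_k=k^{-\delta}$ with $1+\epsilon>\delta$, and should be stated as an additional (mild) assumption or verified for the chosen sequences. Once the weight $\iota_s/c_s$ lands on $a_s$, the resulting series $\sum_s\frac{\iota_s^2}{c_s^2}$ and $\sum_s\frac{\iota_s^2}{c_s}\leqslant(\sup_s\frac{\iota_s}{c_s})\sum_s\iota_s$ are finite by Hypothesis \ref{Hyp4}, and your Tonelli step then correctly yields the almost-sure conclusion. So the gap is a repairable misstatement of one inequality rather than a flaw in the strategy, but the proof does not close without fixing it.
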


\begin{proof}
The proof of lemma \ref{Lem7} can follow the proof of Theorem 6.1 in  \cite{ram2010distributed} by replacing the subgradients with the randomized differences and noticing the boundedness of $\{\bigtriangleup_k\}_{k\geq0}$ in Hypothesis \ref{Hyp3}, the Lipschitz of $f_i(\cdot)$ guaranteed by Remark \ref{Rem2} and the step size condition in Hypothesis \ref{Hyp4}. Thus the detailed proof is omitted.	
\end{proof}

Then it is time to prove Theorem \ref{The1}.

\begin{proof}[Proof of Theorem \ref{The1}]
 According to Lemma \ref{Lem6},
$\lim_{k\rightarrow\infty}\mathbb{E}\big\|\xi^i_{k+1}-\bar{\xi}_{k+1}\big\|_{2}=0$ holds. Then, by Fatou's Lemma (Theorem 1.5.4 \cite{durrett2010probability}), the following inequality takes place
\begin{align}\label{4-2}
0\leqslant\mathbb{E}\Big[\underset{k\rightarrow\infty}{\liminf}\big\|\xi^i_{k+1}-\bar{\xi}_{k+1}\big\|_{2}\Big]\leqslant\underset{k\rightarrow\infty}{\liminf}\mathbb{E}\big\|\xi^i_{k+1}
-\bar{\xi}_{k+1}\big\|_{2}=0,
\end{align}
which yields $\mathbb{E}\Big[\underset{k\rightarrow\infty}{\liminf}\big\|\xi^i_{k+1}-\bar{\xi}_{k+1}\big\|_{2}\Big]=0$. Therefore,
\begin{align}\label{4-3}
\underset{k\rightarrow\infty}{\liminf} \big\|\xi^i_{k+1}-\bar{\xi}_{k+1}\big\|_{2}=0.
\end{align}
By Lemma \ref{Prop1}, we have
\begin{align}\label{4-4}
\sum_{i=1}^{n}\big\|\xi^i_{k+1}-\bar{\xi}_{k+1}\big\|_{2}^{2}\leqslant \sum_{i=1}^{n}\big\|\xi^{i}_{k+1}-\bar{\xi}_{k}\big\|_{2}^{2}.
\end{align}

According to Lemma \ref{Lem2}(b), $\big\|\xi^i_{k+1}-\bar{\xi}_{k}\big\|_{2}^2\leqslant\big\|\hat{\xi}^{i}_{k+1}-\bar{\xi}_{k}\big\|_{2}^2$. Based on the above two inequalities and (\ref{3-3-1}), we have
\begin{align}\label{4-5}
\nonumber &\sum_{i=1}^{n}  \big \|\xi^i_{k+1}-\bar{\xi}_{k+1}\big\|_{2}^{2}\leqslant \sum_{i=1}^{n} \big \|\hat{\xi}^{i}_{k+1}-\bar{\xi}_{k}\big\|_{2}^2
\notag\\\leqslant  &\sum_{i=1}^{n} \sum_{j=1}^{n}w^{ij}_{k}\big\|\xi^j_{k}-\bar{\xi}_{k}\big\|_{2}^2+\iota^2_k\sum_{i=1}^{n}\big\|d^{i}_{k}\big\|_{2}^2+2\iota_{k} \sum_{i=1}^{n}\big \|d^{i}_{k}\big\|_{2}\sum_{j=1}^{n}w^{ij}_{k}\big\|\xi^j_{k}-\bar{\xi}_{k}\big\|_{2}.
\end{align}

From Hypothesis \ref{Hyp2}(b), $\sum_{i=1}^{n}\sum_{j=1}^{n}w^{ij}_{k}\big\|\xi^{j}_{k}-\bar{\xi}_{k}\big\|_{2}^{2}=\sum_{i=1}^{n}\big\|\xi^{i}_{k}-\bar{\xi}_{k}\big\|_{2}^{2}$. Thus, by taking the conditional expectation to both sides of \eqref{4-5} and noticing that $\xi_k^j$ is $F_k$-measurable, we have
\begin{align}\label{4-6}
\sum_{i=1}^{n}\mathbb{E}\Big[\big \|\xi^i_{k+1}-\bar{\xi}_{k+1}\big\|_{2}^{2}\big|F_k\Big]
\leqslant  \sum_{i=1}^{n}\big \|\xi^i_{k}-\bar{\xi}_{k}\big\|_{2}^2+ \sum_{i=1}^{n}A_{k}^{i}(1)+\sum_{i=1}^{n}A_{k}^{i}(2),
\end{align}
where $A_{k}^{i}(1)=\iota^2_k \mathbb{E}\left[\big\|d^{i}_{k}\big\|_{2}^2\big|F_k\right]$ and $A_{k}^{i}(2)=2\iota_{k}  \mathbb{E}\left[\big\|d^{i}_{k}\big\|_{2}\big| F_k\right]\sum_{j=1}^{n}\big\|\xi^j_{k}-\bar{\xi}_{k}\big\|_{2}$.

Due to Hypothesis \ref{Hyp4} and Lemma \ref{Lem5}, $\sum_{k=1}^{\infty}A_{k}^{i}(1)<\infty$ a.s.  From Lemma \ref{Lem7}, $\sum_{k=1}^{\infty}\frac{\iota_{k}}{c_k}\big\|\xi^j_{k}-\bar{\xi}_{k}\big\|_{2}<\infty$ with probability $1$ and then $\sum_{k=1}^{\infty}A_{k}^{i}(2)<\infty$. Therefore, $\lim_{k\rightarrow\infty}\big\|\xi^i_{k}-\bar{\xi}_{k}\big\|_{2}^2=0$ a.s. by Lemma \ref{Lem3}.
\end{proof}

Then we have the following convergence results.

\begin{theorem}\label{The2}
Set $\iota_{k}=\frac{1}{k^{1+\epsilon}}$ and $c_{k}=\frac{1}{k^{\delta}}$ with $ \frac{1}{2}+\epsilon>\delta\geqslant \epsilon>0$. Under Hypotheses \ref{Hyp1}-\ref{Hyp4}, all the
sequences $\{ \xi^i_{k}\},\;i\in\mathcal{N}$ generated from distributed subgradient-free algorithm with two-sided randomized differences in Algorithm \ref{Alg1} converge to the optimal solution $\xi^{*}$ almost surely.
\end{theorem}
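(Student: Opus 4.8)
The plan is to reduce the statement to the almost sure convergence of the running average $\bar\xi_{k}$ to $\xi^{*}$ and then to run a Robbins--Siegmund type argument (Lemma~\ref{Lem3}) on $U_{k}:=\sum_{i=1}^{n}\big\|\xi^{i}_{k}-\xi^{*}\big\|_{2}^{2}$. By Theorem~\ref{The1} the estimates reach consensus almost surely; since each $X_{i}$ is closed, $\xi^{i}_{k}\in X_{i}$, and $\xi^{i}_{k}-\bar\xi_{k}\to0$, every limit point of $\{\bar\xi_{k}\}$ lies in $X$, so it suffices to control $U_{k}$. First I would derive the basic recursion: using $\xi^{*}\in X\subseteq X_{i}$ and nonexpansiveness of $P_{X_{i}}$ (Lemma~\ref{Lem2}(b)), expanding $\hat\xi^{i}_{k+1}=x^{i}_{k}-\iota_{k}d^{i}_{k}$, and exploiting that $W_{k}$ is doubly stochastic so that $\sum_{i}\big\|x^{i}_{k}-\xi^{*}\big\|_{2}^{2}=\sum_{i}\big\|\sum_{j}w^{ij}_{k}(\xi^{j}_{k}-\xi^{*})\big\|_{2}^{2}\le\sum_{j}\big\|\xi^{j}_{k}-\xi^{*}\big\|_{2}^{2}$, one gets
\begin{align*}
U_{k+1}\le U_{k}-2\iota_{k}\sum_{i=1}^{n}\big\langle d^{i}_{k},\,x^{i}_{k}-\xi^{*}\big\rangle+\iota_{k}^{2}\sum_{i=1}^{n}\big\|d^{i}_{k}\big\|_{2}^{2}.
\end{align*}
Taking $\mathbb{E}[\,\cdot\mid F_{k}]$ and using the $F_{k}$-measurability of $x^{i}_{k}$, Lemma~\ref{Lem5} together with Hypothesis~\ref{Hyp4}(c) disposes of the quadratic term, since $\iota_{k}^{2}\sum_{i}\mathbb{E}\big[\|d^{i}_{k}\|_{2}^{2}\mid F_{k}\big]\le n\iota_{k}^{2}\big(L+\sqrt{m}\,be/(2c_{k})\big)^{2}$ is summable by $\sum_{k}\iota_{k}^{2}/c_{k}^{2}<\infty$ and $\sum_{k}\iota_{k}^{2}<\infty$.

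The crux is the bias of the randomized difference, i.e.\ the structure of $\mathbb{E}[d^{i}_{k}\mid F_{k}]$. Because the dither $\bigtriangleup^{i}_{k}$ is independent of $F_{k}$ and of the noise, satisfies $\mathbb{E}[1/\bigtriangleup^{ip}_{k}]=0$, and the noise is conditionally zero-mean (Hypothesis~\ref{Hyp3}), the observation-noise contribution to $\mathbb{E}[d^{i}_{k}\mid F_{k}]$ vanishes. For the remaining function-difference part I would apply Lebourg's mean value theorem (Lemma~\ref{Lem1}) to $f^{i}(x^{i}_{k}\pm c_{k}\bigtriangleup^{i}_{k})$ and combine it with $\mathbb{E}[1/\bigtriangleup^{ip}_{k}]=0$, the independence of distinct dither coordinates, and the upper semicontinuity of $\partial f^{i}$ near $x^{i}_{k}$, to write
$$\mathbb{E}\big[d^{i}_{k}\mid F_{k}\big]=g^{i}_{k}+r^{i}_{k},\qquad g^{i}_{k}\in\partial f^{i}(x^{i}_{k}),\quad\big\|g^{i}_{k}\big\|_{2}\le L,$$
with $\{r^{i}_{k}\}$ uniformly bounded (and $r^{i}_{k}\to0$ as $c_{k}\to0$, although only the boundedness is used below). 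The subgradient inequality then yields $\big\langle\mathbb{E}[d^{i}_{k}\mid F_{k}],\,x^{i}_{k}-\xi^{*}\big\rangle\ge f^{i}(x^{i}_{k})-f^{i}(\xi^{*})-D\|r^{i}_{k}\|_{2}$, where $D$ bounds the diameter of $\mathrm{conv}\big(\bigcup_{i}X_{i}\big)$. Summing over $i$, using the Lipschitz estimate of Remark~\ref{Rem2} to pass from $x^{i}_{k}$ to $\bar\xi_{k}$, and replacing $\bar\xi_{k}$ by $P_{X}(\bar\xi_{k})$ to handle the possibility $\bar\xi_{k}\notin X$ (the extra error being at most $nL\|\bar\xi_{k}-P_{X}(\bar\xi_{k})\|_{2}$, which is bounded), I would arrive at
\begin{align*}
\mathbb{E}\big[U_{k+1}\mid F_{k}\big]\le U_{k}-2\iota_{k}\big(F(P_{X}(\bar\xi_{k}))-F(\xi^{*})\big)+w_{k},
\end{align*}
where $w_{k}\ge0$ is $F_{k}$-measurable and $\sum_{k}w_{k}<\infty$ a.s.: the terms $\sum_{k}\iota_{k}\|r^{i}_{k}\|_{2}$ and $\sum_{k}\iota_{k}\|\bar\xi_{k}-P_{X}(\bar\xi_{k})\|_{2}$ are finite because the factors multiplying $\iota_{k}$ stay bounded and $\sum_{k}\iota_{k}<\infty$ (Hypothesis~\ref{Hyp4}(a)), while $\sum_{k}\iota_{k}\sum_{i}\|x^{i}_{k}-\bar\xi_{k}\|_{2}<\infty$ a.s.\ follows from Lemma~\ref{Lem7} via $x^{i}_{k}-\bar\xi_{k}=\sum_{j}w^{ij}_{k}(\xi^{j}_{k}-\bar\xi_{k})$ and $\iota_{k}\le\iota_{k}/c_{k}$ for large $k$.

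Since $v_{k}:=2\iota_{k}\big(F(P_{X}(\bar\xi_{k}))-F(\xi^{*})\big)\ge0$ and $U_{k}\ge0$ is bounded below, Lemma~\ref{Lem3} (with $\eta_{k}\equiv0$) gives that $U_{k}$ converges a.s.\ and $\sum_{k}v_{k}<\infty$ a.s.; by Theorem~\ref{The1}, $\|\bar\xi_{k}-\xi^{*}\|_{2}$ is then convergent a.s., and it remains to show that this limit is zero. This is the point at which the excitation condition $\sum_{k}\iota_{k}/c_{k}=\infty$ and the uniqueness of $\xi^{*}$ enter: along a subsequence on which $\bar\xi_{k}$ approaches a limit point $\hat\xi\in X$, if $\hat\xi\neq\xi^{*}$ then, by continuity of $F$ and $P_{X}$ and optimality of $\xi^{*}$, $F(P_{X}(\bar\xi_{k}))-F(\xi^{*})$ stays bounded away from zero there, which together with the divergence of the matching portion of $\sum_{k}\iota_{k}/c_{k}$ would be incompatible with $\sum_{k}v_{k}<\infty$; hence every limit point of $\{\bar\xi_{k}\}$ equals $\xi^{*}$, and since $\|\bar\xi_{k}-\xi^{*}\|_{2}$ converges, $\bar\xi_{k}\to\xi^{*}$ a.s., so $\xi^{i}_{k}\to\xi^{*}$ a.s.\ for every $i$.

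The step I expect to be the main obstacle is the bias analysis above: in contrast to the smooth (distributed) Kiefer--Wolfowitz setting, $\mathbb{E}[d^{i}_{k}\mid F_{k}]$ need not approximate a subgradient at any prescribed rate, so the argument must rely only on the boundedness of $r^{i}_{k}$ together with $\sum_{k}\iota_{k}<\infty$ --- which, written in the effective-step parametrisation $\gamma_{k}=\iota_{k}/c_{k}$, is exactly the classical requirement $\sum_{k}\gamma_{k}c_{k}<\infty$ controlling the perturbation bias --- in order to absorb the bias into the summable remainder; a second delicate point is the final step, where the full list of conditions in Hypothesis~\ref{Hyp4} and the uniqueness of $\xi^{*}$ must be combined to pass from convergence of $U_{k}$ to $U_{k}\to0$.
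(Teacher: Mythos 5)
Your argument coincides with the paper's up to and including the Robbins--Siegmund step: the recursion for $U_k=\sum_i\|\xi^i_k-\xi^*\|_2^2$, the treatment of the quadratic term via Lemma~\ref{Lem5} and $\sum_k\iota_k^2/c_k^2<\infty$, and the bias decomposition of $\mathbb{E}[d^i_k\mid F_k]$ (your $g^i_k+r^i_k$ with $r^i_k$ only bounded) are exactly the content of Lemma~\ref{Lem8} and Lemma~\ref{Lem9}, which likewise conclude only that $U_k$ converges a.s.\ to a nonnegative random variable. The genuine gap is in your final step. The descent term you feed into Lemma~\ref{Lem3} is $v_k=2\iota_k\big(F(P_X(\bar\xi_k))-F(\xi^*)\big)$, whose weight is $\iota_k$, not $\iota_k/c_k$: the $1/c_k$ factor in $d^i_k$ lives entirely in the conditionally zero-mean noise part, while the function-difference part of $\mathbb{E}[d^i_k\mid F_k]$ is $O(1)$ by the Lipschitz bound. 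Since Hypothesis~\ref{Hyp4}(a) imposes $\sum_k\iota_k<\infty$ and $F(P_X(\bar\xi_k))-F(\xi^*)$ is bounded (bounded constraint sets, Lipschitz $F$), the conclusion $\sum_k v_k<\infty$ is automatically true and carries no information; in particular it is perfectly consistent with $F(P_X(\bar\xi_k))-F(\xi^*)$ staying bounded away from zero forever. The ``excitation'' series $\sum_k\iota_k/c_k=\infty$ never enters the recursion in front of $v_k$, so the contradiction you invoke does not materialize, and your argument cannot rule out convergence of $U_k$ to a strictly positive limit.

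The paper closes this hole by a different mechanism: Lemma~\ref{Lem10} establishes a mean-square bound $\sum_i\mathbb{E}\|\xi^i_k-\xi^*\|_2^2=O\big(\max\{k^{-\epsilon},k^{-(1+2\epsilon-2\delta)}\}\big)$ for the specific step sizes $\iota_k=k^{-(1+\epsilon)}$, $c_k=k^{-\delta}$, and the proof of Theorem~\ref{The2} then combines a.s.\ convergence of $U_k$ (Lemma~\ref{Lem9}) with $\mathbb{E}[U_k]\to 0$ (Lemma~\ref{Lem10}) to identify the a.s.\ limit as zero. So the identification of the limit is done through an $L_2$ rate argument, not through divergence of the step-size series; if you want to repair your proof you would need to supply an analogue of that quantitative $L_2$ estimate (or otherwise extract a genuinely divergent-weight descent inequality), since the qualitative supermartingale argument alone is not enough under $\sum_k\iota_k<\infty$.
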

To prove Theorem \ref{The2}, we need the following lemmas, whose proofs are also given in Appendix.

\begin{lemma}\label{Lem8}
If Hypotheses \ref{Hyp1} and \ref{Hyp3} hold, then the following inequalities take place
\begin{align*}
&\mathbb{E}\big[\big\langle d^{i}_{k},\;x^{i}_{k}-\xi^*\big\rangle \big|F_{k}\big] \geqslant f^{i}\big(\bar{\xi}_{k}\big)-f^{i}\big(\xi^*\big)-L\big\|x^{i}_{k}-\bar{\xi}_{k}\big\|_{2}-2c_{k}L\mathbb{E}\big\|\bigtriangleup_{k}^{i}\big\|_{2}-2L,\\
&\mathbb{E}\big[\big\langle d^{i}_{k},\;x^{i}_{k}-\xi^*\big\rangle\big] \geqslant\mathbb{E}\big[ f^{i}\big(\bar{\xi}_{k}\big)-f^{i}\big(\xi^*\big)\big]-L\mathbb{E}\big\|x^{i}_{k}-\bar{\xi}_{k}\big\|_{2}-2c_{k}L\mathbb{E}\big\|\bigtriangleup_{k}^{i}\big\|_{2}-2L,
\end{align*}
where $L$ is an positive constant given in Hypothesis \ref{Hyp1}.
\end{lemma}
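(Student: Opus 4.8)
The plan is to decompose the conditional expectation $\mathbb{E}[\langle d^i_k, x^i_k - \xi^*\rangle \mid F_k]$ by first isolating the observation-noise contribution and then analyzing the remaining ``clean'' randomized-difference term via Lebourg's Mean Value Theorem (Lemma \ref{Lem1}). Writing out $d^i_k = \frac{1}{2c_k}\big(f^i(x^i_k + c_k\triangle^i_k) - f^i(x^i_k - c_k\triangle^i_k) + \epsilon^i_k\big)[\triangle^i_k]^-$, I would split into the deterministic-function part and the noise part $\frac{1}{2c_k}\epsilon^i_k[\triangle^i_k]^-$. First I would show the noise part contributes zero in conditional expectation: $x^i_k$ and $\xi^*$ are $F_k$-measurable, $[\triangle^i_k]^-$ is independent of $F_k$ and of $\epsilon^i_k$ (Hypothesis \ref{Hyp3}(b),(c)), and $\mathbb{E}[\epsilon^i_{k+1}\mid F_k] = 0$ (Hypothesis \ref{Hyp3}(d)); hence by conditioning first on $F_k \vee \sigma(\triangle^i_k)$ the cross term vanishes. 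This is the step where all the independence hypotheses get used, and it needs a careful nested-conditioning argument, but it is routine.

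Next, for the clean term, I would apply Lebourg's Mean Value Theorem to each coordinate-direction difference: for the $p$-th coordinate, $f^i(x^i_k + c_k\triangle^i_k) - f^i(x^i_k - c_k\triangle^i_k) = \langle \partial f^i(u^i_k), 2c_k\triangle^i_k\rangle$ for some $u^i_k$ on the segment between $x^i_k \pm c_k\triangle^i_k$. Plugging this in, the $\frac{1}{2c_k}$ and the $2c_k$ cancel, and the $j$-th component of the clean randomized difference becomes $\langle \partial f^i(u^i_k), \triangle^i_k\rangle \cdot \frac{1}{\triangle^{ij}_k}$. Taking conditional expectation over $\triangle^i_k$ (independent of $F_k$), and using $\mathbb{E}[1/\triangle^{ij}_k] = 0$ together with the mutual independence of distinct coordinates (Hypothesis \ref{Hyp3}(a),(b)), the expectation of this expression should reduce to (an approximation of) $\partial f^i(x^i_k)$ plus error terms controlled by the displacement $\|u^i_k - x^i_k\|_2 \le c_k\|\triangle^i_k\|_2$ and the bound $\|\partial f^i\|_2 \le L$. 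I would then write $\mathbb{E}[\langle d^i_k, x^i_k - \xi^*\rangle \mid F_k] \geq \langle \partial f^i(x^i_k), x^i_k - \xi^*\rangle - (\text{error})$, bound the error by $2c_k L \mathbb{E}\|\triangle^i_k\|_2$ (from the mean-value displacement) plus a term like $2L$ absorbing the boundary/constraint slack via $\|x^i_k - \xi^*\|_2$ boundedness on the compact $X_i$.

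Finally I would convert $\langle \partial f^i(x^i_k), x^i_k - \xi^*\rangle$ into the desired $f^i(\bar\xi_k) - f^i(\xi^*)$ using convexity (Definition \ref{Def1}): $\langle \partial f^i(x^i_k), x^i_k - \xi^*\rangle \geq f^i(x^i_k) - f^i(\xi^*) \geq f^i(\bar\xi_k) - f^i(\xi^*) - L\|x^i_k - \bar\xi_k\|_2$, where the last step is the Lipschitz bound from Remark \ref{Rem2}. Combining gives the first inequality; the second follows by taking total expectation and applying Jensen's inequality where needed. I expect the main obstacle to be making the Lebourg/Taylor-type estimate for $\mathbb{E}[\langle d^i_k, x^i_k - \xi^*\rangle \mid F_k]$ precise — specifically, showing the randomized difference's conditional mean is $\partial f^i(x^i_k)$ up to an $O(c_k\mathbb{E}\|\triangle^i_k\|_2)$ error despite $\partial f^i$ being set-valued and the mean value point $u^i_k$ being random and correlated with $\triangle^i_k$ — while the constant-$L$ slack terms and the noise-cancellation are comparatively mechanical.
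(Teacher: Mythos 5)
Your overall skeleton matches the paper's: split $d^i_k$ into the observation-noise part (which vanishes in conditional expectation by Hypothesis \ref{Hyp3}(c),(d)) and the clean part, apply Lebourg's Mean Value Theorem to the clean part, finish with convexity plus the Lipschitz bound of Remark \ref{Rem2}, and obtain the second inequality by taking total expectation. But the central step of your plan has a genuine gap: you propose to show that the conditional mean of the clean randomized difference equals (a selection of) $\partial f^i(x^i_k)$ up to an $O(c_k\mathbb{E}\|\bigtriangleup^i_k\|_2)$ error and then apply the subgradient inequality at $x^i_k$. For a non-smooth convex $f^i$ this is false: the Lebourg point $u^i_k=x^i_k+\theta^i c_k\bigtriangleup^i_k$ lies within $c_k\|\bigtriangleup^i_k\|_2$ of $x^i_k$, but $\partial f^i(u^i_k)$ can differ from $\partial f^i(x^i_k)$ by as much as $2L$ no matter how small $c_k$ is (take $f(x)=|x|$ in one dimension with $x^i_k=c_k/2$ and $\bigtriangleup^i_k=\pm 1$: the two-sided difference has conditional mean $1/2$ while $f'(x^i_k)=1$). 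Subgradients of convex functions are not continuous in the base point, so the ``mean-value displacement'' does not yield an $O(c_k)$ error in the gradient estimate, and no amount of care with the set-valuedness or the correlation between $u^i_k$ and $\bigtriangleup^i_k$ rescues that claim.

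The paper's proof avoids this by never comparing $\varsigma^i\in\partial f^i(u^i_k)$ with $\partial f^i(x^i_k)$ in the inner product against $x^i_k-\xi^*$. Instead it (i) applies the convexity inequality at the random point $u^i_k$ itself, where $\varsigma^i$ genuinely is a subgradient, obtaining $f^i(u^i_k)-f^i(\xi^*)$, and then transfers the displacement error to \emph{function values}, which are Lipschitz by Remark \ref{Rem2}; this is the source of the terms $-2c_kL\mathbb{E}\|\bigtriangleup^i_k\|_2$ and $-L\|x^i_k-\bar\xi_k\|_2$. And (ii) it handles the matrix deviation $\bigtriangleup^i_k[\bigtriangleup^i_k]^{-\top}-I$ (which is not independent of $\varsigma^i$) by centering $\varsigma^i$ with the $F_k$-measurable vector $\partial f^i(x^i_k)$, whose contribution vanishes because $\mathbb{E}\big[\bigtriangleup^i_k[\bigtriangleup^i_k]^{-\top}-I\big]=0$, and then bounding what remains crudely by the constant $2L$; that non-decaying constant, not any boundary or constraint slack, is where the $-2L$ in the statement comes from. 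You should reroute your argument along these lines: keep your noise-cancellation and your final Lipschitz step from $f^i(x^i_k)$ to $f^i(\bar\xi_k)$, but replace the ``conditional mean approximates $\partial f^i(x^i_k)$'' step, which cannot be made precise, by the convexity-at-the-perturbed-point argument.
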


\begin{lemma}\label{Lem9}
Under Hypotheses \ref{Hyp1}-\ref{Hyp4} and Lemma \ref{Lem8}, all the
sequences $\{ \xi^i_{k}\},\;i\in\mathcal{N}$ generated from distributed subgradient-free algorithm with two-sided randomized differences in Algorithm \ref{Alg1} converge almost surely (a.s.) to a random variable for Problem \ref{2-1}.
\end{lemma}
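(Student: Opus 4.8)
The plan is to apply the Robbins--Siegmund-type supermartingale lemma (Lemma~\ref{Lem3}) to the Lyapunov quantity $h_k\triangleq\sum_{i=1}^n\|\xi^i_k-\xi^*\|_2^2$, where $\xi^*$ is an arbitrary point of the feasible set $X$, and then to combine the resulting a.s.\ convergence of $h_k$ with the consensus of Theorem~\ref{The1} to pin down a single random limit. First I would set up the one-step recursion. Since $\xi^*\in X\subseteq X_i$, nonexpansiveness of $P_{X_i}$ (Lemma~\ref{Lem2}(b)) and the descent step \eqref{3-3-1} give $\|\xi^i_{k+1}-\xi^*\|_2^2\leqslant\|\hat\xi^i_{k+1}-\xi^*\|_2^2=\|x^i_k-\xi^*\|_2^2-2\iota_k\langle d^i_k,x^i_k-\xi^*\rangle+\iota_k^2\|d^i_k\|_2^2$. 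Summing over $i$, convexity of $\|\cdot-\xi^*\|_2^2$ together with double stochasticity of $W_k$ (Hypothesis~\ref{Hyp2}(b)) yields $\sum_i\|x^i_k-\xi^*\|_2^2\leqslant\sum_i\|\xi^i_k-\xi^*\|_2^2$. Taking $\mathbb{E}[\,\cdot\,|F_k]$, bounding the cross term from below via Lemma~\ref{Lem8} and $\mathbb{E}\|d^i_k\|_2^2$ via Lemma~\ref{Lem5}, I obtain
\begin{align*}
\mathbb{E}\big[h_{k+1}\,\big|\,F_k\big]\leqslant h_k-2\iota_k\big(F(\bar\xi_k)-F(\xi^*)\big)+2\iota_kL\sum_{i=1}^n\big\|x^i_k-\bar\xi_k\big\|_2+w_k,
\end{align*}
where $w_k$ collects the terms of orders $\iota_k^2$, $\iota_k^2/c_k^2$, $\iota_kc_k$ and $\iota_k$ coming from Lemma~\ref{Lem5} and the $O(c_k)$, $O(1)$ bias terms in Lemma~\ref{Lem8}; all of these are $F_k$-measurable, nonnegative, and (a.s.)\ summable by Hypothesis~\ref{Hyp4}.

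Next I would dispose of the remaining two terms and invoke Lemma~\ref{Lem3}. Because every $\xi^i_k$ stays in the compact set $X_i$, the averages $\bar\xi_k$ remain in a fixed compact set on which the continuous function $F$ is bounded below, so $F(\bar\xi_k)-F(\xi^*)\geqslant -C$ for a constant $C$, and hence $-2\iota_k(F(\bar\xi_k)-F(\xi^*))\leqslant 2C\iota_k$ is summable by Hypothesis~\ref{Hyp4}(a). For the consensus-error term, $\|x^i_k-\bar\xi_k\|_2\leqslant\sum_j w^{ij}_k\|\xi^j_k-\bar\xi_k\|_2\leqslant\sum_j\|\xi^j_k-\bar\xi_k\|_2$, and since $c_k\to0$ we have $\iota_k\leqslant\iota_k/c_k$ for large $k$, so $\sum_k\iota_k\|x^i_k-\bar\xi_k\|_2<\infty$ a.s.\ by Lemma~\ref{Lem7}. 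Absorbing both into $w_k$ leaves $\mathbb{E}[h_{k+1}|F_k]\leqslant h_k+w_k$ with $\sum_k w_k<\infty$ a.s., and Lemma~\ref{Lem3} (with $\eta_k=v_k=0$) shows that $h_k$ converges almost surely; one may alternatively keep $v_k=2\iota_k(F(\bar\xi_k)-F(\xi^*))^+$ to additionally obtain its summability for later use.

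Finally I would upgrade ``$h_k$ converges'' to ``$\xi^i_k$ converges.'' The identity $h_k=\sum_i\|\xi^i_k-\bar\xi_k\|_2^2+n\|\bar\xi_k-\xi^*\|_2^2$ together with $\sum_i\|\xi^i_k-\bar\xi_k\|_2^2\to0$ a.s.\ (Theorem~\ref{The1}) shows that, for each fixed $\xi^*\in X$, $\|\bar\xi_k-\xi^*\|_2$ converges a.s. Taking a countable dense subset of $X$, intersecting the countably many exceptional null sets, and using boundedness of $\{\bar\xi_k\}$ to extend to arbitrary $\xi^*\in X$, I get an a.s.\ event on which $\|\bar\xi_k-\xi^*\|_2$ converges for \emph{every} $\xi^*\in X$ and $\|\xi^i_k-\bar\xi_k\|_2\to0$. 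On that event, any subsequential limit $z$ of the bounded sequence $\{\bar\xi_k\}$ is also the limit of the corresponding subsequence of each $\{\xi^i_k\}$, hence lies in every closed set $X_i$, i.e.\ $z\in X$; since $\|\bar\xi_k-z\|_2$ converges and tends to $0$ along that subsequence, its limit is $0$, so $\bar\xi_k\to z$, and therefore $\xi^i_k\to z$ for all $i$. The common limit $z$ is a random point of $X$, which is the assertion of Lemma~\ref{Lem9}.

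The main obstacle is precisely this last step: convergence of $\|\bar\xi_k-\xi^*\|_2$ for a single $\xi^*$ does not by itself imply convergence of $\bar\xi_k$, and what closes the gap is the interplay between the ``all $\xi^*\in X$ simultaneously'' statement (which requires the separability/dense-set argument so the exceptional sets merge into one) and the consensus property of Theorem~\ref{The1} (which forces every subsequential limit into $X$). A secondary technical point is that Lemma~\ref{Lem8} controls the cross term only up to the additive $O(1)$ bias $-2L$, which must be offset by $\sum_k\iota_k<\infty$ in Hypothesis~\ref{Hyp4}(a); likewise the $O(1/c_k^2)$ growth of $\mathbb{E}\|d^i_k\|_2^2$ is offset by $\sum_k\iota_k^2/c_k^2<\infty$.
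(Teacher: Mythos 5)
Your core argument coincides with the paper's own proof: both apply the nonexpansiveness of $P_{X_i}$ and the double stochasticity of $W_k$ to get the one-step recursion for $h_k=\sum_{i=1}^n\|\xi^i_k-\xi^*\|_2^2$, bound the cross term from below by Lemma \ref{Lem8} and the quadratic term by Lemma \ref{Lem5}, check summability of the perturbations via Hypotheses \ref{Hyp3}--\ref{Hyp4} and Lemma \ref{Lem7}, and invoke the supermartingale convergence result (Lemma \ref{Lem3}). You differ in two respects. First, you take $\xi^*$ to be an \emph{arbitrary} point of $X$ rather than the minimizer, so the drift term $-2\iota_k(F(\bar\xi_k)-F(\xi^*))$ is no longer sign-definite and must be absorbed into the summable remainder via $\sum_k\iota_k<\infty$ and compactness; the paper instead fixes $\xi^*$ as the optimal solution, so $\sum_i(f^i(\bar\xi_k)-f^i(\xi^*))\geqslant 0$ and this term is simply the $v_k$ of Lemma \ref{Lem3}. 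Second, and more substantively, the paper stops once it has shown that the scalar sequence $\sum_i\|\xi^i_k-\xi^*\|_2^2$ converges a.s.\ to a nonnegative random variable (the identification of the limit being deferred to Theorem \ref{The2}, where the $L^2$ rate of Lemma \ref{Lem10} forces the limit to be zero), whereas you push on with an Opial-type argument --- the orthogonal decomposition $h_k=\sum_i\|\xi^i_k-\bar\xi_k\|_2^2+n\|\bar\xi_k-\xi^*\|_2^2$, consensus from Theorem \ref{The1}, a countable dense subset of $X$ to merge the exceptional null sets, and closedness of the $X_i$ to locate subsequential limits in $X$ --- to conclude that the iterates themselves converge to a random point of $X$. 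This extra step is correct and actually delivers a conclusion closer to the literal wording of Lemma \ref{Lem9} than what the paper's proof establishes; the price is the additional separability argument, which the paper avoids because downstream it only ever uses convergence of the Lyapunov sequence.
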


\begin{lemma}\label{Lem10}
Set $\iota_{k}=\frac{1}{k^{1+\epsilon}}$ and $c_{k}=\frac{1}{k^{\delta}}$ with $ \frac{1}{2}+\epsilon>\delta\geqslant \epsilon>0$. Under Hypotheses \ref{Hyp1}-\ref{Hyp4} and Lemma \ref{Lem8}, we have
\begin{align*}
\sum_{i=1}^{n}\mathbb{E} \big\|\xi^i_{k+1}-\xi^{*}\big\|_{2}^{2}\leqslant  \frac{M_{1}}{k^{1+2\epsilon-2\delta}}+\dfrac{M_{2}}{k^{\epsilon}}+ \frac{M_{3}}{k^{\epsilon+\delta}}+\dfrac{M_{4}}{k^{\epsilon}}.
\end{align*}
where $M_{1},\ldots, M_{4}$ are constants.
\end{lemma}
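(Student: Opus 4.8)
The plan is to run the standard energy argument on the aggregate $a_k:=\sum_{i=1}^n\mathbb{E}\|\xi^i_k-\xi^*\|_2^2$ and to read off the rate by summing the resulting one-step inequality against the explicit step-sizes $\iota_k=k^{-(1+\epsilon)}$, $c_k=k^{-\delta}$. Since $\xi^*\in X\subseteq X_i$ and $P_{X_i}$ is non-expansive (Lemma \ref{Lem2}(b)), the projection step \eqref{3-4} together with the descent step \eqref{3-3-1} gives, for each $i$,
\[
\big\|\xi^i_{k+1}-\xi^*\big\|_2^2\le \big\|x^i_k-\xi^*\big\|_2^2-2\iota_k\big\langle d^i_k,\,x^i_k-\xi^*\big\rangle+\iota_k^2\big\|d^i_k\big\|_2^2 .
\]
Summing over $i$, using $x^i_k=\sum_j w^{ij}_k\xi^j_k$, the double stochasticity of $W_k$ (Hypothesis \ref{Hyp2}(b)) and convexity of $\|\cdot\|_2^2$ to get $\sum_i\|x^i_k-\xi^*\|_2^2\le\sum_i\|\xi^i_k-\xi^*\|_2^2$, then taking expectations and applying the second inequality of Lemma \ref{Lem8} to each $\mathbb{E}\langle d^i_k,x^i_k-\xi^*\rangle$ and Lemma \ref{Lem5} to each $\mathbb{E}\|d^i_k\|_2^2$, I would obtain, with $\sum_i\mathbb{E}[f^i(\bar\xi_k)-f^i(\xi^*)]=\mathbb{E}[F(\bar\xi_k)-F(\xi^*)]$,
\[
a_{k+1}\le a_k-2\iota_k\,\mathbb{E}\big[F(\bar\xi_k)-F(\xi^*)\big]+2L\iota_k\sum_{i=1}^n\mathbb{E}\big\|x^i_k-\bar\xi_k\big\|_2+4L\iota_kc_k\sum_{i=1}^n\mathbb{E}\big\|\bigtriangleup^i_k\big\|_2+4nL\iota_k+n\iota_k^2\Big(L+\tfrac{\sqrt m\,be}{2c_k}\Big)^2 .
\]

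Since $\xi^*$ minimizes $F$, the term $\mathbb{E}[F(\bar\xi_k)-F(\xi^*)]$ is non-negative (the feasibility defect of $\bar\xi_k$ is of order $\sum_i\mathbb{E}\|x^i_k-\bar\xi_k\|_2$ and is already carried by the remaining terms), so it can be dropped for an upper bound. Summing the remaining inequality from index $k+1$ to $\infty$ — which is legitimate because $a_N\to 0$ by the almost sure convergence of Lemma \ref{Lem9}, boundedness of the sets $X_i$ and dominated convergence — yields
\[
a_{k+1}\le 2L\sum_{j\ge k+1}\iota_j\sum_{i=1}^n\mathbb{E}\big\|x^i_j-\bar\xi_j\big\|_2+4L\sum_{j\ge k+1}\iota_jc_j\sum_{i=1}^n\mathbb{E}\big\|\bigtriangleup^i_j\big\|_2+4nL\sum_{j\ge k+1}\iota_j+n\sum_{j\ge k+1}\iota_j^2\Big(L+\tfrac{\sqrt m\,be}{2c_j}\Big)^2 .
\]

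Each tail sum is then estimated with $\iota_j=j^{-(1+\epsilon)}$, $c_j=j^{-\delta}$: $\mathbb{E}\|\bigtriangleup^i_j\|_2\le\sqrt m\,a$ is bounded by Hypothesis \ref{Hyp3}(a) and $\mathbb{E}\|x^i_j-\bar\xi_j\|_2$ is uniformly bounded (all states lie in bounded sets), so $\sum_{j\ge k+1}\iota_j\asymp k^{-\epsilon}$ handles the first and third sums, $\sum_{j\ge k+1}\iota_jc_j\asymp k^{-(\epsilon+\delta)}$ the second, and $\sum_{j\ge k+1}\iota_j^2/c_j^2=\sum_{j\ge k+1}j^{-(2+2\epsilon-2\delta)}\asymp k^{-(1+2\epsilon-2\delta)}$ the dominant part of the fourth — the series converging precisely because $\delta<\tfrac12+\epsilon$, while the leftover $n\iota_j^2L^2$ piece is of the faster order $k^{-(1+2\epsilon)}$ and is absorbed. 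Collecting the four contributions gives the claimed bound with explicit constants $M_1,\dots,M_4$.

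The main obstacle is the passage from the flat recursion $a_{k+1}\le a_k+(\text{summable error})$ to a genuinely decaying bound on $a_{k+1}$: a plain forward telescoping only produces a uniform, non-vanishing bound on $a_k$, so the argument has to exploit the non-negative optimality-gap term jointly with the fact that $a_k\to 0$ — ultimately the super-martingale convergence mechanism of Lemma \ref{Lem3} that underlies Lemma \ref{Lem9} — and one has to check that $\bar\xi_k$ possibly leaving $X$ at finite $k$ only costs the $O\big(\iota_k\sum_i\mathbb{E}\|x^i_k-\bar\xi_k\|_2\big)$ correction already present in the recursion. A secondary technicality is the bound on $\mathbb{E}\|x^i_j-\bar\xi_j\|_2$: the crude uniform bound suffices for the $k^{-\epsilon}$ rate, though Lemmas \ref{Lem6} and \ref{Lem7} would sharpen it.
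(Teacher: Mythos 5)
Your setup — the non-expansiveness of the projection, the double stochasticity of $W_k$, Lemma \ref{Lem8} for the cross term and Lemma \ref{Lem5} for $\mathbb{E}\|d^i_k\|_2^2$ — reproduces the paper's one-step inequality exactly (cf.\ \eqref{E-1}--\eqref{E-3}). The genuine gap is in the step where you pass from the recursion $a_{j+1}\le a_j+w_j-v_j$ (with $a_j:=\sum_i\mathbb{E}\|\xi^i_j-\xi^*\|_2^2$, $w_j\ge0$ the summable error and $v_j\ge0$ the optimality-gap term) to $a_{k+1}\le\sum_{j\ge k+1}w_j$ by ``summing from $k+1$ to $\infty$.'' Summing that inequality over $j=k+1,\dots,N$ telescopes to $a_{N+1}\le a_{k+1}+\sum_{j=k+1}^{N}w_j-\sum_{j=k+1}^{N}v_j$, and letting $N\to\infty$ with $a_{N+1}\to0$ yields only $\sum_{j>k}v_j\le a_{k+1}+\sum_{j>k}w_j$, i.e.\ a \emph{lower} bound on $a_{k+1}$ up to the error tail. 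To bound $a_{k+1}$ \emph{above} by a tail sum you would need the reversed inequality $a_j\le a_{j+1}+w_j$, i.e.\ a lower bound on the increments $a_{j+1}-a_j$, which the energy inequality does not provide: it prevents $a_j$ from increasing too fast but says nothing about how slowly it may decrease. So the claimed decay rate does not follow from your argument.

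That said, you have correctly located the weak point, and the paper's own proof does not resolve it either. The paper sums the one-step inequality \emph{forward} over $s=1,\dots,k$ and telescopes (\eqref{E-4}--\eqref{E-5}); in doing so the surviving initial term $\sum_i\mathbb{E}\|\xi^i_1-\xi^*\|_2^2$ disappears without justification, and the subsequent estimates \eqref{E-6}--\eqref{E-9} bound the \emph{increasing} partial sums $\sum_{s=1}^{k}\iota_s^2/c_s^2$, $\sum_{s=1}^{k}\iota_s$, $\sum_{s=1}^{k}\iota_sc_s$ by quantities decaying in $k$ — bounds that can only be valid for the tails $\sum_{s=k+1}^{\infty}$, since these partial sums converge to positive constants. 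Your diagnosis that ``a plain forward telescoping only produces a uniform, non-vanishing bound'' is exactly right; the tail-summation device simply points the inequality the wrong way and does not repair it. Obtaining an actual rate for $a_k$ requires an additional contraction mechanism (e.g.\ an inequality of the form $a_{k+1}\le(1-\rho_k)a_k+w_k$ coming from strong convexity, or a separate argument converting the finiteness of $\sum_s v_s$ into decay of $a_k$ along the whole sequence), and neither your proposal nor the ingredients invoked by the paper (Lemmas \ref{Lem3}, \ref{Lem5}, \ref{Lem8}, \ref{Lem9}) supplies one.
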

\begin{proof}[Proof of Theorem \ref{The2}]
According to Lemma \ref{Lem9}, $\lim_{k\rightarrow \infty }\sum_{i=1}^{n} \big\|\xi^i_{k}-\xi^{*}\big\|_{2}^{2}$ converges to a non-negative random variable almost surely. According to Lemma \ref{Lem10}, we have
\begin{align*}
\sum_{i=1}^{n}\mathbb{E} \big\|\xi^i_{k}-\xi^{*}\big\|_{2}^{2} \sim O\Big(\max\Big\{\frac{1}{k^{\epsilon}}, \frac{1}{k^{1+2\epsilon-2\delta}}\Big\}\Big).
\end{align*}
 which means that $\{ \xi^i_{k}\},\;i\in\mathcal{N}$ generated from the distributed subgradient-free algorithm of two-sided randomized differences in Algorithm \ref{Alg1} converge to the optimal solution $\xi^{*}$ in $L_{2}$. Therefore,
\begin{align}\label{4-24}
\lim_{k\rightarrow \infty }\sum_{i=1}^{n} \big\|\xi^i_{k}-\xi^{*}\big\|_{2}^{2}=0, \quad \mathrm{a.s.}
\end{align}
\end{proof}

\begin{remark}
For centralized KW algorithms, it usually assumes that the objective function is smooth and by the random difference technique, the optimization problem is transformed into the root-searching of the zero of the corresponding gradient function. In this paper, since algorithms are formulated into a distributed manner and constraints are also considered, the problem cannot be formulated into the root-searching of the gradient functions and thus we give the mathematical analysis in a different way. Compared with the analysis given in \cite{nedic2010constrained}, since the stochastic dither is introduced into the objective function, i.e., $f(x_k^{i}\pm c_{k}\bigtriangleup_{k}^{i})$ with $\bigtriangleup_{k}^{i}$ being independent of the estimate $x_k^{i}$, this makes the proof much more complicated and many efforts are devoted to the analysis of difference term $d^{i}_{k}$. See, e.g., Lemma \ref{Lem5}, Lemma \ref{Lem8}.
\end{remark}

According to Theorems \ref{The1} and \ref{The2}, all the estimates $\{ \xi^i_{k},i\in\mathcal{N}\}$ converge to the optimal solution $\xi^{*}$ almost surely. These two theorems, in fact, establish the almost sure results of the distributed algorithm with two-sided randomized differences. Similar results also hold for distributed subgradient-free algorithms with one-sided randomized differences, whose proofs are omitted due to space limitations.

\begin{corollary}\label{The3}
Under Hypotheses \ref{Hyp1}-\ref{Hyp4}, all the
sequences $\{ \xi^i_{k}\},\;i\in\mathcal{N}$ generated from the distributed subgradient-free algorithms with one-sided randomized differences converge to the optimal solution $\xi^{*}$ almost surely.
\end{corollary}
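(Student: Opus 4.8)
The plan is to re-run the proofs of Theorems~\ref{The1} and \ref{The2} essentially verbatim, with the two-sided search direction $d^i_k$ replaced throughout by the right-sided direction $[d^i_k]^+$ (the left-sided direction $[d^i_k]^-$ being symmetric). Inspecting Section~\ref{sec4} shows that the search direction enters the whole argument only through three ingredients: the first- and second-moment bounds of Lemma~\ref{Lem5}; the almost-sure summability $\sum_k \frac{\iota_k}{c_k}\|\xi^i_k-\bar{\xi}_k\|_2<\infty$ of Lemma~\ref{Lem7}; and the conditional ``descent'' inequality of Lemma~\ref{Lem8}. Everything downstream --- Lemmas~\ref{Lem6}, \ref{Lem9}, \ref{Lem10} and the two theorems --- is derived from these three facts together with Lemma~\ref{Prop1}, the projection inequalities of Lemma~\ref{Lem2} and the supermartingale convergence lemma (Lemma~\ref{Lem3}), and does not otherwise refer to the form of the direction. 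Moreover the proof of Lemma~\ref{Lem7} uses only the boundedness of the dithers (Hypothesis~\ref{Hyp3}(a)), the Lipschitz property (Remark~\ref{Rem2}) and the step-size conditions (Hypothesis~\ref{Hyp4}), all unchanged here. Hence it suffices to establish the analogues of Lemmas~\ref{Lem5} and \ref{Lem8} for $[d^i_k]^+$.

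For the moment bounds, write the numerator of $[d^i_k]^+$ as $[y^i_{k+1}]^+-y^i_{k+1}=\big[f^i(x^i_k+c_k\bigtriangleup^i_k)-f^i(x^i_k)\big]+\big(\text{observation error of }[y^i_{k+1}]^+\text{ minus that of }y^i_{k+1}\big)$, apply Lebourg's mean value theorem (Lemma~\ref{Lem1}) to the bracketed difference to express it as $\langle\partial f^i(u),c_k\bigtriangleup^i_k\rangle$ with $u$ on the segment $[x^i_k,x^i_k+c_k\bigtriangleup^i_k]$ and $\|\partial f^i(u)\|_2\le L$ by Hypothesis~\ref{Hyp1}(b), divide by $c_k$ and multiply componentwise by $[\bigtriangleup^i_k]^-$. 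Using $|\bigtriangleup^{ip}_k|<a$, $|1/\bigtriangleup^{ip}_k|<b$ from Hypothesis~\ref{Hyp3}(a) and the uniform bound on the observation errors from Hypothesis~\ref{Hyp3}(d), the same computation as in the Appendix proof of Lemma~\ref{Lem5} gives $\mathbb{E}\|[d^i_k]^+\|_2\le L+\frac{\sqrt{m}\,b\,e'}{c_k}$ and $\mathbb{E}\|[d^i_k]^+\|_2^2\le\big(L+\frac{\sqrt{m}\,b\,e'}{c_k}\big)^2$, where $e'$ plays the role of the constant $e$ of Lemma~\ref{Lem5}; the only changes are that the denominator $2c_k$ becomes $c_k$ and the relevant noise is a one-sided combination rather than $[\epsilon^i_k]^+-[\epsilon^i_k]^-$. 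With these bounds, Lemma~\ref{Lem7} carries over as noted above.

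For the descent inequality, compute $\mathbb{E}\big[\langle[d^i_k]^+,x^i_k-\xi^*\rangle\mid F_k\big]$ exactly as in the proof of Lemma~\ref{Lem8}: split $[d^i_k]^+$ into its function part and its noise part; on the function part apply Lebourg's theorem, then use the zero-mean property $\mathbb{E}[1/\bigtriangleup^{ip}_k]=0$ and the mutual-independence assumptions of Hypotheses~\ref{Hyp3}(a)--(c) to annihilate the off-diagonal terms and recover, up to an $O(c_k)$ bias, a term bounded below by convexity (Definition~\ref{Def1}, Hypothesis~\ref{Hyp1}(a)) and the Lipschitz bound (Remark~\ref{Rem2}); on the noise part use $\mathbb{E}[\epsilon^i_{k+1}\mid F_k]=0$ from Hypothesis~\ref{Hyp3}(d) and independence of the noise from the dithers (Hypothesis~\ref{Hyp3}(c)) to see it contributes zero in conditional expectation. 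This yields a lower bound of the form $f^i(\bar{\xi}_k)-f^i(\xi^*)-L\|x^i_k-\bar{\xi}_k\|_2-c_k L\,\mathbb{E}\|\bigtriangleup^i_k\|_2-2L$, i.e.\ the statement of Lemma~\ref{Lem8} with at most the coefficient of the $O(c_k)$ term changed, which is immaterial since $\sum_k\iota_k c_k<\infty$ and $\sum_k\iota_k<\infty$ by Hypothesis~\ref{Hyp4}. Feeding the two analogues into the proofs of Lemmas~\ref{Lem6}, \ref{Lem9}, \ref{Lem10} and of Theorems~\ref{The1} and \ref{The2} gives the almost-sure convergence $\sum_{i=1}^n\|\xi^i_k-\xi^*\|_2^2\to0$ asserted in the corollary.

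The main point requiring care --- and the only place where the one-sided case is not a literal transcription of the two-sided one --- is the noise: $[d^i_k]^+$ involves the observation error of $y^i_{k+1}$ in addition to that of $[y^i_{k+1}]^+$, so one needs each of the one-sided observation errors (not merely the symmetric combination $[\epsilon^i_k]^+-[\epsilon^i_k]^-$ that enters $d^i_k$) to be conditionally zero-mean with a uniformly bounded second moment. This is the natural reading of Hypothesis~\ref{Hyp3}(d) and should be recorded explicitly; once it is granted no new idea is needed, and the left-sided direction $[d^i_k]^-$ is handled in exactly the same way.
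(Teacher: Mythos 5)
Your proposal is correct and follows exactly the route the paper intends: the paper itself omits this proof, stating only that it ``can be given similarly'' to Theorems~\ref{The1} and \ref{The2}, and your transcription --- isolating Lemmas~\ref{Lem5}, \ref{Lem7} and \ref{Lem8} as the only places where the form of the search direction enters, and redoing the first two and the descent inequality for $[d^i_k]^+$ with the denominator $2c_k$ replaced by $c_k$ --- is precisely that argument made explicit. Your closing observation that the one-sided noise term requires each individual observation error (not just the combination $[\epsilon^i_k]^+-[\epsilon^i_k]^-$ covered by Hypothesis~\ref{Hyp3}(d)) to be conditionally zero-mean with bounded second moment is a legitimate point of care that the paper glosses over, and recording it strengthens rather than deviates from the intended proof.
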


%{\color{blue}In \cite{yuan2015randomized, yuan2015zeroth}, the Gaussian approximation is used to approximate the original (maybe non-smooth) objective functions, and the right-sided randomized difference is adopted to replace the gradient information to construct the distributed algorithm. The optimization error is characterized by the gap between the Gaussian approximative functions and the original one, but the convergence rate is not analyzed.  In this paper, one-sided and two-sided randomized differences are both considered and the consensus and the global minimization of estimates are achieved with probability one. Next, we present a result for the expected convergence rate of distributed subgradient-free algorithm with two-sided randomized differences in algorithm \ref{Alg1}, whose proof is the direct consequence of Lemma \ref{Lem10}.}

Here we briefly compare the results in this paper with those in \cite{hajinezhad2017zeroth, yuan2015randomized, yuan2015zeroth}.  For smooth local objective functions, the mean average convergence established in this paper matches the corresponding results in \cite{hajinezhad2017zeroth, yuan2015randomized, yuan2015zeroth}. For non-smooth local objective functions, we establish the almost sure convergence to global minimization while in \cite{hajinezhad2017zeroth, yuan2015randomized, yuan2015zeroth} it establishes the upper bound of the estimation error in mean average sense but the convergence rate is not analyzed. In the following, we will present the convergence rate in mean square sense of the distributed subgradient-free algorithm with two-sided randomized differences, for which the proof can follow from Lemma \ref{Lem10} directly.

%Note that \cite{hajinezhad2017zeroth, yuan2015randomized, yuan2015zeroth} only analyzed the convergence, while \cite{anit2018distributed} obtained a  convergence rate in expectation.

\begin{theorem}\label{The4}
Set $\iota_{k}=\frac{1}{k^{1+\epsilon}}$ and $c_{k}=\frac{1}{k^{\delta}}$ with $ \frac{1}{2}+\epsilon>\delta\geqslant \epsilon>0$. Under Hypotheses \ref{Hyp1}-\ref{Hyp4} and Lemma \ref{Lem10}, for distributed subgradient-free algorithm with two-sided randomized differences in Algorithm \ref{Alg1}, we have
\begin{align*}
\sum_{i=1}^{n}\mathbb{E} \big\|\xi^i_{k}-\xi^{*}\big\|_{2}^{2} \sim O\Big(\max\Big\{\frac{1}{k^{\epsilon}}, \frac{1}{k^{1+2\epsilon-2\delta}}\Big\}\Big).
\end{align*}
\end{theorem}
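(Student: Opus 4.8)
The plan is to obtain Theorem \ref{The4} as an immediate corollary of the non-asymptotic bound established in Lemma \ref{Lem10}, so that the argument reduces to identifying the slowest-decaying term on its right-hand side. First I would invoke Lemma \ref{Lem10}: under the prescribed step sizes $\iota_{k}=k^{-(1+\epsilon)}$ and $c_{k}=k^{-\delta}$ with $\tfrac{1}{2}+\epsilon>\delta\geqslant\epsilon>0$, it yields, for finite constants $M_{1},\ldots,M_{4}$,
\begin{align*}
\sum_{i=1}^{n}\mathbb{E}\big\|\xi^{i}_{k+1}-\xi^{*}\big\|_{2}^{2}\leqslant\frac{M_{1}}{k^{1+2\epsilon-2\delta}}+\frac{M_{2}}{k^{\epsilon}}+\frac{M_{3}}{k^{\epsilon+\delta}}+\frac{M_{4}}{k^{\epsilon}},
\end{align*}
where the assumption $\delta<\tfrac{1}{2}+\epsilon$ guarantees $1+2\epsilon-2\delta>0$ so that all four terms vanish as $k\to\infty$.

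Next I would compare the decay exponents. Since $\delta\geqslant\epsilon>0$ we have $\epsilon+\delta\geqslant 2\epsilon>\epsilon$, hence $k^{-(\epsilon+\delta)}=o\big(k^{-\epsilon}\big)$ and the $M_{3}$-term is dominated by the $M_{2}$- and $M_{4}$-terms; absorbing $M_{2},M_{3},M_{4}$ into a single constant, the right-hand side is of order $\max\{k^{-\epsilon},\,k^{-(1+2\epsilon-2\delta)}\}$. Shifting the index from $k+1$ to $k$ multiplies each term by a bounded factor converging to $1$ and therefore does not affect the order, so
\begin{align*}
\sum_{i=1}^{n}\mathbb{E}\big\|\xi^{i}_{k}-\xi^{*}\big\|_{2}^{2}\sim O\Big(\max\Big\{\frac{1}{k^{\epsilon}},\frac{1}{k^{1+2\epsilon-2\delta}}\Big\}\Big),
\end{align*}
which is the claimed rate.

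Hence the proof of Theorem \ref{The4} itself is only bookkeeping, and the genuine main obstacle lies entirely in Lemma \ref{Lem10}. There one must unroll the recursion for $\mathbb{E}\|\xi^{i}_{k+1}-\xi^{*}\|_{2}^{2}$ produced by the descent and projection steps of Algorithm \ref{Alg1}, feed in the lower bound on $\mathbb{E}\big[\big\langle d^{i}_{k},x^{i}_{k}-\xi^{*}\big\rangle\big|F_{k}\big]$ from Lemma \ref{Lem8}, the second-moment estimate for $d^{i}_{k}$ from Lemma \ref{Lem5}, and the $L_{1}$ consensus rate behind Theorem \ref{The1}, and then control the tails $\sum_{j\geqslant k}\iota_{j}$, $\sum_{j\geqslant k}\iota_{j}^{2}/c_{j}^{2}$ and $\sum_{j\geqslant k}\iota_{j}c_{j}$ of the relevant series; each of these is polynomial in $k$ of exactly the orders $k^{-\epsilon}$, $k^{-(1+2\epsilon-2\delta)}$ and $k^{-(\epsilon+\delta)}$ appearing in Lemma \ref{Lem10}, precisely because $\tfrac{1}{2}+\epsilon>\delta\geqslant\epsilon>0$. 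Once that estimate is in hand, reading off the dominant rate as above completes the argument.
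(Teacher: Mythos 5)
Your proposal is correct and follows essentially the same route as the paper, which states that Theorem \ref{The4} "can follow from Lemma \ref{Lem10} directly": you invoke the bound of Lemma \ref{Lem10}, note that $1+2\epsilon-2\delta>0$ and $\epsilon+\delta\geqslant 2\epsilon>\epsilon$ under the stated parameter range so that the $k^{-(\epsilon+\delta)}$ term is dominated, and read off the order $\max\{k^{-\epsilon},k^{-(1+2\epsilon-2\delta)}\}$ after a harmless index shift. The exponent comparisons are all valid, so no gap remains.
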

\begin{remark}
It directly follows from Theorem \ref{The4} that the optimal values for $\epsilon$ and $\delta$ are $\epsilon=\dfrac{1}{2}$ and $ \delta=\dfrac{1}{2} $, respectively, which in turn indicate that $\iota_{k}=\frac{1}{k^{\frac{3}{2}}}$, $c_{k}=\frac{1}{k^{\frac{1}{2}}}$, and
\begin{align*}
\sum_{i=1}^{n}\mathbb{E} \big\|\xi^i_{k}-\xi^{*}\big\|_{2}^{2} \sim O\Big(\frac{1}{\sqrt{k}}\Big).
\end{align*}
The rate matches not only the best rate for centralized stochastic approximation algorithms, see \cite{chen2006stochastic} and references therein, but also the best rate given in distributed first-order stochastic subgradient algorithms with a diminishing step-size.
\end{remark}

\section{Simulations}\label{sec5}
In this section, we give a numerical example for further illustration of our proposed algorithms.

Consider a network system with 5 agents. The time-varying communication topology between the agents can be described by Fig. 1.
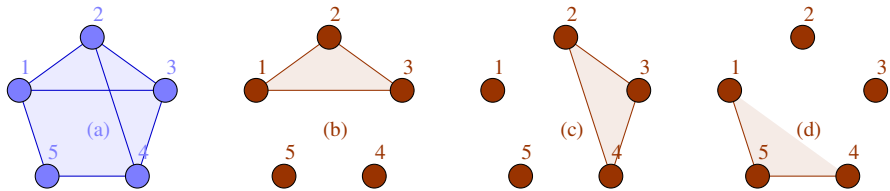
\begin{figure}[!htbp]\label{fig1}
\begin{tikzpicture}[line cap=round,line join=round,>=triangle 45,x=0.6cm,y=0.6cm]
\clip(-3,-3) rectangle (2,2);
\fill[line width=0.8pt,color=ttttff,fill=ttttff,fill opacity=0.10] (-1.,-2.) -- (1.,-2.) -- (1.618,-0.097) -- (0.,1.08) -- (-1.618,-0.097) -- cycle;
\draw [line width=0.4pt,color=qqqqcc] (-1.,-2.)-- (1.,-2.);
\draw [line width=0.4pt,color=qqqqcc] (1.,-2.)-- (1.618,-0.097);
\draw [line width=0.4pt,color=qqqqcc] (1.618,-0.09)-- (0.,1.08);
\draw [line width=0.4pt,color=qqqqcc] (0.,1.08)-- (-1.618,-0.097);
\draw [line width=0.4pt,color=qqqqcc] (-1.618,-0.097)-- (-1.,-2.);
\draw [line width=0.4pt,color=qqqqcc] (-1.618,-0.097)-- (1.618,-0.097);
\draw [line width=0.4pt,color=qqqqcc] (0.,1.08)-- (1.,-2.);
\begin{scriptsize}
\draw [fill=xdxdff] (-1.,-2.) circle (4.5pt);
\draw[color=xdxdff] (-0.86,-1.47) node {5};
\draw [fill=xdxdff] (1.,-2.) circle (4.5pt);
\draw[color=xdxdff] (1.14,-1.47) node {4};
\draw [fill=xdxdff] (1.618,-0.097) circle (4.5pt);
\draw[color=xdxdff] (0.14, -1) node {(a)};
\draw[color=xdxdff] (1.76,0.43) node {3};
\draw [fill=xdxdff] (0.,1.08) circle (4.5pt);
\draw[color=xdxdff] (0.14,1.61) node {2};
\draw [fill=xdxdff] (-1.618,-0.097) circle (4.5pt);
\draw[color=xdxdff] (-1.48,0.43) node {1};
\end{scriptsize}
\end{tikzpicture}
\begin{tikzpicture}[line cap=round,line join=round,>=triangle 45,x=0.6cm,y=0.6cm]
\clip(-3,-3) rectangle (2,2);
\fill[line width=0.8pt,color=zzttqq,fill=zzttqq,fill opacity=0.10] (1.618,-0.097) -- (0.,1.08) -- (-1.618,-0.097) -- cycle;
\draw [line width=0.4pt,color=zzttqq] (1.618,-0.09)-- (0.,1.08);
\draw [line width=0.4pt,color=zzttqq] (0.,1.08)-- (-1.618,-0.097);
\draw [line width=0.4pt,color=zzttqq] (-1.618,-0.097)-- (1.618,-0.097);
\begin{scriptsize}
\draw [fill=zzttqq] (-1.,-2.) circle (4.5pt);
\draw[color=zzttqq] (-0.86,-1.47) node {5};
\draw [fill=zzttqq] (1.,-2.) circle (4.5pt);
\draw[color=zzttqq] (1.14,-1.47) node {4};
\draw [fill=zzttqq] (1.618,-0.097) circle (4.5pt);
\draw[color=zzttqq] (1.76,0.43) node {3};
\draw[color=zzttqq] (0.14, -1) node {(b)};
\draw [fill=zzttqq] (0.,1.08) circle (4.5pt);
\draw[color=zzttqq] (0.14,1.61) node {2};
\draw [fill=zzttqq] (-1.618,-0.097) circle (4.5pt);
\draw[color=zzttqq] (-1.48,0.43) node {1};
\end{scriptsize}
\end{tikzpicture}
\begin{tikzpicture}[line cap=round,line join=round,>=triangle 45,x=0.6cm,y=0.6cm]
\clip(-3,-3) rectangle (2,2);
\fill[line width=0.8pt,color=zzttqq,fill=zzttqq,fill opacity=0.10](1.,-2.) -- (1.618,-0.097) -- (0.,1.08)-- cycle;
\draw [line width=0.4pt,color=zzttqq] (1.,-2.)-- (1.618,-0.097);
\draw [line width=0.4pt,color=zzttqq] (1.618,-0.09)-- (0.,1.08);
\draw [line width=0.4pt,color=zzttqq] (0.,1.08)-- (1.,-2.);
\begin{scriptsize}
\draw [fill=zzttqq] (-1.,-2.) circle (4.5pt);
\draw[color=zzttqq] (-0.86,-1.47) node {5};
\draw [fill=zzttqq] (1.,-2.) circle (4.5pt);
\draw[color=zzttqq] (1.14,-1.47) node {4};
\draw [fill=zzttqq] (1.618,-0.097) circle (4.5pt);
\draw[color=zzttqq] (1.76,0.43) node {3};
\draw[color=zzttqq] (0.14, -1) node {(c)};
\draw [fill=zzttqq] (0.,1.08) circle (4.5pt);
\draw[color=zzttqq] (0.14,1.61) node {2};
\draw [fill=zzttqq] (-1.618,-0.097) circle (4.5pt);
\draw[color=zzttqq] (-1.48,0.43) node {1};
\end{scriptsize}
\end{tikzpicture}
\begin{tikzpicture}[line cap=round,line join=round,>=triangle 45,x=0.6cm,y=0.6cm]
\clip(-3,-3) rectangle (2,2);
\fill[line width=0.8pt,color=zzttqq,fill=zzttqq,fill opacity=0.10] (-1.,-2.) -- (1.,-2.) --(-1.618,-0.097) ;
\draw [line width=0.4pt,color=zzttqq] (-1.,-2.)-- (1.,-2.);
\draw [line width=0.4pt,color=zzttqq] (-1.618,-0.097)-- (-1.,-2.);
\begin{scriptsize}
\draw [fill=zzttqq] (-1.,-2.) circle (4.5pt);
\draw[color=zzttqq] (-0.86,-1.47) node {5};
\draw [fill=zzttqq] (1.,-2.) circle (4.5pt);
\draw[color=zzttqq] (1.14,-1.47) node {4};
\draw [fill=zzttqq] (1.618,-0.097) circle (4.5pt);
\draw[color=zzttqq] (1.76,0.43) node {3};
\draw[color=zzttqq] (0.14, -1) node {(d)};
\draw [fill=zzttqq] (0.,1.08) circle (4.5pt);
\draw[color=zzttqq] (0.14,1.61) node {2};
\draw [fill=zzttqq] (-1.618,-0.097) circle (4.5pt);
\draw[color=zzttqq] (-1.48,0.43) node {1};
\end{scriptsize}
\end{tikzpicture}
\caption{Topology of the $5$-agent network. }
\end{figure}
In Algorithm \ref{Alg1}, the communication topology between the agents is jointly-connected as follows:
the topology is shown in Fig. 1(b) at time $k= 3t$, shown in Fig. 1(c) at time $k= 3t+1$, and shown in Fig. 1(d) at time $k= 3t+1$, for $t=1,2,\ldots$.

Consider the following distributed optimization problem
\begin{align}\label{5-1}
\min \;\;&F(\xi)=\sum_{i=1}^{5} f^{i}(\xi)\notag\\
s.\;t.\; &\xi\in X=\bigcap_{i=1}^{5}X_{i}
\end{align}
with \begin{align}\label{5-2}
\begin{cases}
f^{1}(\xi)=(\xi-3)^{2},X_{1} =\{||\xi||_{2}\leqslant 100\};\\
f^{2}(\xi)=(\xi-2)^{2},X_{2} =\{||\xi||_{2}\leqslant 100\};\\
f^{3}(\xi)=(\xi-1)^{2},X_{3} =\{||\xi||_{2}\leqslant 100\};\\
f^{4}(\xi)=\xi^{2},\quad\quad \;\;X_{4} =\{||\xi||_{2}\leqslant 100\};\\
f^{5}(\xi)=(\xi+1)^{2},X_{5} =\{||\xi||_{2}\leqslant 100\}.
\end{cases}
\end{align}
The distributed optimization problem \eqref{5-1} has a unique minimum at $\xi=1$.
In simulation setup, the sequence of dither signals $\{\bigtriangleup_{k}^{i}\}$ is uniformly distributed on $[-1,-0.5]\cup [0.5,1]$, the noise sequence $\{\epsilon^{i}_{k}\}$ is i.i.d. Gaussian process with distribution $\mathbb{N}(0,1)$. We set the total iteration number to be $T=500$, for $i=1,2,\ldots,5$ and set $\iota_{k}=\frac{1}{k^{\frac{3}{2}}}$ and $c_{k}=\frac{1}{k^{\frac{1}{2}}}$ with  $\frac{1}{2}+\epsilon>\delta\geqslant \epsilon>0$. In the simulation figures, the estimation sequences generated from the subgradient-free algorithm. i. e. Algorithm \ref{Alg1} are denoted by ``SF" and those from the distributed stochastic subgradient projection algorithm in \cite{ram2010distributed} denoted by ``S". In the distributed stochastic subgradient projection algorithm, we set step-size $\frac{1}{k^{\frac{1}{2}}}$ and keep other parameters unchanged. The performances of the algorithms are shown in Fig. \ref{fig2}.
\begin{figure}[htp]
	\centering
	\includegraphics[width=\linewidth]{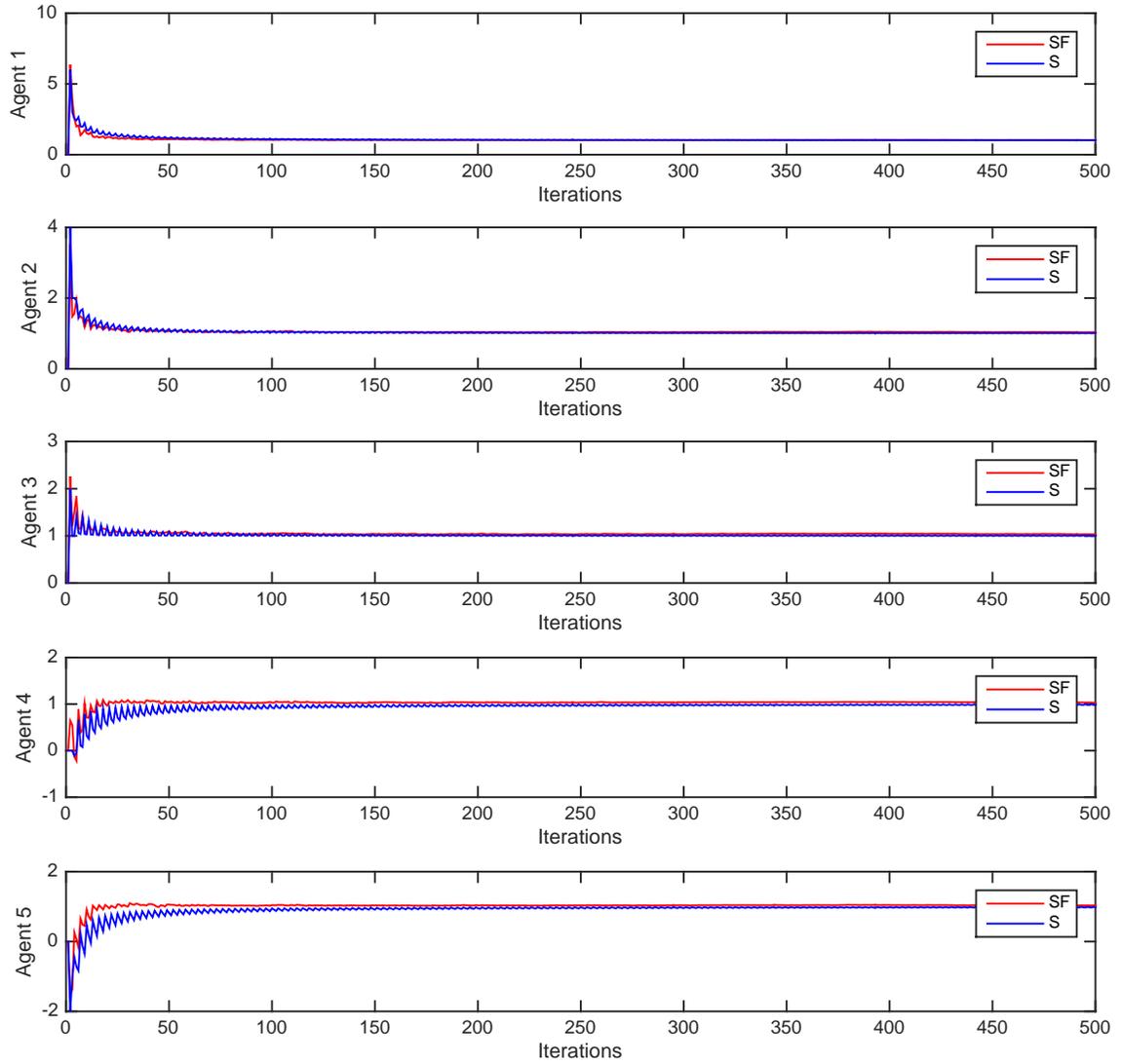}
	\caption{The convergence performance $\xi^{i}_{k}$ of SF(in red lines) and
S \cite{ram2010distributed}  (in blue lines) for each agent $i$}
	\label{fig2}
\end{figure}

Fig. \ref{fig2} shows that $\xi^{i}_{k}$ of each agent $i$ converges to the same optimal point $\xi^{*}=1$ for both Algorithm \ref{Alg1} and the distributed stochastic subgradient projection algorithm in  \cite{ram2010distributed} .

Define
$$\mathrm{R}^{i}(T)= \sum_{i=1}^{n}\mathbb{E} \big\|\xi^i_{T}-\xi^{*}\big\|_{2}^{2}.$$
Figs. \ref{fig3} and \ref{fig4} show the performance of $\mathrm{R}^{i}(T)$'s for Algorithm \ref{Alg1} (SF-Algorithm) and the distributed stochastic subgradient projection algorithm (S-Algorithm) in \cite{ram2010distributed}, respectively.
\begin{figure}[htp]
	\centering
	\includegraphics[width=0.8\linewidth]{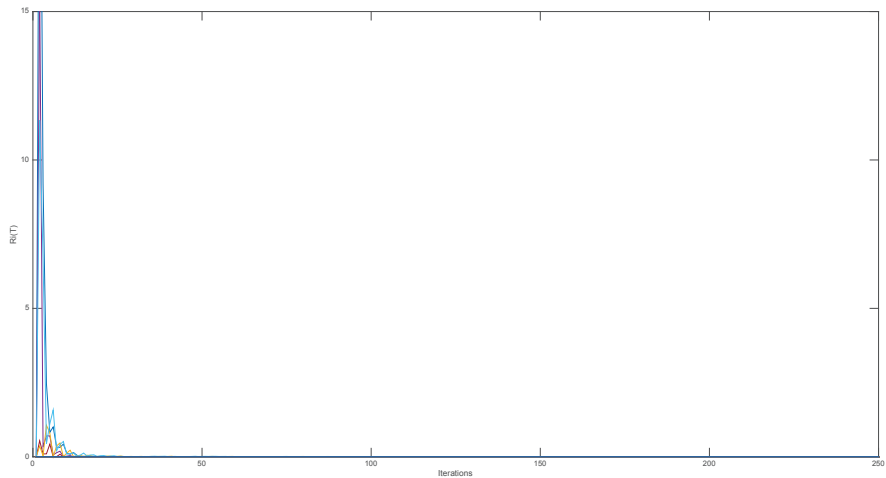}
	\caption{$R^{i}(T)$ for each agent $i$ of SF-Algorithm \ref{Alg1}}.
	\label{fig3}
\end{figure}
\begin{figure}[h]
	\centering
	\includegraphics[width=0.8\linewidth]{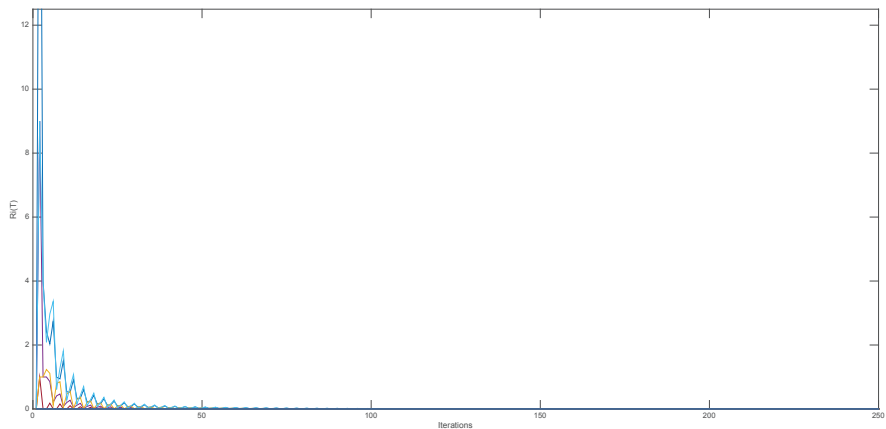}
	\caption{$R^{i}(T)$ for each agent $i$ of S-Algorithm in \cite{ram2010distributed}}.
	\label{fig4}
\end{figure}

From Figs. \ref{fig3} and \ref{fig4} we can see that $\mathrm{R}^{i}(T)$ of each agent $i$ converges to $0$ for both Algorithm \ref{Alg1} and the distributed stochastic subgradient projection algorithm in \cite{ram2010distributed}.

\section{Conclusions}\label{sec6}

%{\color{blue}{In this paper, distributed subgradient-free algorithms with random difference are introduced. The global minimization, the convergence in mean square sense, and the almost sure convergence of estimates are established. The results extend the subgradient-based distributed stochastic optimization algorithms to the subgradient-free case.}}

%{\color{blue}{For further research, it is of interest to consider some further problems related to distributed optimization, for example, the distributed set-valued optimization, the distributed optimization without constraints, the distributed optimization with dynamics in the objective function, etc.}}

In this paper, distributed subgradient-free algorithms with both one-sided and two-sided randomized differences over a time-varying network were introduced with randomized differences technique. The global minimization and the almost sure convergence of estimates were established and the convergence rate in the mean average sense was analyzed as well. The results extended the properties of the subgradient-based distributed stochastic optimization algorithms to the subgradient-free case.

%For further research, it is of interest to consider more complicated problems related to distributed optimization, for example, the distributed set-valued optimization, the distributed optimization without constraints, the distributed optimization with dynamics in the objective function, etc.

\appendix
Here we give the proofs for Lemmas \ref{Lem5}, \ref{Lem6} and \ref{Lem8}.

\section{Proof of Lemma \ref{Lem5}:}

According to the definition of $d^{i}_{k}$ in \eqref{2-4},
\begin{align}\label{A-1}
d^i_{k}=\dfrac{\Big[f^{i}\big(x^i_{k}+c_{k}\bigtriangleup_{k}^{i}\big)-f^{i}\big(x^i_{k}-c_{k}\bigtriangleup_{k}^{i}\big)\Big]\big[\bigtriangleup_{k}^{i}\big]^{-}}{2c_{k}}+\dfrac{\epsilon_{k}^{i}\big[\bigtriangleup_{k}^{i}\big]^{-}}{2c_{k}}.
\end{align}
Under the Lipschitz property of $f^{i}(\cdot)$ in Remark \ref{Rem2}, we have
$$\big\|f^{i}\big(x^i_{k}+c_{k}\bigtriangleup_{k}^{i}\big)-f^{i}\big(x^i_{k}-c_{k}\bigtriangleup_{k}^{i}\big)\big\|_{2}\leqslant 2Lc_{k}\big\|\bigtriangleup_{k}^{i}\big\|_{2},$$
where $L$ is a positive constant given in Hypothesis \ref{Hyp1}.
\begin{itemize}
\item[(a)]By Hypothesis \ref{Hyp3}(b) and the above inequality, we have $$\mathbb{E}\Bigg\|\dfrac{\Big[f^{i}\big(x^i_{k}+c_{k}\bigtriangleup_{k}^{i}\big)-f^{i}\big(x^i_{k}-c_{k}\bigtriangleup_{k}^{i}\big)\Big]\big[\bigtriangleup_{k}^{i}\big]^{-}}{2c_{k}}\Bigg\|_{2}\leqslant L.$$

Recalling  Hypothesis \ref{Hyp3}(c), $\{\varepsilon_k^i\}$ and $\{\bigtriangleup_{k}^{i}\}$ are mutually independent. By the Lyapunov inequality (see e. g. \cite{chen2006stochastic}), we have the following inequalities,
\begin{align}\label{A-2}
\mathbb{E}\Bigg\|\dfrac{\epsilon_{k}^{i}\big[\bigtriangleup_{k}^{i}\big]^{-}}{2c_{k}}\Bigg\|_{2}&=\frac{1}{2c_{k}}\mathbb{E}\big\|\epsilon^{i}_{k}\big\|_{2}\mathbb{E}\big\|\big[\bigtriangleup_{k}^{i}\big]^{-}\big\|_{2}\notag\\&\leqslant\dfrac{\sqrt{m}b}{2c_{k}}\mathbb{E}\big\|\epsilon^{i}_{k}\big\|_{2}\leqslant \dfrac{\sqrt{m}b}{2c_{k}}\Big(\mathbb{E}\big\|\epsilon_{k}^{i}\big\|_{2}^{2}\Big)^{\frac{1}{2}}\leqslant \dfrac{\sqrt{m}b}{2c_{k}}\Big(\sup_{i} \mathbb{E}\big\|\epsilon_{k}^{i}\big\|_{2}^{2}\Big)^{\frac{1}{2}}.
\end{align}
Therefore, $\mathbb{E}\|d_{k}^{i}\|_{2}\leqslant L+\dfrac{\sqrt{m}b}{2c_{k}}\Big(\sup_{i} \mathbb{E}\big\|\epsilon_{k}^{i}\big\|_{2}^{2}\Big)^{\frac{1}{2}}$ and the first part of Lemma \ref{Lem5} is proved.

\item[(b)] We now establish $\mathbb{E}\big\|d_{k}^{i}\big\|_{2}^{2}\leqslant L$. We first have the following equality:
$$\mathbb{E}\big\|d_{k}^{i}\big\|_{2}^{2}=E^{i}_{k}(1)+2E^{i}_{k}(2)+E^{i}_{k}(3),$$
where
\begin{align*}
\begin{cases}
E^{i}_{k}(1)&=\mathbb{E}\Bigg\|\dfrac{\Big[f^{i}\big(x^i_{k}+c_{k}\bigtriangleup_{k}^{i}\big)-f^{i}\big(x^i_{k}-c_{k}\bigtriangleup_{k}^{i}\big)\Big]\big[\bigtriangleup_{k}^{i}\big]^{-}}{2c_{k}}\Bigg\|_{2}^{2},\\
E^{i}_{k}(2)&=\mathbb{E}\Bigg[\bigg\langle \dfrac{\Big[f^{i}\big(x^i_{k}+c_{k}\bigtriangleup_{k}^{i}\big)-f^{i}\big(x^i_{k}-c_{k}\bigtriangleup_{k}^{i}\big)\Big]\big[\bigtriangleup_{k}^{i}\big]^{-}}{2c_{k}},\dfrac{\epsilon_{k}^{i}\big[\bigtriangleup_{k}^{i}\big]^{-}}{2c_{k}} \bigg\rangle \Bigg],\\
E^{i}_{k}(3)&=\mathbb{E}\Bigg\|\dfrac{\epsilon_{k}^{i}\big[\bigtriangleup_{k}^{i}\big]^{-}}{2c_{k}}\Bigg\|_{2}^{2}.
\end{cases}
\end{align*}

By the Lipschitz property of $f^{i}(\cdot)$ in Remark \ref{Rem2},  $E^{i}_{k}(1)\leqslant L^2$. For the term $E^{i}_{k}(2)$, it follows from Hypothesis \ref{Hyp3}(c) and the Schwartz inequality that $\big|E^{i}_{k}(2)\big|\leqslant L\mathbb{E}\Bigg\|\dfrac{\epsilon_{k}^{i}\big[\bigtriangleup_{k}^{i}\big]^{-}}{2c_{k}}\Bigg\|_{2}\leqslant \dfrac{\sqrt{m}beL}{2c_{k}}$. Again by the mutual independence of $\{\varepsilon_k^i\}$ and $\{\bigtriangleup_{k}^{i}\}$ in Hypothesis \ref{Hyp3}(c)(d), we obtain
\begin{align}\label{A-3}
E^{i}_{k}(3)&=\frac{1}{4c_{k}^{2}}\mathbb{E}\big\|\epsilon^{i}_{k}\big\|_{2}^{2}\mathbb{E}\big\|[\bigtriangleup_{k}^{i}]^{-}\big\|_{2}^{2}\leqslant \dfrac{mb^{2}e^{2}}{4c_{k}^{2}}.
\end{align}
Therefore, $\mathbb{E}\big\|d_{k}^{i}\big\|_{2}^{2}\leqslant \Big(L+\dfrac{\sqrt{m}be}{2c_{k}}\Big)^{2}$, and thus, the second part of Lemma \ref{Lem5} is proved.
\end{itemize}

\section{Proof of Lemma \ref{Lem6}}

For all $i\in \mathcal{N}$, $k\geqslant 0$, define the error between $\xi^{i}_{k+1}$ and $x^{i}_{k}$ as
\begin{align}\label{B-1}
p^{i}_{k+1}\triangleq\xi^{i}_{k+1}-x^{i}_{k}=\xi^{i}_{k+1}-\sum_{j=1}^{n}w^{ij}_{k}\xi^j_{k}.
\end{align}

By Lemma \ref{Lem2}(b) and the fact that $X_{i}$ is a bounded closed convex set, we have
\begin{align}\label{B-2}
\big  \|p^i_{k+1}\big\|_{2}=\bigg\|P_{X}\big(\sum_{j=1}^{n}w^{ij}_{k}\xi^{i}_{k}-\iota_{k}d^{i}_{k}\big)-\sum_{j=1}^{n}w^{ij}_{k}\xi^j_{k}\bigg\|_{2}\leqslant\iota_{k}\big\|d^{i}_{k}\big\|_{2}.
\end{align}

Distributed subgradient-free algorithm with two-sided randomized differences in Algorithm \ref{Alg1} can be reformulated as follows:
\begin{align}\label{B-3}
\xi^{i}_{k+1}=\sum_{j=1}^{n} \big[\Psi_{k,0}\big]_{ij}\xi^{j}_{0}+p^{i}_{k+1}+\sum_{s=1}^{k}\sum_{j=1}^{n} \big[\Psi_{k,s}\big]_{ij}p^j_{s},
\end{align}
where $\Psi_{k,s}$ is defined in Lemma \ref{Lem4}. By Hypothesis \ref{Hyp2}, $W_{k}$ is a doubly stochastic matrix and hence, $\Psi_{k,0}$ is doubly stochastic. From (\ref{B-3}), the following equality holds:
\begin{align}\label{B-4}
\bar{\xi}_{k+1}=\frac{1}{n}\sum_{i=1}^{n}\xi^{i}_{0}+\frac{1}{n}\sum_{s=1}^{k+1}\sum_{j=1}^{n}p^j_{s}
\end{align}

From (\ref{B-3}) and (\ref{B-4}), for $\forall i\in \mathcal{N}$, we have the following inequalities
\begin{align}\label{B-5}
\big\|\xi^{i}_{k+1}-\bar{\xi}_{k+1}\big\|_2&\leqslant\sum_{j=1}^{n}\Big |\big [\Psi_{k,0}\big]_{ij}-\frac{1}{n}\Big|\big\|\xi^{j}_{0}\big\|_{2}+\big\|p^{i}_{k+1}\big\|_{2}
+\frac{1}{n}\sum_{j=1}^{n}\big\|p^j_{k+1}\big\|_{2}\notag\\&+\sum_{s=1}^{k}\sum_{j=1}^{n} \Big| \big[\Psi_{k,s}\big]_{ij}-\frac{1}{n}\Big|\big\|p^j_{s}\big\|_{2}.
\end{align}

By taking the expectation to both sides of \eqref{B-5} and recalling Lemma \ref{Lem4} and \eqref{B-2}, we obtain
\begin{align}\label{B-6}
\mathbb{E}\big\|\xi^{i}_{k+1}-\bar{\xi}_{k+1}\big\|_{2}&\leqslant n\lambda\beta^{k}\mathbb{E}\left[\max_{1\leqslant j\leqslant
	n}\big\|\xi^{j}_{0}\big\|_{2}\right]+\iota_{k}\mathbb{E}\big\|d^{i}_{k}\big\|_{2}+\frac{\iota_{k}}{n}\sum_{j=1}^{n}\mathbb{E}\big\|d^{j}_{k}\big\| _{2}
\notag\\&+\lambda\sum_{s=1}^{k}\beta^{k-s}\sum_{j=1}^{n}\iota_{s-1}\mathbb{E}\big\|d^j_{s-1}\big\|_{2}.
\end{align}

By Lemma \ref{Lem5}(a),  $\mathbb{E}\big\|d^{i}_{k}\big\|_{2}\leqslant  L+\frac{\sqrt{m}be}{2c_{k}}$, and hence,
\begin{align}\label{B-7}
\mathbb{E}\big\|\xi^{i}_{k+1}-\bar{\xi}_{k+1}\big\|_{2} &\leqslant n\lambda\beta^{k}\mathbb{E}\left[\max_{1\leqslant j\leqslant
	n}\big\|\xi^{j}_{0}\big\|_{2}\right]+2\iota_{k}\Big(L+\frac{\sqrt{m}be}{2c_{k}}\Big)\notag\\&+\lambda n\sum_{s=1}^{k}\iota_{s-1}\beta^{k-s}\Big(L+\frac{\sqrt{m}be}{2c_{s-1}}\Big).
\end{align}

Hypothesis \ref{Hyp4}(a) implies $\lim_{k\rightarrow\infty}\iota_{k}=0$.  By Hypothesis \ref{Hyp4}(c), $\lim_{k\rightarrow\infty}\frac{\iota_{k}}{c_{k}}=0$. By Lemma 3.1 in \cite{ram2010distributed} and $0<\beta<1$, we obtain
$\lim_{k\rightarrow\infty}\sum_{s=1}^{k}\iota_{s-1}\beta^{k-s}=0$. By Lemma \ref{Lem7}, we have $\lim_{k\rightarrow\infty}\sum_{s=1}^{k}\frac{\iota_{s-1}}{c_{s-1}}\beta^{k-s}=0$.
Therefore,
\begin{align}\label{B-8}
\lim_{k\rightarrow\infty}\mathbb{E}\big\|\xi^{i}_{k+1}-\bar{\xi}_{k+1}\big\|_{2}=0,\;\forall
i\in \mathcal{N}.
\end{align}
Thus, the proof is completed.

\section{Proof of Lemma \ref{Lem8}}
\begin{itemize}
\item[(a)] According to Lemma \ref{Lem1}:
\begin{align}\label{C-1}
f^{i}\big(x^i_{k}+c_{k}\bigtriangleup_{k}^{i}\big)-f^{i}\big(x^i_{k}-c_{k}\bigtriangleup_{k}^{i}\big)\in \big\langle \partial f^{i}\big(x^i_{k}+\theta^{i} c_{k}\bigtriangleup_{k}^{i}\big), \; 2c_{k}\bigtriangleup_{k}^{i}\big\rangle,
\end{align}
where $\theta^{i}\in[-1,1]$ is a constant. Therefore, there exists $\varsigma^{i} \in  \partial f^{i}\big(x^i_{k}+\theta^{i} c_{k}\bigtriangleup_{k}^{i}\big)$ such that
\begin{align}\label{C-2}
f^{i}\big(x^i_{k}+c_{k}\bigtriangleup_{k}^{i}\big)-f^{i}\big(x^i_{k}-c_{k}\bigtriangleup_{k}^{i}\big)=\big\langle \varsigma^{i}, \; 2c_{k}\bigtriangleup_{k}^{i}\big\rangle.
\end{align}
By taking conditional expectation of $\big\langle d^{i}_{k},x^{i}_{k}-\xi^*\big\rangle$ with respect to the $\sigma$-algebra $F_{k}$ and noticing (\ref{C-1}), we obtain the following inequality:
\begin{align}\label{C-3}
\mathbb{E}\big[\big\langle d^{i}_{k}, x^{i}_{k}-\xi^*\big\rangle \big | F_{k}\big]= G^{i}_{k}(1)+\frac{1}{2c_{k}}G^{i}_{k}(2),
\end{align}
where
\begin{align*}
\begin{cases}
G^{i}_{k}(1)&=\mathbb{E}\Big[ (\varsigma^{i})^{\top}\bigtriangleup_{k}^{i}\big[\bigtriangleup^{i}_{k}\big]^{-\top}(x^{i}_{k}-\xi^{*})\big | F_{k}\Big]\\
G^{i}_{k}(2)&=\mathbb{E}\Big[ \big\langle\epsilon_{i}^{k}\big[\bigtriangleup^{i}_{k}\big]^{-},\;x^{i}_{k}-\xi^{*}\big\rangle \big | F_{k}\Big].
\end{cases}
\end{align*}
For $ G^{i}_{k}(1)$, it can be further formulated as follows:
\begin{align}\label{C-4}
G^{i}_{k}(1)&=\mathbb{E}\Big[ (\varsigma^{i})^{\top}\Big(\bigtriangleup_{k}^{i}\big[\bigtriangleup^{i}_{k}\big]^{-\top}-I\Big)(x^{i}_{k}-\xi^{*})\big | F_{k}\Big]+\mathbb{E}\Big[\big\langle \varsigma^{i}, \;x^{i}_{k}-\xi^*\big\rangle\big | F_{k}\Big].
\end{align}

By Definition \ref{Def1} and Hypothesis \ref{Hyp1}(a)(b), we obtain
\begin{align}\label{C-5}
&\mathbb{E}[\big\langle \varsigma^{i},\;x^{i}_{k}-\xi^*\big\rangle|F_k]
\notag\\=&\mathbb{E}[\big\langle \varsigma^{i},\;x^{i}_{k}+\theta^{i}c_{k}\bigtriangleup_{k}^{i}-\theta^{i}c_{k}\bigtriangleup_{k}^{i}-\xi^*\big\rangle|F_k]
\notag\\
\geqslant & \mathbb{E}[f^{i}\big(x^i_{k}+\theta^{i}c_{k}\bigtriangleup_{k}^{i}\big)-f^{i}\big(\xi^*\big)|F_k]-\big |c_{k} \big | L\mathbb{E}\big\|\theta^{i}\bigtriangleup_{k}^{i}\big\|_{2}
 \notag\\\geqslant &
f^{i}\big(\bar{\xi}_{k}\big)-f^{i}\big(\xi^*\big)+\mathbb{E}[f^{i}\big(x^i_{k}+\theta^{i}c_{k}\bigtriangleup_{k}^{i}\big)-f^{i}\big(\bar{\xi_{k}}\big)|F_k]-\big |c_{k}\big | L\mathbb{E}\big\|\theta^{i} \bigtriangleup_{k}^{i}\big\|_{2}\notag\\\geqslant &
f^{i}\big(\bar{\xi}_{k}\big)-f^{i}\big(\xi^*\big)-L\big\|x^{i}_{k}-\bar{\xi}_{k}\big\|_{2}-2\big|c_{k}\big |L\mathbb{E}\big\|\theta^{i}\bigtriangleup_{k}^{i}\big\|_{2}
\notag\\\geqslant&
f^{i}\big(\bar{\xi}_{k}\big)-f^{i}\big(\xi^*\big)-L\big\|x^{i}_{k}-\bar{\xi}_{k}\big\|_{2}-2c_{k}L\mathbb{E}\big\|\bigtriangleup_{k}^{i}\big\|_{2}
\end{align}
and
\begin{align}
\nonumber&\left|\mathbb{E}\Big[ (\varsigma^{i})^{\top}\Big(\bigtriangleup_{k}^{i}\big[\bigtriangleup^{i}_{k}\big]^{-\top}-I\Big)(x^{i}_{k}-\xi^{*})\big | F_{k}\Big]\right|\\
\nonumber=&\left|\mathbb{E}\Big[ (\varsigma^{i}-\!\!\partial f^{i}\big(x^i_{k}\big))^{\top}\Big(\bigtriangleup_{k}^{i}\big[\bigtriangleup^{i}_{k}\big]^{-\top}\!\!-\!\!I\Big)(x^{i}_{k}\!\!-\!\!\xi^{*})\big | F_{k}\Big]\right|\\
\leqslant&2L.\label{C-6}
\end{align}
where $L$ is a positive constant and the last inequality takes place by the boundedness of $\partial f(\cdot)$ according to  Hypothesis \ref{Hyp1}(a)(b).

For $G^{i}_{k}(2)$, by $ F_{k}=\sigma\big\{x^{i}_{k},x^{i}_{k-1},\cdots,x^{i}_{0},i=1,\cdots,n;~\epsilon^{i}_{k-1},\epsilon^{i}_{k-2},\cdots, \epsilon^{i}_{0},i=1,\cdots,n;~\bigtriangleup^{i}_{k-1},\bigtriangleup^{i}_{k-2},\cdots,\bigtriangleup^{i}_{0},i=1,\cdots,n\big\}$, we have
\begin{align}\label{C-7}
G^{i}_{k}(2)= \big\langle\mathbb{E}\big[\epsilon^{i}_{k}\big[\bigtriangleup^{i}_{k}\big]^{-}\big| F_{k}\big],\;x^{i}_{k}-\xi^{*}\big\rangle= \big\langle\mathbb{E}\big[\epsilon^{i}_{k}\big[\bigtriangleup^{i}_{k}\big]^{-}\big],\;x^{i}_{k}-\xi^{*}\big\rangle.
\end{align}
From the above equality and by the mutual independence of $\varepsilon_k^i$ and $\bigtriangleup^{i}_{k}$ in Hypothesis \ref{Hyp3}(c)(d), it follows that
\begin{align}\label{C-8}
G^{i}_{k}(2)=\big\langle\mathbb{E}\big[\epsilon^{i}_{k}\big]\mathbb{E}\big[
 \bigtriangleup^{i}_{k}\big]^{-}],\;x^{i}_{k}-\xi^{*}\big\rangle=0.
\end{align}

Combining \eqref{C-5}, \eqref{C-6} and  \eqref{C-8} with  \eqref{C-3} gives
\begin{align}\label{C-9}
\mathbb{E}\big[\big\langle d^{i}_{k},\;x^{i}_{k}-\xi^*\big\rangle \big|F_{k}\big] &\geqslant f^{i}\big(\bar{\xi}_{k}\big)-f^{i}\big(\xi^*\big)-L\big\|x^{i}_{k}-\bar{\xi}_{k}\big\|_{2}-2c_{k}L\mathbb{E}\big\|\bigtriangleup_{k}^{i}\big\|_{2}-2L.
\end{align}

\item[(b)] The second part of the lemma can be proved by taking the expectation to both sides of (\ref{C-9}).  Thus, the proof completed.
\end{itemize}
\section{Proof of Lemma \ref{Lem9}}
By Lemma \ref{Lem2}(b), it holds that $\big\|\xi^i_{k+1}-\xi^*\big\|_{2}^2\leqslant \big\|\hat{\xi}^{i}_{k+1}-\xi^{*}\big\|_{2}^2$. Then, from (\ref{3-3-1}), it follows that
\begin{align}\label{D-1}
\big\|\xi^i_{k+1}-\xi^*\big\|_{2}^2= \big\|x^{i}_{k}-\xi^* \big\|_{2}^2 +
\iota^2_k
\big\|d^{i}_{k}\big\|_{2}^2-2\iota_{k}\big\langle d^{i}_{k},\;x^{i}_{k}-\xi^*\big\rangle.
\end{align}

By taking the conditional expectation to both sides of \eqref{D-1}, we obtain that, for all $k=0,1,2,\ldots$,
\begin{align}\label{D-2}
\mathbb{E}\Big[\big\|\xi^i_{k+1}\!\!-\!\!\xi^*\big\|_{2}^{2}\big|F_{k}\Big] &\!=\!\mathbb{E}\Big[\big\|x^{i}_{k}\!\!-\!\!\xi^{*}\big \|_{2}^{2} \big|F_{k}\Big]\!\!+\!\!\iota^{2}_{k}\mathbb{E}\Big[\big\|d^{i}_{k}\big\|_{2}^2\big|F_{k}\Big]\!\!-\!\!2\iota_{k}\mathbb{E}\Big[\big\langle d^{i}_{k},\; x^{i}_{k}\!\!-\!\!\xi^*\big\rangle \big|F_{k}\Big].
\end{align}

By the double stochasticity of matrix $W(k)$ in Hypothesis \ref{Hyp2}(b), we have
\begin{align}
\sum_{i=1}^{n} \mathbb{E}\Big[\big\|x^{i}_{k}-\xi^{*}\big\|_{2}^{2} \big|F_{k}\Big]&=   \sum_{i=1}^{n} \mathbb{E}\Big[\big\|\sum_{j=1}^{n}w^{ij}_{k}\xi^{j}_{k}-\xi^{*}\big\|_{2}^{2} \big|F_{k}\Big]
\leqslant \sum_{i=1}^{n} \big\|\xi^i_{k}-\xi^{*}\big\|_{2}^2,\label{D-3}\\
\sum_{i=1}^{n} \mathbb{E}\Big[\big\|x^{i}_{k}-\bar{\xi}_{k}\big\|_{2} \big|F_{k}\Big]&= \sum_{i=1}^{n} \mathbb{E}\Big[\big\|\sum_{j=1}^{N}w^{ij}_{k}\xi^{j}_{k}-\bar{\xi}_{k}\big\|_{2} \big|F_{k}\Big]\leqslant \sum_{i=1}^{n} \big\|\xi^i_{k}-\bar{\xi}_{k}\big\|_{2}\label{D-4}.
\end{align}

Then it follows from Lemma \ref{Lem8} that
\begin{align}\label{D-5}
\sum_{i=1}^{n}\mathbb{E}\Big[\big\|\xi^i_{k+1}-\xi^* \big\|_{2}^2\big|F_{k}\Big]&\leqslant\sum_{i=1}^{n}\Big[\big \|\xi^i_{k}-\xi^*\big\|_{2}^2+B^{i}_{k}(1)+B^{i}_{k}(2)+B^{i}_{k}(3)+B^{i}_{k}(4)-C^{i}_{k}\Big],
\end{align}
where $B^{i}_{k}(1)=\iota^{2}_{k}\mathbb{E}\Big[\big\|d^{i}_{k}\big\|_{2}^2\big|F_{k}\Big]$, $ B^{i}_{k}(2)=2\iota_{k}L\big\|\xi^{i}_{k}-\bar{\xi}_{k}\big\|_{2}$, $B^{i}_{k}(3)=4\iota_{k}c_{k}L\mathbb{E}\big\|\bigtriangleup_{k}^{i}\big\|_{2}$, $B^{i}_{k}(4)=4\iota_{k}L$,  $C^{i}_{k}=2\iota_{k}\big[f^{i}\big(\bar{\xi}_{k}\big)-f^{i}\big(\xi^*\big)\big]$.

According to Hypothesis \ref{Hyp4} and Lemma \ref{Lem5},  $\sum_{k=1}^{\infty}B^{i}_{k}(1)<\infty$ and $\sum_{k=1}^{\infty}B^{i}_{k}(4)<\infty$.
By Lemma \ref{Lem7}, $\sum_{k=1}^{\infty}B^{i}_{k}(2)<\infty$. By Hypothesis \ref{Hyp3}(a) and \ref{Hyp4}(c), we have $\sum_{k=1}^{\infty}B^{i}_{k}(3)<\infty$. Therefore, $\sum_{k=1}^{\infty}\sum_{i=1}^{n}[B^{i}_{k}(1)+B^{i}_{k}(2)+B^{i}_{k}(3)+B^{i}_{k}(4)]<\infty$.
Further, by noticing $\sum_{i=1}^{n}(f^{i}(\bar{\xi}_{k})-f^{i}(\xi^{*}))\geqslant 0$ and Lemma \ref{Lem3}, the sequence $\sum_{i=1}^{n} \big\|\xi^i_{k}-\xi^*\big\|_{2}^{2}$ converges to a non-negative random variable with probability 1.
Thus, the conclusion follows.
\section{Proof of Lemma \ref{Lem10}}
By taking expectation to both sides of \eqref{D-1}, we obtain
\begin{align}\label{E-1}
    \mathbb{E}\big\|\xi^i_{k+1}-\xi^*\big\|_{2}^{2} &\leqslant \mathbb{E}\big\|x^{i}_{k}-\xi^{*} \big\|_{2}^{2}+\iota^{2}_{k}\mathbb{E}\big\|d^{i}_{k}\big\|_{2}^2-2\iota_{k}\mathbb{E}\big[\big\langle d^{i}_{k},\; x^{i}_{k}-\xi^*\big\rangle\big].
\end{align}

By the double stochasticity of matrix $W(k)$ given in Hypothesis \ref{Hyp2}(b), we have the following inequalities
\begin{align}
\sum_{i=1}^{n} \mathbb{E}\big\|x^{i}_{k}-\xi^{*}\big\|_{2}^{2} &=   \sum_{i=1}^{n} \mathbb{E}\Big\|\sum_{j=1}^{n}w^{ij}_{k}\xi^{j}_{k}-\xi^{*}\Big\|_{2}^{2}\leqslant \sum_{i=1}^{n} \mathbb{E}\big\|\xi^i_{k}-\xi^{*}\big\|_{2}^2,\label{E-2}\\\sum_{i=1}^{n} \mathbb{E}\big\|x^{i}_{k}-\bar{\xi}_{k}\big\|_{2}&= \sum_{i=1}^{n} \mathbb{E}\Big\|\sum_{j=1}^{N}w^{ij}_{k}\xi^{j}_{k}-\bar{\xi}_{k}\Big\|_{2}\leqslant \sum_{i=1}^{n}\mathbb{E} \big\|\xi^i_{k}-\bar{\xi}_{k}\big\|_{2}.\label{E-3}
\end{align}
By taking summation of both sides of \eqref{E-1} for $k=1,2,\ldots T$ and $i=1,2,\ldots n$ and noticing \eqref{E-2}, \eqref{E-3} and Lemma \ref{Lem8}, we have

\begin{align}\label{E-4}
\sum_{s=1}^{k}\sum_{i=1}^{n}\mathbb{E} \big\|\xi^i_{s+1}-\xi^{*}\big\|_{2}^{2}&\leqslant \sum_{s=1}^{k}\sum_{i=1}^{n}\mathbb{E} \big\|\xi^i_{s}-\xi^{*}\big\|_{2}^{2}+\sum_{s=1}^{k}\sum_{i=1}^{n}\iota^{2}_{k}\mathbb{E}\big\|d^{i}_{s}\big\|_{2}^2+2L\sum_{s=1}^{k}\sum_{i=1}^{n}\iota_{s}\mathbb{E}\big\|\xi^{i}_{s}-\bar{\xi}_{s}\big\|_{2}\notag\\&+4L\sum_{s=1}^{k}\sum_{i=1}^{n}\iota_{s}c_{s}\mathbb{E}\big\|\bigtriangleup_{s}^{i}\big\|_{2}+4nL\sum_{s=1}^{k}\iota_{s}-2\sum_{s=1}^{k}\sum_{i=1}^{n}\iota_{s}\mathbb{E}\Big[f^{i}\big(\bar{\xi}_{s}\big)-f^{i}\big(\xi^*\big)\Big].
\end{align}
Therefore,
\begin{align}\label{E-5}
\sum_{i=1}^{n}\mathbb{E} \big\|\xi^i_{k+1}-\xi^{*}\big\|_{2}^{2}&\leqslant \sum_{s=1}^{k}\sum_{i=1}^{n}\iota^{2}_{s}\mathbb{E}\big\|d^{i}_{s}\big\|_{2}^2+2L\sum_{s=1}^{k}\sum_{i=1}^{n}\iota_{s}\mathbb{E}\big\|\xi^{i}_{s}-\bar{\xi}_{s}\big\|_{2}+4L\sum_{s=1}^{k}\sum_{i=1}^{n}\iota_{s}c_{s}\mathbb{E}\big\|\bigtriangleup_{s}^{i}\big\|_{2}\notag\\&+4nL\sum_{s=1}^{k}\iota_{s}.
\end{align}
Noticing that $\iota_{s}=\frac{1}{s^{1+\epsilon}}$, $c_{s}=\frac{1}{s^{\delta}}$, and $ \frac{1}{2}+\epsilon>\delta>0 $. By Lemma \ref{Lem5}, for the first term on the right hand side of \eqref{E-5}, we have
\begin{align}\label{E-6}
\sum_{s=1}^{k}\sum_{i=1}^{n}\iota^{2}_{s}\mathbb{E}\big\|d^{i}_{s}\big\|_{2}^2\leqslant n\sum_{s=1}^{k}\iota^{2}_{k}\Big(L+\dfrac{\sqrt{m}be}{2c_{k}}\Big)^{2}\leqslant M_{0}\sum_{k=1}^{s}\frac{\iota^{2}_{s}}{c^{2}_{s}}\leqslant \frac{M_{1}}{k^{1+2\epsilon-2\delta}}.
\end{align}
Since $X_{i}$ is bounded in $\mathcal{R}^{m}$, for $x\in X_{i}$, there exists a constant $M_{x}$ such that $\big\|x\big\|_{2}\leqslant M_{x}$. For the second term on the right hand side of \eqref{E-5}, we have
\begin{align}\label{E-7}
2L\sum_{s=1}^{k}\sum_{i=1}^{n}\iota_{s}\mathbb{E}\big\|\xi^{i}_{s}-\bar{\xi}_{s}\big\|_{2}\leqslant 4nLM_{x}\sum_{s=1}^{k}\iota_{s}\leqslant\dfrac{M_{2}}{k^{\epsilon}}.
\end{align}
According to Hypothesis \ref{Hyp3}, for  the third term on the right hand side of \eqref{E-5}, we have
\begin{align}\label{E-8}
4L\sum_{s=1}^{k}\sum_{i=1}^{n}\iota_{s}c_{s}\mathbb{E}\big\|\bigtriangleup_{s}^{i}\big\|_{2}\leqslant  \frac{M_{3}}{k^{\epsilon+\delta}}.
\end{align}
For  the last term on the right hand side of \eqref{E-5}, we have
\begin{align}\label{E-9}
4nL\sum_{s=1}^{k}\iota_{k}\leqslant\dfrac{M_{4}}{k^{\epsilon}}.
\end{align}
$M_{0}, M_{1}\ldots,M_{4}$ are positive constants in the above inequalities, we have
\begin{align}\label{E-10}
\sum_{i=1}^{n}\mathbb{E} \big\|\xi^i_{k+1}-\xi^{*}\big\|_{2}^{2}\leqslant  \frac{M_{1}}{k^{1+2\epsilon-2\delta}}+\dfrac{M_{2}}{k^{\epsilon}}+ \frac{M_{3}}{k^{\epsilon+\delta}}+\dfrac{M_{4}}{k^{\epsilon}}.
\end{align}
\bibliographystyle{IEEEtran}

\bibliography{IEEEabrv,Distributed_KW_Algorithm}

% Generated by IEEEtran.bst, version: 1.14 (2015/08/26)
\begin{thebibliography}{10}
\providecommand{\url}[1]{#1}
\csname url@samestyle\endcsname
\providecommand{\newblock}{\relax}
\providecommand{\bibinfo}[2]{#2}
\providecommand{\BIBentrySTDinterwordspacing}{\spaceskip=0pt\relax}
\providecommand{\BIBentryALTinterwordstretchfactor}{4}
\providecommand{\BIBentryALTinterwordspacing}{\spaceskip=\fontdimen2\font plus
\BIBentryALTinterwordstretchfactor\fontdimen3\font minus
  \fontdimen4\font\relax}
\providecommand{\BIBforeignlanguage}[2]{{%
\expandafter\ifx\csname l@#1\endcsname\relax
\typeout{** WARNING: IEEEtran.bst: No hyphenation pattern has been}%
\typeout{** loaded for the language `#1'. Using the pattern for}%
\typeout{** the default language instead.}%
\else
\language=\csname l@#1\endcsname
\fi
#2}}
\providecommand{\BIBdecl}{\relax}
\BIBdecl

\bibitem{forero2011distributed}
P.~A. Forero, A.~Cano, and G.~B. Giannakis, ``Distributed clustering using
  wireless sensor networks,'' \emph{IEEE Journal of Selected Topics in Signal
  Processing}, vol.~5, no.~4, pp. 707--724, 2011.

\bibitem{lesser2003distributed}
V.~Lesser, M.~Tambe, and C.~L. Ortiz, \emph{Distributed Sensor Networks: A
  Multiagent Perspective}.\hskip 1em plus 0.5em minus 0.4em\relax Kluwer
  Academic Publishers, 2003.

\bibitem{wang2018distributed}
Y.~Wang, P.~Lin, and Y.~Hong, ``Distributed regression estimation with
  incomplete data in multi-agent networks,'' \emph{Science China Information
  Sciences}, vol.~61, no.~9, p. 092202, 2018.

\bibitem{xiao2007distributed}
L.~Xiao, S.~Boyd, and S.-J. Kim, ``Distributed average consensus with
  least-mean-square deviation,'' \emph{Journal of Parallel and Distributed
  Computing}, vol.~67, no.~1, pp. 33--46, 2007.

\bibitem{yi2016initialization}
P.~Yi, Y.~Hong, and F.~Liu, ``Initialization-free distributed algorithms for
  optimal resource allocation with feasibility constraints and application to
  economic dispatch of power systems,'' \emph{Automatica}, vol.~74, pp.
  259--269, 2016.

\bibitem{nedic2009distributed}
A.~Nedic and A.~Ozdaglar, ``Distributed subgradient methods for multi-agent
  optimization,'' \emph{IEEE Transactions on Automatic Control}, vol.~54,
  no.~1, pp. 48--61, 2009.

\bibitem{ram2010distributed}
S.~S. Ram, A.~Nedi{\'c}, and V.~V. Veeravalli, ``Distributed stochastic
  subgradient projection algorithms for convex optimization,'' \emph{Journal of
  optimization theory and applications}, vol. 147, no.~3, pp. 516--545, 2010.

\bibitem{yuan2016convergence}
D.~Yuan, D.~W. Ho, and Y.~Hong, ``On convergence rate of distributed stochastic
  gradient algorithm for convex optimization with inequality constraints,''
  \emph{SIAM Journal on Control and Optimization}, vol.~54, no.~5, pp.
  2872--2892, 2016.

\bibitem{duchi2012dual}
J.~C. Duchi, A.~Agarwal, and M.~J. Wainwright, ``Dual averaging for distributed
  optimization: Convergence analysis and network scaling,'' \emph{IEEE
  Transactions on Automatic control}, vol.~57, no.~3, pp. 592--606, 2012.

\bibitem{lei2016primal}
J.~Lei, H.-F. Chen, and H.-T. Fang, ``Primal--dual algorithm for distributed
  constrained optimization,'' \emph{Systems \& Control Letters}, vol.~96, pp.
  110--117, 2016.

\bibitem{makhdoumi2017convergence}
A.~Makhdoumi and A.~Ozdaglar, ``Convergence rate of distributed admm over
  networks,'' \emph{IEEE Transactions on Automatic Control}, 2017.

\bibitem{shi2014linear}
W.~Shi, Q.~Ling, K.~Yuan, G.~Wu, and W.~Yin, ``On the linear convergence of the
  admm in decentralized consensus optimization.'' \emph{IEEE Trans. Signal
  Processing}, vol.~62, no.~7, pp. 1750--1761, 2014.

\bibitem{jakovetic2014fast}
D.~Jakoveti{\'c}, J.~Xavier, and J.~M. Moura, ``Fast distributed gradient
  methods,'' \emph{IEEE Transactions on Automatic Control}, vol.~59, no.~5, pp.
  1131--1146, 2014.

\bibitem{conn2009introduction}
A.~R. Conn, K.~Scheinberg, and L.~N. Vicente, \emph{Introduction to
  derivative-free optimization}.\hskip 1em plus 0.5em minus 0.4em\relax Siam,
  2009, vol.~8.

\bibitem{nesterov2011random}
Y.~Nesterov and V.~Spokoiny, ``Random gradient-free minimization of convex
  functions,'' Universit{\'e} catholique de Louvain, Center for Operations
  Research and Econometrics (CORE), Tech. Rep., 2011.

\bibitem{deng2014deep}
L.~Deng and D.~Yu, ``Deep learning: Methods and applications,''
  \emph{Foundations and Trends in Signal Processing}, vol.~7, no.~3, pp.
  197--387, 2014.

\bibitem{chen1999kiefer}
H.-F. Chen, T.~E. Duncan, and B.~Pasik-Duncan, ``A kiefer-wolfowitz algorithm
  with randomized differences,'' \emph{IEEE Transactions on Automatic Control},
  vol.~44, no.~3, pp. 442--453, 1999.

\bibitem{duchi2013optimal}
J.~C. Duchi, M.~I. Jordan, M.~J. Wainwright, and A.~Wibisono, ``Optimal rates
  for zero-order convex optimization: The power of two function evaluations,''
  \emph{IEEE Transactions on Information Theory}, vol.~61, no.~5, pp.
  2788--2806, 2013.

\bibitem{gao2014on}
X.~Gao, B.~Jiang, and S.~Zhang, ``On the information-adaptive variants of the
  admm: An iteration complexity perspective,'' \emph{Journal of Scientific
  Computing}, no.~4, pp. 1--37, 2014.

\bibitem{kiefer1952stochastic}
J.~Kiefer and J.~Wolfowitz, ``Stochastic estimation of the maximum of a
  regression function,'' \emph{The Annals of Mathematical Statistics}, pp.
  462--466, 1952.

\bibitem{koronacki1975random}
J.~KORONACKI, ``Random-seeking methods for the stochastic unconstrained
  optimization,'' \emph{International Journal of Control}, vol.~21, no.~3, pp.
  517--527, 1975.

\bibitem{nelder1965simplex}
J.~A. Nelder and R.~Mead, ``A simplex method for function minimization,''
  \emph{The computer journal}, vol.~7, no.~4, pp. 308--313, 1965.

\bibitem{anit2018distributed}
K.~S. Anit, J.~Dusan, B.~Dragana, and K.~Soummya, ``Distributed zeroth order
  optimization over random networks: A kiefer-wolfowitz stochastic
  approximation approache,'' \emph{arXiv:1803.07836}, 2018.

\bibitem{hajinezhad2017zeroth}
D.~Hajinezhad, M.~Hong, and A.~Garcia, ``Zeroth order nonconvex multi-agent
  optimization over networks,'' \emph{arXiv:1710.09997}, 2017.

\bibitem{yuan2015randomized}
D.~Yuan and D.~W. Ho, ``Randomized gradient-free method for multiagent
  optimization over time-varying networks,'' \emph{IEEE transactions on neural
  networks and learning systems}, vol.~26, no.~6, pp. 1342--1347, 2015.

\bibitem{yuan2015zeroth}
D.~Yuan, D.~W. Ho, and S.~Xu, ``Zeroth-order method for distributed
  optimization with approximate projections.'' \emph{IEEE Transactions on
  Neural Networks and Learning Systems}, vol.~27, no.~2, pp. 284--294, 2015.

\bibitem{pang2017distributed}
Y.~Pang and G.~Hu, ``A distributed optimization method with unknown cost
  function in a multi-agent system via randomized gradient-free method,'' in
  \emph{proceedings of Asian Control Conference}, 2017, pp. 144--149.

\bibitem{clarke1998nonsmooth}
F.~H. Clarke, R.~J. Stern, Y.~S. Ledyaev, and R.~R. Wolenski, ``Nonsmooth
  analysis and control theory,'' \emph{Graduate Texts in Mathematics}, vol.
  178, no.~7, pp. 137--151, 1998.

\bibitem{hiriart2012fundamentals}
J.-B. Hiriart-Urruty and C.~Lemar{\'e}chal, \emph{Fundamentals of convex
  analysis}.\hskip 1em plus 0.5em minus 0.4em\relax Springer Science \&
  Business Media, 2012.

\bibitem{nedic2010constrained}
A.~Nedic, A.~Ozdaglar, and P.~A. Parrilo, ``Constrained consensus and
  optimization in multi-agent networks,'' \emph{IEEE Transactions on Automatic
  Control}, vol.~55, no.~4, pp. 922--938, 2010.

\bibitem{durrett2010probability}
R.~Durrett, \emph{Probability: theory and examples}.\hskip 1em plus 0.5em minus
  0.4em\relax Cambridge university press, 2010.

\bibitem{polyak1987introduction}
B.~T. B. T.~P. Polyak, \emph{Introduction to optimization}.\hskip 1em plus
  0.5em minus 0.4em\relax Chapman and Hall, 1987.

\bibitem{nesterov2005lexicographic}
Y.~Nesterov, ``Lexicographic differentiation of nonsmooth functions,''
  \emph{Mathematical programming}, vol. 104, no. 2-3, pp. 669--700, 2005.

\bibitem{chen2006stochastic}
H.-F. Chen, \emph{Stochastic approximation and its applications}.\hskip 1em
  plus 0.5em minus 0.4em\relax Springer Science \& Business Media, 2006,
  vol.~64.

\end{thebibliography}

\end{document}